\documentclass[12pt]{article}

\usepackage[onehalf]{paper-style-original}
\usepackage{siunitx}

\hypersetup{
    pdftitle={The Gatekeeping Expert's Dilemma},
    pdfkeywords={Gatekeeper, Dilemma, Disclosure, Audit, Financial Reporting, Strategic Communication, Bargaining, Veto Bargaining, Transparency, Expertise, Information Design, Voluntary Disclosure},
    pdfsubject={The Gatekeeping Expert's Dilemma by Shunsuke Matsuno (Columbia University)},
    pdfauthor={Shunsuke Matsuno}
}

\newcommand{\Xl}{\underline{X}}
\newcommand{\Xh}{\overline{X}}
\newcommand{\var}{\mathrm{Var}}
\newcommand{\Mprec}{\mathcal{M}^{\mathrm{p}}}     
\newcommand{\Mvag}{\mathcal{M}^{\mathrm{v}}}       
\newcommand{\nd}{\mathcal{X}_0}     
\newcommand{\info}{\mathrm{Info}}     
\newcommand{\self}{D^\mathrm{self}}   

\NewDocumentCommand{\ab}{m G{a} G{b}}{%
  \LeftIndex[]_{#2}{#1}_{#3}%
}

\declaretheorem[name=Lemma]{lemmap}
\zcRefTypeSetup{lemmap}{
  name-sg=lemma,
  Name-sg=Lemma
}

\graphicspath{{figures/}}

\title{The Gatekeeping Expert's Dilemma}

\author{Shunsuke Matsuno%
\thanks{Columbia Business School (email: \href{mailto:SMatsuno26@gsb.columbia.edu}{\tt{SMatsuno26@gsb.columbia.edu}}).
I thank my dissertation committee members, Tim Baldenius, Jon Glover, and Ilan Guttman, for their invaluable guidance and support.
For helpful comments and discussions, I thank Cyrus Aghamolla, Daniel Aobdia, Li Azinovic-Yang, Jeremy Bertomeu, Matthias Breuer, Jeffrey Jou, Anthony Le, Rongchen Li, Lisa Liu, Syrena Shirley, Jack Stecher, Sang Wu, and Amir Ziv.
}
\thanksnosymbol{Supplementary material that contains additional details and proofs is available on the author's website (\url{https://shunsukematsuno.github.io}) [\href{https://shunsukematsuno.github.io/papers/supplementary-material_Gatekeeping-Experts-Dilemma.pdf}{supplementary material}].}
}

\date{\today}

\begin{document}

\pagenumbering{gobble}   

\begin{titlingpage}
    \maketitle

    \vspace{-2em}    
    
    \begin{center}
    \end{center}

    \vspace{1em} 

    \begin{abstract}
        This paper studies how experts with veto power---gatekeeping experts---influence agents through communication.
        Their expertise informs agents' decisions, while veto power provides discipline. 
        Gatekeepers face a dilemma: transparent communication can invite gaming, while opacity wastes expertise.
        How can gatekeeping experts guide behavior without being gamed?
        Many economic settings feature this tradeoff, including bank stress tests, environmental regulations, and financial auditing.
        Using financial auditing as the primary setting, I show that strategic vagueness resolves this dilemma: by revealing just enough to prevent the manager from inflating the report, the auditor guides the manager while minimizing opportunities for manipulation.
        This theoretical lens provides a novel rationale for why auditors predominantly accept clients' financial reports.
        Comparative statics reveal that greater gatekeeper independence or expertise sometimes dampens communication.
        This paper offers insights into why gatekeepers who lack direct control can still be effective.
    \end{abstract}

    \vspace{1em}
    \begin{description}
        \item[Keywords:] Gatekeeper, Strategic Communication, Auditing, Bargaining, Transparency
        \item[JEL Classification:] C72, D82, D83, G30, M42
    \end{description}

\end{titlingpage}

\newpage

\pagenumbering{arabic}   
\setcounter{page}{1}

\section{Introduction}
Gatekeepers play a central role in many economic settings.
They are experts who have the option to withhold support or access to critical resources \citep{kraakmanGatekeepersAnatomyThirdParty1986}.
Examples abound: a credit rating agency assesses whether a bond's issuer is risky; external auditors sign off on firms' financial statements; environmental regulators grant permits or certify compliance; an academic journal editor decides whether a paper is to be published. 
Gatekeepers cannot dictate what the agent ought to do;
they can only communicate what they believe the agent ought to do.
However, communication is a fragile tool.
If a rating agency reveals exactly how it assesses default risk, firms learn to game the standard instead of reducing risk.
A completely opaque gatekeeper is no better.
If an environmental regulator never explains what constitutes compliance, firms cannot improve their practices even when willing.
How then can gatekeepers, constrained by a lack of direct control and the precarious nature of communication, effectively influence agents?

This paper studies a model of a \textit{gatekeeping expert}---an expert who influences an agent through veto power and communication.
I study the extent to which the gatekeeping expert can influence the agent and how she does so.
To fix ideas, I use financial auditing as the primary setting.
Auditors are a quintessential example of gatekeeping experts \citep{coffeeGatekeeperFailureReform2004}.
They use their expertise to monitor public firms' accounting practices on behalf of shareholders.
They certify whether the reporting is free from material misstatements \citep{ronenCorporateAuditsHow2010}.

In my model, an auditor decides whether to accept a firm manager's financial report.
\footnote{I use male pronouns (``he'') for the manager and female pronouns (``she'') for the auditor.}
The auditor wishes to ensure that the manager's report complies with accounting standards.
These standards are often ambiguous and require interpretation, especially for complex business transactions \citep{secFinalReportAdvisory2008}.
The auditor has expertise in interpreting and applying these complex standards \citep{bonnerExperienceEffectsAuditing1990}.
The auditor's \textit{preferred report} is the treatment the auditor deems appropriate, given her expertise and all available information.

The firm's manager has misaligned incentives: he seeks solely to maximize the reported value.
For any particular transaction, once the audit has been conducted, the manager possesses no private information beyond what the auditor observes.
I make these stark assumptions to delineate the gatekeeping expert's communication problem.
The framework can serve as a benchmark for further studies that incorporate richer institutional features.

The auditor accepts only reports that are close enough to the preferred report and rejects others.
The auditor's \textit{independence}, which captures the cost she incurs from rejecting a report, determines the margin of error she tolerates. 
A more independent auditor tolerates smaller deviations from her preferred report.

I say that the manager \textit{games} the auditor when the manager successfully chooses the highest report the auditor tolerates.
Gaming is an information problem rooted in the auditor's lack of direct control.
The manager can game the auditor when he knows how much over-reporting the auditor will tolerate, because the auditor cannot dictate the manager's report.

The auditor communicates her expertise before the manager proposes a report.
I consider two modes of communication: \textit{precise} and \textit{vague}.
With precise communication, the auditor tells the manager the exact value of the preferred report.
With vague communication, she communicates a range of possible values.
The auditor's message must be truthful: it cannot contradict her knowledge about the preferred report, which is determined by relevant accounting standards.
The auditor cannot commit ex ante to how she will communicate or which reports she will accept.

Auditor communication is a critical part of the auditing process.
During the early years of the Sarbanes-Oxley Act, critics argued that the Act was diminishing financial reporting quality, as auditors no longer provided sufficient guidance on complex accounting issues.
The regulatory body responded by stressing that auditors are allowed to provide guidance \citep{goelzerCostsBenefitsSarbanesOxley2005}.
The episode highlights the importance of how expert auditors communicate with managers.
\footnote{Because auditor--client communications are confidential, direct evidence on how auditors communicate with managers is scarce.
In this regard, \cite{beattieClosedDoorsWhat2001} provide interesting case studies based on matched interviews with auditors and managers.}

In my model, the auditor's communication shapes how the bargaining unfolds.
Without communication, the manager's report is sometimes unacceptable to the auditor.
The auditor thus would like to communicate her expertise, especially when she anticipates that the manager's report will deviate significantly from the preferred report.

But here is the dilemma: if the auditor communicates her expertise precisely, she will be gamed by the manager.
This is because revealing the exact value of the preferred report also reveals the exact upper bound of what she is willing to accept.
The manager then exploits this information.
He proposes the highest possible value that the auditor is indifferent between accepting and rejecting.
This is what I call the \textit{gatekeeping expert's dilemma}: the auditor would like to communicate her expertise, but if she does, the manager can game the auditor and render her control minimally effective.

The auditor does not always prefer complete silence, however, as the strategic value of silence depends on when she speaks.
Consider a communication strategy where the auditor reveals the preferred report only for certain values and remains silent for others.
Her silence then becomes informative.
It signals to the manager that the preferred report falls outside the range where she would have spoken. 
From the auditor's silence, the manager learns something about the preferred report, but he still does not know the exact value.
This strategic silence allows the auditor to prevent gaming while still steering the manager's report closer to her preferred outcome. 
However, a crucial limitation exists: whenever the auditor reveals the preferred report, she is gamed and her payoff is minimized.
The auditor's ability to leverage silence is thus constrained by the cost she incurs when she speaks.
I show that the auditor can sometimes, but not always, strictly improve her payoff by carefully choosing what not to say.

This motivates the question: can the auditor utilize vague language to communicate her expertise while minimizing the risk of gaming?
The main result of this paper shows that she can, and it characterizes how she does so.
The \textit{maximal acceptance} property defines an auditor-optimal equilibrium: 
across all equilibria, an auditor-optimal equilibrium maximizes the probability that the manager's report is acceptable.
For each message, the auditor steers the manager's beliefs in the right direction while successfully eliciting reports close to her preferred one.
Vague talk lets the auditor guide the manager without tipping her hand.

In practice, auditors accept the vast majority of reports they review.
For example, in the U.S., over $99\%$ of public firms' financial statements receive an unqualified opinion---an auditor's approval that the financial statements are accurate \citep{ciprianoHasLackUse2017}. 
My model suggests that this high acceptance rate may, in part, reflect auditors' use of strategic vagueness to guide managers' reporting behavior.

Comparative statics highlight subtle interactions between the auditor's communication strategy and the manager's reporting incentives.
I study the \textit{amount of information} the auditor provides in equilibrium, defined as the reduction in the uncertainty about the preferred report.

First, I examine the effect of auditor independence.
Without communication, an auditor with high independence would end up rejecting often, since she tolerates only a narrow margin of error.
Therefore, one might conjecture that a high-independence auditor on average provides more information to ensure acceptable reports.
Somewhat surprisingly, this is not necessarily true: when the manager's incentive to inflate the report is strong, a more independent auditor may provide \textit{less} information (vaguer communication).

Intuitively, vaguer messages give the manager more room to ``gamble'' on a higher report that may be rejected.
For a fixed level of independence, this incentive is strongest when the auditor's preferred report is low, as the manager has less to lose by gambling.
As a result, the auditor chooses to be vaguer over regions where her preferred reports are high than where they are low.
Now, consider what happens when the auditor becomes more independent.
She must be more precise to ensure that the reports meet her tighter acceptance criteria; this effect is particularly pronounced for low realizations of preferred reports, where the manager's incentive to gamble is strong.
This, in turn, lets the auditor be less precise when her preferred report is high.
If the increase in vagueness at high realizations outweighs the increase in precision at low realizations, a more independent auditor will convey less information overall.

Second, I study how information asymmetry---measured by the prior variance of auditor-preferred reports---affects communication.
I interpret this variance as transaction complexity, or equivalently, the importance of auditor expertise.
Some transactions are inherently more complex than others and require greater expertise for the auditor to determine her preferred report (e.g., the sale of a division with future contingencies as opposed to a simple spot sale of inventory).
Greater transaction complexity would translate into greater ex-ante uncertainty about the auditor's preferred report.
I ask whether the auditor reveals more information relative to the increase in transaction complexity.
The analysis again reveals a non-monotonic relationship.
When the manager is already inclined to inflate the report, added complexity amplifies that tendency.
The auditor must provide more information to rein in the manager (``strategic effect'').
At the same time, greater complexity increases the probability that the auditor's preferred report is high, in which case the auditor is relatively vaguer than when the preferred report is low (``statistical effect'').
The effect of increased complexity on communication depends on which force dominates.

The findings have policy implications for the regulatory interventions of gatekeepers.
A popular belief is that gatekeepers should remain independent.
In financial auditing, auditor independence is a particularly important issue \citep{mautzPhilosophyAuditing1961}.
Since the Sarbanes-Oxley Act, regulators have intensified efforts to strengthen auditor independence.
Yet my model indicates that greater auditor independence may lead to less information production.
Similarly, expanding the scope for auditor judgment (i.e., increasing the prior variance over auditor-preferred reports) does not necessarily lead to more auditor guidance.
\footnote{For example, moving from bright-line rules to principles-based standards would expand the scope for auditor judgment.}
My analysis underscores that the equilibrium consequences of such policy interventions may be counter-intuitive and potentially at odds with the intended regulatory objectives.

The model's logic extends beyond auditing to other gatekeeping contexts.
\footnote{Even within auditing, the model applies to non-financial audits---most notably sustainability audits, such as when firms report greenhouse gas emissions. 
The absence of established standards makes auditors' judgment even more critical than in financial audits \citep{enriquesGreenGatekeepers2024}.}
Consider, for instance, bank stress tests.
The Federal Reserve conducts periodic tests to assess banks' financial health.
Banks make operational decisions, and the regulator uses its expertise to determine whether a bank passes the test.
The regulator faces the gatekeeping expert's dilemma: if it reveals too much about how it evaluates banks, banks may game the test rather than improve their financial health.
This issue of ``model secrecy'' has drawn attention from both academics and practitioners \citep{leitnerModelSecrecyStress2023,barrSpeechGovernorBarr2025}.
My framework provides a lens to understand the Fed's tradeoff and illustrates how the regulator can shape bank behavior by strategically choosing secrecy.
In \zcref{appsec:other-applications}, I elaborate on the stress test example and discuss other applications, including environmental regulation, capital budgeting, and patent examination.

\subsection{Related Literature}
The legal literature on gatekeepers focuses primarily on gatekeeper liability \citep{kraakmanGatekeepersAnatomyThirdParty1986,coffeeGatekeeperFailureReform2004,tuchLimitsGatekeeperLiability2017}.
As \cite{kraakmanGatekeepersAnatomyThirdParty1986} insightfully points out, such liability becomes necessary only when alternative enforcement mechanisms are unavailable and when gatekeepers can effectively utilize their expertise and prevent agents from engaging in undesirable behavior.
However, the effectiveness of gatekeepers is often unclear, precisely because they do not control agents directly.
My study contributes to this literature by showing that communication, despite the risk of inviting gaming, is essential for gatekeepers to be effective.

By modeling auditors as gatekeeping experts and focusing on their  problem, my paper contributes to the theoretical literature on auditing.
A major strand of the auditing literature studies the role of auditors in principal-agent settings, where auditors verify the agent's report \citep[``Costly State Verification'' models; see, e.g.,][]{townsendOptimalContractsCompetitive1979,borderSamuraiAccountantTheory1987,mittendorfRoleAuditThresholds2010}.
\footnote{Another strand of the literature analyzes how auditors conduct audits strategically in otherwise decision-theoretic frameworks that are complemented by strategic managers \citep[``Strategic Auditing'' models; see, e.g.,][]{fellinghamStrategicConsiderationsAuditing1985,shibanoAssessingAuditRisk1990}.
Other work in auditing has analyzed 
auditor switching \citep{dyeInformationallyMotivatedAuditor1991,liInformationbasedTheoryAuditor2023}, 
auditor independence \citep{deangeloAuditorIndependenceLow1981,antleAuditorIndependence1984,aryaAuditorIndependenceRevisited2014}, 
auditors' legal liability \citep{dyeAuditingStandardsLegal1993,lauxAuditorLiabilityClient2010},
and auditor--client matching \citep{liTwoSidedMatchingAudit2025,achimTheoryAuditorSelection2025}.
\cite{yeTheoryAuditingEconomics2023} surveys the theoretical literature on auditing.}
In contrast, my paper focuses on the auditor's expertise and her communication problem, moving beyond the traditional view of the auditor as an information verifier.

Despite the importance of auditor--client negotiation \citep{gibbinsEvidenceAuditorClient2001,nelsonEvidenceAuditorsManagers2002,brownNegotiationResearchAuditing2008,lennoxSurveyArchivalAudit2025}, its theoretical treatments remain sparse.
The work of \cite{antleConservatismAuditorClientNegotiations1991} is a notable exception. 
They model the audit as a negotiation where the client has superior information about the firm's performance. 
I complement their research by considering the case in which the auditor has superior information and focusing on the information-sharing aspect of the negotiation.

I focus on communication strategies that are truthful, in line with \cite{milgromGoodNewsBad1981}.
\footnote{\cite{crawfordStrategicInformationTransmission1982} analyze non-truthful communication (``cheap talk'') and show that a biased sender coarsens information to achieve credible communication. This result resembles the gatekeeper's optimal vague communication in my model. However, the underlying economic forces differ fundamentally, beyond the distinction between the communication technologies.
In their model, the sender wants to convey as much information as possible but lacks credibility; in my model, the gatekeeping expert does not want to reveal too much information due to the risk of being gamed.}
The distinction between precise and vague communication follows the work of \cite{hagenbachSimpleRichLanguage2017}.
In my model, the auditor's vague communication strategy is a partitional communication strategy.
\cite{glodeVoluntaryDisclosureBilateral2018} and \cite{aliVoluntaryDisclosurePersonalized2023} also study truthful partitional disclosure in a monopolistic screening setting. 
These papers derive buyer-optimal communication strategies. 
My setting features different economic forces: the sender (gatekeeper) would like to induce a report that is close to the preferred report, while the buyer in \cite{glodeVoluntaryDisclosureBilateral2018} and \cite{aliVoluntaryDisclosurePersonalized2023} would like to induce the lowest possible price.
Consequently, precise communication and silence can strictly improve the gatekeeper's payoff, while the buyer's payoff under such a strategy never improves in their models.
\footnote{Another important difference is that, in my model, even if we relax the assumption of truthful communication, the gatekeeper can still credibly convey some information. By contrast, in \cite{glodeVoluntaryDisclosureBilateral2018} and \cite{aliVoluntaryDisclosurePersonalized2023}, the buyer cannot credibly convey any information if communication is mere cheap talk.}

Communication and veto power are also featured in delegation models \citep{melumadCommunicationSettingsNo1991,alonsoOptimalDelegation2008}.
In these models, an uninformed principal sets a decision rule---a stronger form of \textit{ex-ante} veto power---to influence an informed agent.
In contrast, in my model, an informed gatekeeper uses \textit{ex-post} veto power and communication to influence an uninformed agent.

My paper is also related to the literature on veto bargaining \citep{romerPoliticalResourceAllocation1978}.
Recent papers study veto bargaining under incomplete information \citep{aliSequentialVetoBargaining2023,kimPersuasionVetoBargaining2024}.
I contribute to this literature by showing how gatekeepers (vetoers) can influence the bargaining outcome through strategic communication.

\section{An Example}\label{sec:example}

I illustrate the main idea via a simple example, deferring formal development to later sections.

A manager of a firm and an auditor are in the final, critical negotiation phase of the audit.
They need to resolve how to record the fair value of an illiquid asset.
As an expert, the auditor privately learns an appropriate accounting treatment among those permitted by accounting standards.
She does so by reviewing relevant information, utilizing available resources such as valuation specialists, and relying on her expert judgment \citep{ahnAuditorTaskspecificExpertise2020}.  
Denote by $X$ the auditor's preferred fair value of the asset.
The manager lacks expertise and is uncertain about the fair value the auditor deems appropriate.
The manager's prior belief about $X$ is uniformly distributed over the interval $[1,9]$.
The manager aims to maximize the reported fair value in the financial statements, denoted by $r$.
As a gatekeeping expert, the auditor seeks to ensure that the manager's report closely aligns with her preferred fair value $X$.
If the auditor accepts the manager's report $r$, she incurs a quadratic loss $(r - X)^2$. 
If she rejects the report, she incurs a fixed loss of $1$, while the manager receives a normalized payoff of zero.
Consequently, the auditor accepts any report $r$ within a margin of one from her preferred value $X$.

The focus is the auditor's communication problem: how can the auditor with veto power effectively guide the manager's report through communication?
I structure the analysis around two types of communication strategies: \textit{precise} and \textit{vague}.
If the auditor uses \textit{precise} communication, she either remains silent or simply tells the manager what the fair value is.
In contrast, a \textit{vague} message conveys only that the fair value lies within a specified range of values.
Importantly, the auditor's message must be \textit{truthful}.
If she states that $X=\hat{X}$ (precise communication), then it must be that $X = \hat{X}$.
If she states that $X$ is in the range $[a,b]$ (vague communication), then it must be that $X\in [a,b]$.
One justification for truthfulness is regulatory scrutiny.
If the auditor knowingly provides false information to her client, she risks punishment from regulators.
\footnote{In the U.S., the Public Company Accounting Oversight Board (PCAOB) enforces auditing standards and sanctions auditors for violations.}

\subsection{Precise Communication}
Suppose the auditor communicates her expertise using only precise language or silence.
\footnote{By precise communication, I mean that she reveals $X$ precisely \textit{if} she chooses to communicate about $X$. She may choose to remain silent---in which case, as we will see below, her silence indirectly and vaguely conveys information about $X$.}
The auditor considers how the manager will respond to her communication or the lack thereof.

Absent additional information from the auditor, the manager selects a report $r$ to maximize the expected accepted value.
The auditor accepts any report within the interval $[X-1, X+1]$. 
As $X$ varies from $1$ to $9$, the acceptance region moves from $[0,2]$ to $[8,10]$.
The manager faces a risk-return tradeoff: higher reports are better if accepted, but they are more likely to be rejected.
Any report $r\ge 10$ will be rejected with probability one.
The optimal report solves 
\begin{equation}
    \max_{r\ge 0}\; r\, \prob(|r-X|\le 1).
\end{equation}
The solution, $r_0 = 8$, is inflated compared to the expected value $\E[X]=5$.
When $X\in [7,9]$, the auditor accepts the report $r_0$, but when $X<7$, she rejects it.
The auditor's ex-ante expected loss is therefore 
\begin{equation}
    \begin{bracealign}
        \underbrace{\frac{6}{8}\times 1}_{\text{rejection}} + 
        \underbrace{\int_{7}^{9}(r_{0}-x)^{2}\frac{1}{8}dx}_{\text{acceptance}}
        =\frac{5}{6}\approx 0.83.
    \end{bracealign}
\end{equation}

When $X < 7$, the auditor anticipates that the manager's report will deviate significantly from her preferred fair value.  
Thus, the auditor would like to communicate this to the manager and have him adjust the report accordingly.
Would the auditor benefit from \textit{precisely} communicating the value of $X$?
By revealing her knowledge, the auditor can ensure that the manager does not ``gamble'' by proposing a report likely to be rejected.
However, in doing so, the auditor loses her information advantage.
The manager not only learns that the auditor-preferred fair value is $X$, but also that the auditor will accept any report in the interval $[X - 1, X + 1]$.
Thus informed, the manager inflates the report to $X+1$.
The auditor's payoff is now minimized, as she is indifferent between accepting and rejecting the report---the auditor is \textit{gamed} by the manager.
This is the gatekeeping expert's dilemma.

Under precise communication, the auditor may adopt a strategy of remaining silent when $X$ lies in some range and revealing it precisely otherwise.
When she reveals her preferred fair value, the auditor accepts being gamed; when she remains silent, however, she shifts the manager's beliefs without revealing the exact value of $X$.
For example, consider the auditor's strategy of revealing $X$ only when it is in $[5,9]$.
Silence then implies that $X \in [1,4]$.
The manager is still unsure of $X$, but his report is less likely to deviate substantially from it.
However, as it turns out, the auditor does not benefit from such a strategy under the uniform distribution (\zcref{prop:uniform-precise-ND}).
An auditor-optimal equilibrium outcome under precise communication is attained by always staying silent, which guarantees the auditor a loss of $5/6$.

Can the auditor do better than this?
Next I explore whether the auditor can utilize vague language to communicate her expertise without being gamed by the manager.

\subsection{Vague Communication}
Instead of revealing the exact fair value $X$, the auditor can communicate a range of possible fair values.
Such vague language potentially helps the auditor, because a vague message can guide the manager without revealing the exact upper bound of what the auditor is willing to accept.
When the auditor says that $X$ is in the interval $[a,b]$, the manager learns something (his report becomes better aligned with the auditor's preferred fair value), but he still does not know everything (he cannot propose the auditor's maximal acceptable fair value $X+1$).

To be specific, consider the following vague communication strategy: the auditor reveals whether $X$ is in $[1, 2]$, $[2, 5]$, or $[5, 9]$. 
For each of these messages, the manager solves a new reporting problem.
The optimal such report is
\begin{equation}
    r^* = \begin{cases}
            2 & \text{if } X\in [1, 2],\\
            4 & \text{if } X\in [2, 5],\\
            8 & \text{if } X\in [5, 9].
        \end{cases}
\end{equation}
\zcref{fig:partition-uniform-example} (left panel) illustrates this.
When the manager learns $X\in [5,9]$, the manager's report remains unchanged from the no-communication report $r_0=8$.
Thus, conditional on $X\in [5,9]$, the auditor's expected payoff is identical to the no-communication scenario.
In the region $X<5$, the vague message makes a difference.
When $X < 5$, without communication, the manager's report would always be rejected.
With vague communication, the manager tailors his report to the auditor's message, so it is sometimes acceptable to the auditor.
As a result, overall, the vague communication strategy strictly improves the auditor's expected payoff (and the manager's too).
\footnote{The new expected loss is $7/12\approx 0.58$, which is $30\%$ smaller than the no-communication expected loss of $5/6$.}\textsuperscript{\!,}
\footnote{The constructed strategy indeed constitutes an equilibrium: the auditor does not have an incentive to deviate from this communication strategy.
Suppose the manager holds a ``wishful'' off-path belief: he believes that $X$ is the highest possible value within the off-path message.
The auditor with realization $X$ benefits from an off-path message $[a,b]\subset[1,9]$ only if $|b+1 - X|< 1$.
However, this inequality requires $X>b$, violating the truthful communication requirement.
Thus the manager's wishful off-path belief together with the vague communication strategy constitutes an equilibrium.}

The vague communication strategy described above is just one of many ways to convey the auditor's expertise.
What strategy, then, maximizes the auditor's equilibrium payoff?
The answer is notably simple: the auditor achieves optimal payoff by equally partitioning the support $[1,9]$.
The right panel of \zcref{fig:partition-uniform-example} shows this optimal vague communication.
The shaded area indicates that the manager's report is always within the acceptance region.
The auditor successfully elicits her favorable reports without inefficiently rejecting reports.
The outcome stands in stark contrast to the precise communication case.
With precise communication, in order to ensure acceptable reports, the auditor must reveal all of her expertise and obtains the minimal payoff.

Maximal acceptance is not limited to this particular example but rather is a defining property of an auditor-optimal equilibrium (\zcref{lem:maximum-acceptance-lemma}).
Roughly speaking, when the auditor's communication results in an unacceptable report, the auditor could always convey just enough information to make the manager's report acceptable.
To make this precise, I now turn to a formal description of the model. 

\begin{figure}[htb]
    \centering
    \caption{Vague Communication via Partition}
    \label{fig:partition-uniform-example}
    \scalebox{0.9}{%
        \includegraphics[alt={Partitioning Illustration}]{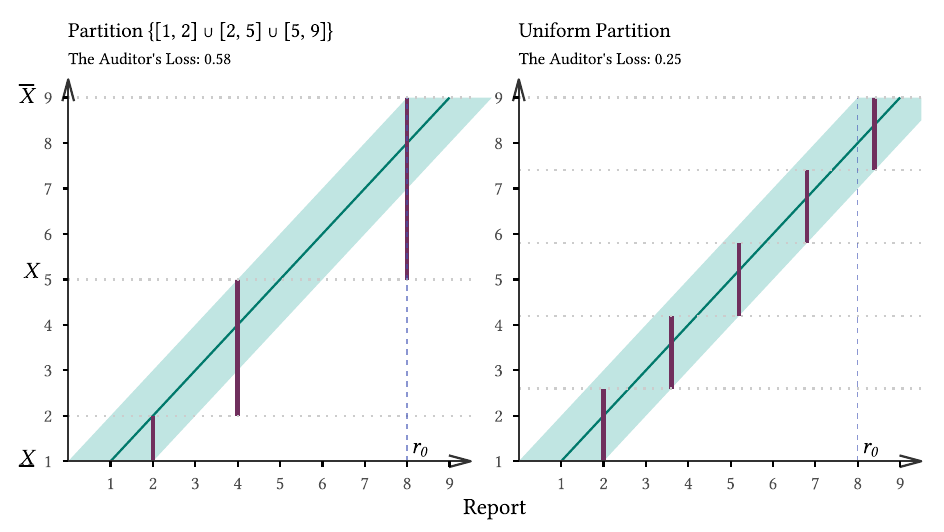}
    }
    \floatfoot{Note: The figure illustrates the auditor's vague communication strategies in the uniform distribution case, $X\sim \mathcal{U}[1,9]$.
    For each value of $X$ in the vertical axis, the range of acceptable reports in the horizontal axis is represented by the shaded region.
    The left panel corresponds to the communication strategy with messages $[1,2]$, $[2,5]$, and $[5,9]$; the right panel to the uniform-partition strategy.
    For each $X$, the corresponding report on the horizontal axis is at the intersection of the horizontal line through $X$ and the thick vertical line.
    For example, in the left panel, when $X=4$, the manager reports $r^*=4$.
    The vertical dashed line at $r_0=8$ indicates the manager's report under no information.}
\end{figure}

\section{A Model of a Gatekeeping Expert}

\subsection{Setup}
A firm's manager and an auditor bargain over how to treat a transaction in the firm's financial statements.
The manager proposes a report $r \in \mathbb{R}$ to the auditor, who decides whether to accept the report.
The auditor privately learns the \textit{preferred report} for the transaction.
The preferred report is represented by a random variable $X$ with support $[\Xl,\Xh]$, where the endpoints may be infinite.
The distribution of $X$ is determined by the parameter $\theta \in \Theta$, which I interpret as transaction/issue characteristics. 
The cumulative distribution function of $X$ given $\theta$ is $F(\cdot\,;\theta)$.
Each distribution in the family $\{F(\cdot,;\theta)\}_{\theta \in \Theta}$ admits a continuously differentiable, log-concave density function $f(\cdot\,;\theta)$.
This assumption is satisfied by many common distributions, including the uniform, normal, exponential, and logistic distributions \citep{bagnoliLogConcaveProbabilityIts2005}.

The transaction characteristics $\theta$, and thus the distribution of $X$, are common knowledge.
I omit $\theta$ from the notation until the comparative statics analysis.

\subsection{Payoffs}

\subsubsection{The Auditor's Payoff}
The auditor would like to ensure that the manager's report $r$ in the financial statements is close to the preferred value $X$. 
Specifically, if the auditor accepts a report $r$, then her payoff is 
\begin{equation}\label{eq:auditor-payoff}
    u(r,X) = -(r-X)^2.
\end{equation}
The quadratic specification allows for a sharp characterization of equilibrium behavior, and it transparently demonstrates the economic forces in the model.

If the auditor rejects the report, then she obtains an exogenous, commonly known outside option payoff of $-(\pi^R)^2$ with $\pi^R > 0$. 
I refer to $\pi^R$ as the auditor's \textit{independence}: a larger $\pi^R$ corresponds to less independence, as the auditor has more to lose from rejecting a report.

\subsubsection{The Manager's Payoff}
Let $P(r)$ denote the payoff the manager obtains when the report $r$ is accepted.
I take a reduced-form approach and assume only that $P(r)$ is increasing in $r$, without specifying the exact mechanism underlying $P$.
Examples include equity valuation, borrowing terms, bonus triggers, and product-market perceptions.
\footnote{The manager's incentives to boost the reported earnings are extensively documented in the accounting literature \citep{dechowUnderstandingEarningsQuality2010}.}
If the auditor rejects $r$, the reported number becomes a default value $r^R$, which I normalize to zero.

Under this assumption, the manager's choice problem is to push $r$ upward subject to acceptance.
The manager's payoff is therefore
\begin{equation}
    v(r)=\begin{cases}
        P(r) & \text{if }r\text{ is accepted}\\
        0 & \text{otherwise}.
    \end{cases} 
\end{equation}
I adopt the linear normalization $P(r)=r$ in what follows.
This preserves the strategic forces, while simplifying the manager's reporting problem and allowing us to focus on the auditor's communication strategy.

\subsection{Communication}
After observing $X$ but before the manager proposes the report $r$, the auditor can communicate her expertise.
Let $\mathcal{M}(X)$ be the set of messages available to the auditor with realization $X$.
Each message $D\in \mathcal{M}(X)$ is a (Borel) subset of $[\Xl,\Xh]$.
The auditor's strategy is a mapping $\sigma:[\Xl,\Xh]\to \mathcal{M}(X)$.

The auditor's communication is \textit{truthful}: her message must be consistent with $X$.
As in the example, I consider \textit{precise} and \textit{vague} communication.
The auditor's message space under precise communication is given by
\begin{equation}\label{eq:precise-message-def}
    \Mprec(X) \coloneq \{X, \varnothing\},
\end{equation}
where $\sigma(X)=\varnothing$ denotes silence. 
\noeqref{eq:precise-message-def}     

Under vague communication, the set of available messages is given by
\footnote{I rule out precise messages from the vague message space, as $a\ne b$ in \eqref{eq:vague-message-def}.
This is purely for the sake of exposition and without loss;
the auditor's best equilibrium payoff remains unchanged when the message space is expanded from $\Mvag$ to $\Mprec \cup \Mvag$.}
\begin{equation}\label{eq:vague-message-def}
    \Mvag(X) \coloneq \{[a,b] \subset [\Xl,\Xh] \mid a< b, [a,b]\ni X \}.
\end{equation}

\subsection{Strategies, Timeline, and Equilibrium Concept}
\subsubsection{The Reporting Problem with the Gatekeeper}
After observing the auditor's message $D$, the manager updates his belief according to a mapping $B:\mathcal{M}\to \Delta([\Xl,\Xh])$, where $\Delta([\Xl,\Xh])$ is the set of probability measures on $[\Xl,\Xh]$.
The manager's reporting strategy is a mapping $r: \Delta([\Xl,\Xh])\to \mathbb{R}_+$.
I slightly abuse notation and write $r(D)$ to denote the manager's report when the posterior belief is $B(D)$.

The auditor either accepts or rejects the report $r$.
Denote by $\alpha:\mathbb{R}_+\times [\Xl,\Xh] \ni(r,X)\to \alpha(r,X)\in \{0, 1\}$ the auditor's pure-strategy acceptance decision, where $\alpha(r,X)=1$ denotes acceptance.
Therefore, the restriction to pure strategies is without loss of generality.

The manager solves the following \textit{reporting problem}:
\begin{equation}\label{eq:reporting-problem-stock-price}
    \max_{r\ge 0} \E[v(r) \mid D] = \max_{r\ge 0} \E[P(r) \alpha(r,X) \mid X\in D].
\end{equation}
He chooses the report while accounting for the auditor's power to veto any report she finds unacceptable.

\subsubsection{Timeline}
To summarize, the timeline of the game is as follows:
\begin{enumerate}
    \item The auditor privately learns her preferred report $X\sim F(\cdot;\theta)$ for the transaction $\theta$.
    \item The auditor communicates about $X$ to the manager.
    \item The manager observes the auditor's message and chooses a report.
    \item The auditor decides whether to accept the report $r$.
    \item The payoffs are realized.
\end{enumerate}

\subsubsection{Equilibrium Concept}
I employ Perfect Bayesian Equilibrium (PBE) as the equilibrium concept.

\begin{definition}\label{def:PBE}
    The tuple $\langle\sigma, \alpha, r, B\rangle$ is a \textit{Perfect Bayesian Equilibrium} (PBE) if the following conditions hold:
    \begin{enumerate}
        \item The auditor's strategy ($\sigma, \alpha$) maximizes her expected payoff for any $X$ and $r$.
        \item The manager's reporting strategy $r$ solves the reporting problem \eqref{eq:reporting-problem-stock-price}.
        \item The manager's belief updating rule $B$ follows Bayes' law for on-path messages.
    \end{enumerate}
\end{definition}

After observing $X$, the auditor chooses a message $\sigma(X)\in \mathcal{M}(X)$, where the message space is either precise ($\Mprec(X)$) or vague ($\Mvag(X)$).
For $\sigma$ to be part of a PBE, the auditor must not benefit from deviating to any other message $D\in \mathcal{M}(X)$.

My primary focus is how the auditor shares her expertise.
Thus, I focus on the \textit{auditor-optimal equilibrium}, defined as an equilibrium that maximizes the auditor's expected payoff across all possible equilibria.

In the appendix, I show that a common equilibrium refinement provides strong support for an auditor-optimal equilibrium (\zcref{prop:app_GPFE,prop:app_GPFE_converse}).
The refinement, originally developed by \cite{grossmanPerfectSequentialEquilibrium1986} and \cite{farrellMeaningCredibilityCheaptalk1993}, is commonly used in truthful communication games \citep[e.g.,][]{bertomeuVerifiableDisclosure2018,glodeVoluntaryDisclosureBilateral2018,aghamollaMandatoryVsVoluntary2024}.

\subsection{The Acceptance Decision and Terminology}
The auditor accepts the report $r$ if and only if the acceptance payoff is greater than the rejection payoff:
\begin{equation}
    \alpha(r,X) = 1 \iff u(r,X) \ge -(\pi^R)^2.
\end{equation}
From \eqref{eq:auditor-payoff}, this condition reduces to $|r-X|\le \pi^R$.
Hence the auditor accepts any report $r \in [X - \pi^R, X + \pi^R]$ and rejects all others.

Call a report $r$ \textit{acceptable} if $|X-r|\le \pi^R$.
A report such that $|X-r|< \pi^R$ gives the auditor a payoff strictly greater than $-(\pi^R)^2$.
I call this a $\textit{favorable}$ report.
Say that the auditor is \textit{indifferent} if she receives a report $r$ such that $|r-X|=\pi^R$, in which case she receives a payoff of $-(\pi^R)^2$.
The auditor wishes to induce a favorable report to obtain a payoff strictly greater than $-(\pi^R)^2$.
I assume $\Xh - \Xl > 2\pi^R$ in the case of bounded support.

\section{The Reporting Problem for Arbitrary Intervals}\label{sec:RP-General}
The auditor's communication strategy is shaped by how the manager reacts to the auditor's message.
I therefore begin with the manager's reporting problem after he learns $X \in [a,b]$.
Denote by $\ab{F}$ the cumulative distribution function of $X$ truncated over the interval $[a,b]$.
Define the acceptance probability function by $A(r) \coloneq \ab{F}(r + \pi^R) - \ab{F}(r - \pi^R)$.
It represents the probability that the report $r$ falls in the acceptance region $[X - \pi^R, X + \pi^R]$ given that $X\in [a,b]$.
The reporting problem is to find a report $r^* \in [\Xl,\Xh]$ that maximizes the manager's expected payoff:
\begin{equation}\tag{$\mathrm{Report}(a,b)$}\label{eq:reporting-problem}
    \max_{r \ge 0}\; r \, A(r). 
\end{equation}

The following lemma assures that \ref{eq:reporting-problem} always admits a unique solution.

\begin{lemma}\label{lem:reporting-problem-solution}
    For any $[a,b]\subset[\Xl,\Xh]$, \ref{eq:reporting-problem} admits a unique solution $r^* \ge a + \pi^R$.
\end{lemma}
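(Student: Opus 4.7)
The plan is to establish two properties of the objective $rA(r)$: strict log-concavity on $\{r>0:A(r)>0\}$, which delivers uniqueness, and strict monotonicity on $[\max(0,a-\pi^R),\,a+\pi^R)$, which delivers $r^{*}\ge a+\pi^R$. Existence of a maximizer is handled separately.

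The key step is log-concavity of $A$. Since truncation to an interval preserves log-concavity, $\ab{f}$ inherits log-concavity from $f$. Writing $A(r) = (\ab{f} * \mathbf{1}_{[-\pi^R,\pi^R]})(r)$ presents $A$ as a convolution of two log-concave functions, so Pr\'ekopa's theorem yields log-concavity of $A$ on the interval where it is positive---equivalently, the standard fact \citep{bagnoliLogConcaveProbabilityIts2005} that $r\mapsto F(r+k)-F(r-k)$ is log-concave whenever $f$ is. Since $\log r$ is strictly concave on $r>0$, the sum $\log(rA(r)) = \log r + \log A(r)$ is strictly concave on $\{r>0:A(r)>0\}$, so $rA(r)$ is strictly log-concave there and admits at most one maximizer. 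The main subtlety is that it does not suffice to invoke log-concavity of $\ab{F}$ in isolation; what matters is log-concavity of the \emph{difference} $\ab{F}(r+\pi^R)-\ab{F}(r-\pi^R)$, which requires the convolution argument rather than properties of the CDF alone.

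For existence, bounded support gives $A(r)=0$ outside a compact interval, so continuity delivers a maximum. For unbounded support, log-concavity of $\ab{f}$ forces at least exponential tail decay, so $rA(r)\to 0$ as $r\to\infty$; since $rA(r)>0$ at some interior point (e.g., near $a+\pi^R$), a maximizer still exists.

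For the lower bound, on $r\in[\max(0,a-\pi^R),\,a+\pi^R)$ we have $r-\pi^R\le a$, so $\ab{F}(r-\pi^R)=0$ and $A(r)=\ab{F}(r+\pi^R)$. Then $A'(r)=\ab{f}(r+\pi^R)\ge 0$, strictly positive on the interior of the support, so $(rA(r))'=A(r)+rA'(r)>0$ whenever $r>0$ and $A(r)>0$. Hence $rA(r)$ is strictly increasing on this range, ruling out any maximizer strictly below $a+\pi^R$.
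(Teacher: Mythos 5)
Your proof is correct, and it takes a genuinely different route from the paper's. The paper argues directly from first-order conditions: it splits into the cases $b-a>2\pi^R$ and $b-a\le 2\pi^R$, introduces the auxiliary fixed-point maps $h_1(r)=\frac{F(r+\pi^R)-F(r-\pi^R)}{f(r-\pi^R)-f(r+\pi^R)}$ and $h_2(r)=\frac{F(b)-F(r-\pi^R)}{f(r-\pi^R)}$, uses log-concavity to show each is monotone on the relevant region and that $h_1>h_2$, and assembles the unique maximizer piecewise (with a separate patch for weakly versus strictly log-concave $f$). Your convolution argument---$A = f_{[a,b]} * \mathbf{1}_{[-\pi^R,\pi^R]}$ is log-concave by Pr\'ekopa, so $\log r + \log A(r)$ is strictly concave where finite---is shorter, handles the kinks in $A$ without any case analysis, and treats weak and strict log-concavity of $f$ uniformly; your monotonicity argument for $r<a+\pi^R$ matches the paper's. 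What the paper's longer route buys is the explicit safe/risky characterization $r^*=\max\{a+\pi^R, r^+\}$ in \eqref{eq:RP-solution-large}, which the rest of the analysis (the no-gambling constraint, \zcref{lem:maximum-acceptance-lemma}) relies on; your proof establishes the lemma as stated but does not produce that formula. Two small points to tidy up: (i) the degenerate case $b\le-\pi^R$, where $A\equiv 0$ on $r\ge 0$ and ``uniqueness'' holds only under the tie-breaking convention $r^*=0$ (the paper disposes of this by fiat in its first line); and (ii) when the supremum is positive you should say explicitly that any maximizer must lie in $\{r>0: A(r)>0\}$, which is the interval on which your strict concavity argument applies.
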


The length of the interval matters.
If $b-a>2\pi^R$, for any report $r$, there is always some chance that $r$ falls outside the auditor's acceptance region.
If on the other hand $b-a\le 2\pi^R$, the manager can always choose a report in the middle to guarantee that the auditor accepts it.
The acceptance probability in this ``short-interval'' case is
\begin{equation}\label{eq:acceptance-probability}
    A(r)
    =\begin{cases}
        \ab{F}(r+\pi^{R}) & r\in[a-\pi^{R},b-\pi^{R})\\
        1 & r\in[b-\pi^{R},a+\pi^{R}]\\
        1-\ab{F}(r-\pi^{R}) & r\in(a+\pi^{R},b+\pi^{R}].
    \end{cases}
\end{equation}
Starting from $r = a - \pi^R$, the acceptance probability \eqref{eq:acceptance-probability} increases with $r$. 
When $r$ is small, the manager worries that $X$ is actually too large.
This would risk the report being too low to be accepted. 
Once $r$ reaches $b-\pi^R$, the acceptance probability becomes one and stays constant until $r = a + \pi^R$.
Beyond this point, $A(r)$ decreases monotonically: the manager risks the possibility that the report is too high to be accepted.
In \zcref{fig:acceptance-prob-normal}, I plot the acceptance probability \eqref{eq:acceptance-probability} for a normal distribution.

Since $A(r)$ has a kink at $r=a+\pi^R$, the first-order condition of $rA(r)$ alone does not determine the solution.
There are two candidate solutions to \ref{eq:reporting-problem}.
One is the \textit{safe option}, $r=a+\pi^R$, which guarantees acceptance.
Another is the \textit{risky option}, denoted by $r^+$, which solves the following auxiliary problem:
\begin{equation}\label{eq:RP_risky_def}
    r^{+} \coloneq \argmax_{r\ge 0}\; r\,\big( 1-\ab{F}(r-\pi^R) \big).
\end{equation}
This problem coincides with the reporting problem $\max r A(r)$ on $(a+\pi^R,b+\pi^R]$, but unlike the original problem, it assumes that the acceptance probability $1-\ab{F}(r-\pi^R)$ applies over the entire domain $r\ge 0$.
I define $r^+$ this way because the solution to \ref{eq:reporting-problem} reduces to a comparison of the safe and risky options:
\begin{equation}
    r^* = \max\{a+\pi^R, r^{+}\}.
\end{equation}
The manager chooses the risky option if and only if $r^+ > a+\pi^R$.
This comparison is valid, even though the risky option is rejected with some probability and thus the manager's payoff is $v(r^+) = r^+ A(r^+)<r^+$.
The reason is revealed preference: since $r^+ (1-\ab{F}(r^{+}-\pi^R))\ge (a+\pi^R)(1-\ab{F}(a))=a+\pi^R$ by construction, the risky option must be infeasible (i.e., $r^{+} < a + \pi^R$) whenever the manager chooses the safe option.
See \zcref{fig:acceptance-prob-normal} for an illustration.

\begin{figure}[ht]
    \centering
    \caption{Acceptance Probability}
    \label{fig:acceptance-prob-normal}
    \scalebox{1}{%
        \includegraphics[alt={Acceptance Probability Function}]{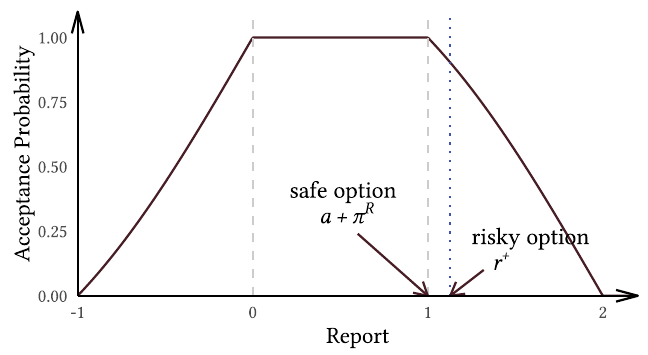}
    }
    \floatfoot{Note: The figure illustrates the acceptance probability $A(r)$ when $X\sim \mathcal{N}(1, 1)$, $\pi^R=1$, and $[a,b]=[0,1]$.
    }
\end{figure}

To obtain the risky option $r^{+}$, rewrite the objective of \eqref{eq:RP_risky_def} in terms of the parent distribution function:
    \begin{align*} 
        r\,\big( 1-\ab{F}(r-\pi^{R}) \big) 
         & =r\,\bigg(1-\frac{F(r-\pi^{R})-F(a)}{F(b)-F(a)}\bigg),\\
         & \propto r\,(F(b)-F(r-\pi^{R})),
    \end{align*}
where the second line follows by multiplying the first line by the positive constant $F(b)-F(a)$.
The first-order condition of \eqref{eq:RP_risky_def} is thus
    \begin{equation}\label{eq:risky-option-FOC}
        F(b) - F(r-\pi^R) = r f(r - \pi^R).
    \end{equation}
Since any $X\in [r-\pi^R,b]$ leads to acceptance, a unit increase in the report yields a marginal benefit of $F(b) - F(r-\pi^R)$.
This is the left-hand side of \eqref{eq:risky-option-FOC}.
The right-hand side is the marginal cost of inflating the report. It reflects the increased rejection probability, which rises only at the lower end of the interval (i.e., when $X=r-\pi^R$).
Near the upper bound $b$, the report is already too high to be accepted ($r+\pi^R>b$), so marginal changes there have no effect on rejection.

Say that the manager \textit{gambles} when he chooses the risky option, $r^*=r^{+}$.
How does the message $[a,b]$ influence the manager's incentive to gamble?
The first consideration is the informativeness of the message---the \textit{length} of the interval.
In determining $r^{+}$, the manager is only concerned with the upper bound $b$ of $X$.
Indeed, $r^+$ determined by \eqref{eq:risky-option-FOC} is independent of the lower bound $a$.
By contrast, the safe option $a + \pi^R$ is pinned down by the lower bound $a$, regardless of the upper bound $b$.
Therefore, for a fixed upper bound $b$, the manager is more likely to gamble when the lower bound $a$ is smaller (i.e., the message is vaguer).
A smaller $a$ makes the safe option less attractive without affecting the risky one.

The second consideration is the content of the message---the \textit{location} of the interval.
Consider messages of fixed length $b - a \equiv \Delta \le 2\pi^R$, and write the message $[a, b]$ as $[a, a+\Delta]$.
The manager's expected payoff from the safe option is $(a+\pi^R)A(a+\pi^R) = a+\pi^R$. 
A marginal increase in the report $r$ from the safe option changes the manager's expected payoff by:
\begin{equation}\label{eq:RP-MBMC}
    (r\,A(r))'|_{\,r=a+\pi^R} = 1 - (a+\pi^R)|A'|,
\end{equation}
where $A'<0$ is the right-derivative of $A$ at $a+\pi^R$.
The first term is the marginal benefit of increasing $r$ by one unit, fixing the acceptance probability at one.
This is always one, \textit{regardless of the location} of the message.
In contrast, the marginal cost, which is the second term of \eqref{eq:RP-MBMC}, decreases with $a$.
The report $a + \pi^R$ is now rejected with small probability $A'$, so the manager loses a fraction $A'$ of the safe option $a + \pi^R$.
\footnote{To be precise, the right derivative $A'$ at $a + \pi^R$ also changes with $a$ unless $X$ is uniformly distributed. Using log-concavity, one can show that $|A'|$ evaluated at $a+\pi^R$ is (weakly) increasing in $a$.}
Consequently, the manager has a stronger incentive to gamble when the message is more negative (i.e., $a$ is smaller).

In sum, vaguer messages make the manager more likely to gamble, as do more negative messages that imply lower realizations of $X$.

\begin{example}[Uniform Distribution]\label{ex:RP-Unif}
    Suppose that $X$ follows the uniform distribution $\mathcal{U}[a,b]$.
    When $b-a\le 2\pi^R$, the solution to \ref{eq:reporting-problem} is given by
    \begin{equation}\label{eq:uniform-RP-sol}
        r^{*}\times A(r^{*})=
        \begin{cases}
            \frac{b+\pi^{R}}{2}\times\frac{b+\pi^{R}}{2(b-a)} & \text{if }a<\frac{b-\pi^{R}}{2},\\
            (a+\pi^{R})\times1 & \text{if }a\ge\frac{b-\pi^{R}}{2}.
        \end{cases}
    \end{equation}
    The manager gambles and chooses the risky option $r^+=(b+\pi^R)/2$ when the message is sufficiently vague (i.e., $a<(b-\pi^R)/2$).
    Moreover, as $b$ becomes smaller, the intervals that prevent gambling shrink: the manager is more likely to gamble when the message contains more negative information (location effect).
\end{example}

\section{Precise Communication and the Gatekeeper's Dilemma}

\subsection{Precise Information Is Exploited}
Suppose that the auditor uses precise communication (i.e., $\mathcal{M}=\Mprec$).
Once the auditor communicates $X$, the manager learns that the auditor will accept any report in the interval $[X - \pi^R, X + \pi^R]$. 
Therefore, the manager reports
\begin{equation*}
    r^* = \max\{X+\pi^R, 0\}.
\end{equation*}
The auditor is always indifferent, receiving a payoff of $-(\pi^R)^2$. The auditor is perfectly gamed.

To avoid being gamed, the auditor can remain silent. 
The auditor's precise communication strategy $\sigma$ induces a \textit{silence set}, $\nd\coloneq\{X\mid \sigma(X)=\varnothing\}$, which is the set of $X$ realizations she does not reveal. 
The auditor's silence signals that $X\in \nd$.

\subsection{When to Stay Silent}
What is an auditor-optimal silence set $\nd$?
One obvious candidate is to always stay silent (i.e., $\nd=[\Xl,\Xh]$).
This constitutes an equilibrium: if the auditor deviates and reveals $X$, then the manager will report $r=X+\pi^R$, which makes the auditor indifferent. 
Can the auditor do better than this?

By selectively revealing certain pieces of information, the auditor can give meaning to her silence.
The auditor's expected payoff is given by
\begin{equation}\label{eq:optimal-ND-obj}
    \prob(X\in\nd)\E[u(r_0,X)\mid X\in\nd]+ \prob(X\notin\nd)(-(\pi^{R})^{2}),
\end{equation}
where $r_0\coloneq r(\nd)$ is the manager's response to the auditor's silence.
This expression captures the auditor's fundamental tradeoff---the gatekeeping expert's dilemma.
For a fixed $\nd$, the auditor prefers a larger silence set, since $u(r_0, X) \ge -(\pi^R)^2$. 
As $\nd$ grows, however, the manager's report $r_0$ drifts further from the benchmark $X$, reducing $\E[u(r_0, X) \mid X \in \nd]$.
In short: the auditor can protect herself from being gamed by expanding $\nd$, but at the cost of receiving reports that diverge more from $X$.

Not all silence sets can be supported in equilibrium.
When $X\notin \nd$, the auditor's expected payoff under the prescribed strategy is $-(\pi^R)^2$.
If she deviates and stays silent, then she obtains the payoff of $u(r_0, X) \ge -(\pi^R)^2$.
The inequality is strict when the induced report $r_0$ is favorable (i.e., $|X-r_0|<\pi^R$).
Conversely, when $X\in \nd$, the auditor does not have an incentive to reveal $X$, as doing so yields the lowest payoff of $-(\pi^R)^2$.
Therefore, the equilibrium constraint is
\begin{equation}\label{eq:equilibrium-constraint}
    |X - r_0 | \ge \pi^R \quad \text{for all } X\notin\nd.
\end{equation}
An optimal silence set, denoted by $\nd^*$, is one that maximizes the auditor's payoff \eqref{eq:optimal-ND-obj} across all possible equilibria under the equilibrium constraint \eqref{eq:equilibrium-constraint}.

When $X\,$ follows a uniform distribution, a particularly simple solution is available: the auditor does not communicate at all.

\begin{proposition}\label{prop:uniform-precise-ND}
    Suppose that $X\sim \mathcal{U}[\Xl,\Xh]$.
    With precise communication, the auditor is indifferent between any of $\,\{\nd\mid |\nd| \ge 2\pi^R\}$. 
    In particular, the no-communication equilibrium is auditor-optimal: $\nd^*=[\Xl,\Xh]$.
\end{proposition}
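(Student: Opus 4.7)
The plan is to express the auditor's expected payoff for any candidate equilibrium as a single integral over the effective acceptance region and to show that this integral takes a single value across the class $|\nd|\ge 2\pi^R$ while also bounding it above for any equilibrium.

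I would start by partitioning the support $[\Xl,\Xh]$ into three regions based on the silence decision and acceptance outcome: silent-and-accepted ($X\in\nd$ with $|X-r_0|\le\pi^R$, yielding $-(r_0-X)^2$), silent-but-rejected ($X\in\nd$ with $|X-r_0|>\pi^R$, yielding $-(\pi^R)^2$), and revealed ($X\notin\nd$, which yields $-(\pi^R)^2$ because the auditor is gamed). Writing $C=\nd\cap[r_0-\pi^R, r_0+\pi^R]$ for the effective acceptance region and using the uniform density $1/(\Xh-\Xl)$, these three regions collapse into
\[
E[u]=-(\pi^R)^2+\frac{1}{\Xh-\Xl}\int_C\bigl[(\pi^R)^2-(r_0-X)^2\bigr]\,dX,
\]
so the payoff depends on the silence set only through $C$ and $r_0$.

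Next I would apply the equilibrium constraint \eqref{eq:equilibrium-constraint}, which is equivalent to $(r_0-\pi^R, r_0+\pi^R)\cap[\Xl,\Xh]\subseteq\nd$. Combined with $r_0\ge\inf\nd+\pi^R\ge\Xl+\pi^R$ from \zcref{lem:reporting-problem-solution}, this forces $C=[r_0-\pi^R, r_0+\pi^R]\cap[\Xl,\Xh]$ up to a null set. When $|\nd|\ge 2\pi^R$, the manager's reporting problem for uniform posteriors (\zcref{ex:RP-Unif}) delivers $r_0+\pi^R\le\Xh$, so $C$ is an interval of length $2\pi^R$ centered at $r_0$. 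The substitution $u=r_0-X$ then yields $\int_C[(\pi^R)^2-(r_0-X)^2]\,dX=\int_{-\pi^R}^{\pi^R}[(\pi^R)^2-u^2]\,du=\tfrac{4(\pi^R)^3}{3}$, independent of $r_0$ and hence of $\nd$. Indifference follows, with common value $E[u]=-(\pi^R)^2+\tfrac{4(\pi^R)^3}{3(\Xh-\Xl)}$.

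For auditor-optimality of $\nd^*=[\Xl,\Xh]$: this no-communication strategy is trivially an equilibrium and has $|\nd^*|=\Xh-\Xl>2\pi^R$ by assumption, so it attains the constant above. Since the integrand $(\pi^R)^2-(r_0-X)^2$ is non-negative exactly on $[r_0-\pi^R, r_0+\pi^R]$, for any equilibrium we have $\int_C[(\pi^R)^2-(r_0-X)^2]\,dX\le \int_{r_0-\pi^R}^{r_0+\pi^R}[(\pi^R)^2-(r_0-X)^2]\,dX=\tfrac{4(\pi^R)^3}{3}$. The no-communication payoff is thus an upper bound over all equilibria and is attained by $\nd^*$.

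The main obstacle is verifying that $|\nd|\ge 2\pi^R$ in equilibrium implies $r_0+\pi^R\le\Xh$, which delivers the invariant length of $C$. For single-interval silence sets this reduces to the case analysis in \zcref{ex:RP-Unif}. For non-convex $\nd$, or in the boundary regime where $\Xh<3\pi^R$ (so that even no-comm has $r_0+\pi^R>\Xh$), I would instead leverage the equilibrium inclusion $(r_0-\pi^R, r_0+\pi^R)\cap[\Xl,\Xh]\subseteq\nd$ to show that the manager's best response is pinned down to a common value across the class, so indifference obtains through the shared $r_0$ and $C$ rather than through the universal integral computation.
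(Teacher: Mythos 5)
Your proposal is correct in substance but takes a genuinely different route from the paper's. The paper first restricts to connected silence sets $\nd=[a,b]$ (invoking a supplementary-material lemma), then argues $a=\Xl$ is without loss because the induced report depends only on $b$, and finally shows the objective is constant in $b$ over the feasible range. You instead write the payoff as $-(\pi^R)^2$ plus a nonnegative ``bonus'' integral over $C=\nd\cap[r_0-\pi^R,r_0+\pi^R]$, note that the equilibrium constraint \eqref{eq:equilibrium-constraint} pins $C$ down to $[r_0-\pi^R,r_0+\pi^R]\cap[\Xl,\Xh]$, and exploit translation invariance of $\int_{-\pi^R}^{\pi^R}[(\pi^R)^2-u^2]\,du$. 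This buys two things the paper's argument does not deliver directly: indifference and optimality follow simultaneously from a single upper bound $\tfrac{4(\pi^R)^3}{3}$ that holds for \emph{every} equilibrium (including $|\nd|<2\pi^R$), and connectedness of $\nd$ is not needed in the main case, since the payoff depends on $\nd$ only through $r_0$ and $r_0$ drops out once $C$ has full length $2\pi^R$. The computed common value $-(\pi^R)^2+\tfrac{4(\pi^R)^3}{3(\Xh-\Xl)}$ matches the paper's loss of $5/6$ in the introductory example.

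The loose end is exactly the one you flag, and it is worth closing carefully because your optimality step (``$\nd^*$ attains the upper bound'') genuinely fails when $\Xh<3\pi^R$: there $r_0+\pi^R=(\Xh+3\pi^R)/2>\Xh$ even under no communication, so no equilibrium attains $\tfrac{4(\pi^R)^3}{3}$. Your proposed fix works: in that regime any equilibrium must have $\sup\nd=\Xh$ (a silence set with $\sup\nd=b<\Xh$ would require $r_0+\pi^R\le b$, which \zcref{ex:RP-Unif} rules out for $b<3\pi^R$), every connected $\nd=[a,\Xh]$ with $\Xh-a>2\pi^R$ induces the same gamble $r_0=(\Xh+\pi^R)/2$ and hence the same $C=[(\Xh-\pi^R)/2,\Xh]$, and the safe-option equilibria with $|\nd|<2\pi^R$ yield a weakly shorter $C$. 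Two small points to make explicit when writing this up: (i) the bound $r_0\ge\inf\nd+\pi^R$ is stated in \zcref{lem:reporting-problem-solution} only for interval messages, so for non-convex $\nd$ you need either the one-line monotonicity argument (for $r\le\inf\nd+\pi^R$ the acceptance window gains mass and loses none as $r$ increases) or the paper's connectedness lemma; and (ii) for non-convex $\nd$ in the $\Xh<3\pi^R$ regime, pinning down a common $r_0$ is not immediate from the inclusion $(r_0-\pi^R,r_0+\pi^R)\cap[\Xl,\Xh]\subseteq\nd$ alone, so that case is most cleanly dispatched by citing connectedness, as the paper does.
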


To understand this result, consider the introductory example where $X\sim \mathcal{U}[1,9]$.
The auditor can induce a favorable report only on the silence set.
Under no communication ($\nd = [1,9]$), the manager reports $r_0 = 8$.
The auditor receives unacceptable reports for $X\le 7$.
By changing the silence set to $\nd'=[1,7]$, the auditor induces favorable reports for some $X\in[1,7]$.
To implement this strategy, however, she must precisely communicate values of $X$ in $[7,9]$.
By staying silent on $[1,7]$, she effectively shifts the range in which she receives favorable reports from $[7,9]$ to some subset of $[1,7]$.
Since $X$ has a constant density, the expected payoff from favorable reports remains unchanged.

For general log-concave distributions for which the density is not constant, it can be valuable to shift the interval over which the auditor receives a favorable report.
The auditor's optimal silence set becomes a proper subset of the support of $X$.

\begin{example}[The Normal Distribution]\label{ex:normal-precise-ND}
    Suppose that $X\sim \mathscr{N}(1,1)$. 
    Let $\pi^R=1$. 
    If the auditor does not communicate anything ($\nd = (-\infty, \infty)$), the manager reports $r_0 \approx 1.78$.
    The auditor's expected loss is then about $0.62$.
    If the auditor could induce any report independent of the silence set, she would choose $r_{0}=\mu=1$ with $\nd \supset [\mu-\pi^{R},\mu+\pi^{R}]$.
    But in equilibrium, the silence set and the induced report are intertwined.
    An optimal silence set is $\nd^* = (-\infty, b^*)$ with $b^* \approx 2.61$, under which the manager reports $r_0^* \approx 1.61$.
    By eliminating the high realizations of $X$ from the silence set, the auditor shifts the manager's report closer to the ideal point $\mu$.
    Since $r_0^* + \pi ^R = b^*$, the equilibrium constraint \eqref{eq:equilibrium-constraint} is binding.
    Under the optimal strategy, the auditor's expected loss is $0.56$.
    The auditor's expected loss is about $11\%$ smaller under the optimal strategy than under no communication.
    In online supplementary material (Appendix C.1), I characterize an optimal silence set for general log-concave distributions and provides more detail on this example.
\end{example}

This analysis foreshadows the value of vague communication.
The auditor wants to guide the manager's report while maintaining some uncertainty. 
Precise communication forces a stark choice: stay silent, or speak and be gamed.
Vague language offers a way to achieve both of the auditor's goals.

\section{Gatekeeping with Vague Language}

\subsection{The Optimal Communication Strategy with Vague Language}

\subsubsection{Maximal Acceptance Lemma}
A vague communication strategy $\sigma$ induces a partition of the support of $X$.
Represent the induced partition by $\mathcal{D}=\bigcup_{i\in I} D_i$, where $I$ is a countable index set, $D_i\coloneq[d_i,d_{i+1}]$, and $\sigma([d_i,d_{i+1}])=D_i$.
\footnote{I represent the partition as a collection of (closed) intervals, instead of cutoff points. Thus, each element in the partition is an interval corresponding to a message.}
Let $r_i\coloneq r(D_i)$ be the manager's report upon learning that $X\in D_i$. 

An auditor-optimal partition is characterized by the following optimization problem:
\begin{equation}\tag{$\mathrm{OP}$}\label{eq:optimal-partition-problem}
    \min_{\mathcal{D}} \sum_{i\in I} \prob(X\in D_i) \E[\alpha(r_i,X)(r_i - X)^2 + (1-\alpha(r_i,X)) (\pi^R)^2 \mid X\in D_i]. 
\end{equation}
For each $X\in D_i$ the manager reports $r_i$, and thus the auditor's loss is $\min\{(r_i-X)^2, (\pi^R)^2\}$.
In an optimal partition, each interval must be neither too small, lest the manager's report become barely acceptable, nor too large, lest the report become too far from $X$.

The optimal partition problem \eqref{eq:optimal-partition-problem} takes into account the manager's equilibrium behavior through $r_i$ and the auditor's equilibrium acceptance rule through $\alpha(r_i,X)$.
What is missing is the equilibrium condition on the auditor's communication strategy $\sigma$.
For a partition $\mathcal{D}$ to be an equilibrium, the auditor must be deterred from deviating to a different message after observing $X$ (see \zcref{def:PBE}).
For now I set aside this incentive constraint and return to it after solving \ref{eq:optimal-partition-problem}.

The problem \ref{eq:optimal-partition-problem} is potentially complex, as the space of feasible partitions is large.
Yet the following lemma shows that it is without loss to restrict attention to a narrower class of partitions.

\begin{lemma}[Maximal Acceptance Lemma]\label{lem:maximum-acceptance-lemma}
    Consider the vague communication case (i.e., $\mathcal{M}=\Mvag$).
    In an auditor-optimal outcome, the ex-ante probability of acceptance is maximized among all possible equilibria. 

    Moreover, without loss of generality, an auditor-optimal partition can be restricted to those satisfying the following properties:
    \begin{itemize}\setlength{\itemsep}{1pt}\setlength{\parskip}{0pt}
        \item For each interval in the partition, the manager's report is accepted with probability either zero or one.
        \item At most one interval in the partition has the resulting acceptance probability of zero.
    \end{itemize}
\end{lemma}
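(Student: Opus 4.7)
The plan is to start from any equilibrium partition and perform a sequence of refinements, each of which weakly improves the auditor's expected payoff and weakly increases the ex-ante acceptance probability, until every element of the resulting partition has acceptance probability $0$ or $1$. Because each step is a weak improvement for the auditor, any auditor-optimal outcome can be taken to satisfy the stated properties, and the maximization of ex-ante acceptance probability is then a byproduct of the construction.

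The key refinement step acts on any interval $D_i = [a_i, b_i]$ whose induced acceptance probability lies strictly in $(0,1)$. By the analysis in \zcref{sec:RP-General}, partial acceptance means the manager plays the risky option $r_i = r^+$ with $r^+ + \pi^R > b_i$, and the acceptance region on $D_i$ is $[r^+ - \pi^R,\, b_i]$. I would split $D_i$ at $r^+ - \pi^R$ into $D_i^U = [r^+ - \pi^R,\, b_i]$ and $D_i^L = [a_i,\, r^+ - \pi^R]$. On $D_i^U$ the safe option equals $r^+$, and the risky-option first-order condition \eqref{eq:risky-option-FOC} depends only on the upper bound $b_i$, which is unchanged; hence the risky option on $D_i^U$ also equals $r^+$, and since $D_i^U \subset [r^+ - \pi^R,\, r^+ + \pi^R]$, the acceptance probability on $D_i^U$ is $1$. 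Comparing losses, on $D_i^U$ the new loss coincides with the auditor's original loss on the accepted portion of $D_i$, while on $D_i^L$ the new loss is pointwise bounded by $(\pi^R)^2$, matching the rejected portion of $D_i$, with strict improvement whenever the manager's new report on $D_i^L$ secures any acceptance. Hence the refinement is a weak improvement in both auditor payoff and ex-ante acceptance probability.

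Iterating this refinement on $D_i^L$ (and on any remaining partial-acceptance intervals) produces a sequence of shrinking ``problematic'' intervals. The location effect in \zcref{sec:RP-General} guarantees that, on a sufficiently short interval anchored at a fixed lower bound, the safe option strictly dominates the risky option, so the recursion terminates with a partition whose elements each have acceptance probability $0$ or $1$. An interval with acceptance probability $0$ arises only when the constraint $r \ge 0$ binds and forces the manager's report outside every realization's acceptance window, which can occur only on an interval lying entirely below $-\pi^R$; the ordering of partition elements then leaves at most one such interval, at the left edge of the support.

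Finally, I would verify that the refined partition is a Perfect Bayesian Equilibrium by adopting the ``wishful'' off-path beliefs used in the introductory example: the manager interprets any off-path message $[a,b]$ as $X = b$ and reports $b + \pi^R$. The auditor with realization $X$ gains from such a deviation only if $|b + \pi^R - X| < \pi^R$, i.e., $X > b$, which contradicts truthfulness $X \in [a,b]$. The main obstacle I anticipate is making the termination argument rigorous: one must rule out an infinite chain of ever-smaller partial-acceptance pieces, which I expect to handle by combining log-concavity of $f$ with the monotonicity of the location effect on the manager's gambling incentive.
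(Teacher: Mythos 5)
Your core move --- splitting a partially-accepted interval $[a_i,b_i]$ at $r^+-\pi^R$ and observing that the manager's report on the upper piece is unchanged because the risky-option first-order condition depends only on the upper endpoint --- is exactly the paper's splitting argument. But two things go wrong in your packaging of it. First, your characterization ``partial acceptance means $r^++\pi^R>b_i$ and the acceptance region is $[r^+-\pi^R,b_i]$'' holds only for short intervals ($b_i-a_i\le 2\pi^R$). When $b_i-a_i>2\pi^R$ the manager's optimum can be the interior solution (the $r_1^+$ case in the proof of \zcref{lem:reporting-problem-solution}), whose acceptance window $[r^*-\pi^R,r^*+\pi^R]$ sits strictly inside $[a_i,b_i]$, so the report is rejected from \emph{above} as well. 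Your upper piece $D_i^U=[r^*-\pi^R,b_i]$ then still has acceptance probability strictly below one, and your refinement rule --- which always cuts at the lower edge of the acceptance window --- does nothing further to it: the recursion stalls on top-rejection. Second, even where the rule applies, you correctly flag that you cannot yet rule out an infinite chain of ever-smaller partial-acceptance leftovers, and you offer no argument for why the (possibly countably infinite) limit partition inherits the weak-improvement and acceptance properties.

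Both problems disappear if you run the same split as a one-step contradiction rather than an iteration, which is what the paper does. Suppose an auditor-optimal partition contains a message $[a,b]$ with $b>-\pi^R$ and $a<\Gamma(b)$, so the report $r^*$ satisfies $r^*>a+\pi^R$ and every $X\in[a,r^*-\pi^R]$ is rejected. Split only that message into $[a,r^*-\pi^R]$ and $[r^*-\pi^R,b]$. On the upper piece the report remains $r^*$ and the auditor's payoff is \emph{pointwise unchanged} --- no claim of full acceptance there is needed, so two-sided rejection is harmless. On the lower piece the old payoff was the constant $-(\pi^R)^2$, while the new report has strictly positive acceptance probability (its upper endpoint $r^*-\pi^R$ exceeds $-\pi^R$ because $r^*A(r^*)>0$), so the payoff strictly improves on a positive-measure set. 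This contradicts optimality, so every message of an optimal partition with $b>-\pi^R$ already satisfies $a\ge\Gamma(b)$ and hence induces acceptance with probability one; no termination argument is required. Your treatment of the zero-acceptance intervals (only possible when $b\le-\pi^R$, hence mergeable into a single leftmost message) and your verification of equilibrium via wishful off-path beliefs both match the paper and are fine.
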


The result is of independent interest, beyond its role in solving the optimal partition problem.
It shows that the auditor leaves no room to improve the likelihood of receiving an acceptable report.
To see why, consider a communication strategy that involves a message $D = [a, b]$ with $b - a > 2\pi^R$.
In response to $D$, the manager reports $r$, which falls in the upper part of $[a, b]$. 
Then the auditor with realization $X\in [a,r-\pi^R]$ finds the report $r$ unacceptable and rejects it. 
Now consider another communication strategy that differs only on $D$ by splitting $D$ into $[a,r-\pi^R]$ and $[r-\pi^R, b]$.
In the lower subinterval, the manager proposes a report that is acceptable for some $X\in[a,r-\pi^R]$.
Thus the auditor's payoff improves conditional on $X\in[a,r-\pi^R]$.
When $X\in [r-\pi^R,b]$, the manager's report may rise above $r$, potentially reducing the auditor's payoff there.
As the lemma's proof shows, however, this negative ``externality'' is negligible, because the manager's incentive to inflate the report is not sensitive to the lower end of the interval.

In an auditor-optimal outcome, some message may lead to an unacceptable report, because the manager's default report is constant at $r=0$.
If the realized $X$ might fall well below zero, the manager would report $r=0$ regardless of how the auditor communicates.
\zcref{lem:maximum-acceptance-lemma} says that the auditor can bundle all such information into one interval with zero acceptance probability.

\subsubsection{No-Gambling Constraint}
Owing to the Maximal Acceptance lemma, our task is now to identify messages that lead to acceptable reports.
Thus I turn back to the manager's reporting problem.
I analyze when the manager chooses the safe option over the risky one.

In order to ensure that there is a report that is acceptable with probability one, the interval length must be no greater than $2\pi^R$.
This is not sufficient to ensure no gambling.
In \zcref{ex:RP-Unif}, we have seen that the message $D=[a,b]$ must satisfy $a \ge (b -\pi^R)/2$ in addition to $b-a\le 2\pi^R$.
To discourage the manager from selecting the risky option, the auditor cannot be too vague: for each right endpoint $b$, the left endpoint $a$ must be sufficiently large.

Let $\Gamma_G(b) < b$ denote the minimum value of $a$ for which the manager, conditional on receiving message $D=[a,b]$, selects the safe option.
\footnote{\zcref{lem:reporting-acceptance} shows that such a value exists.}
For each right endpoint $b > -\pi^R$, the left endpoint $a$ must satisfy 
\begin{equation}\label{eq:accept-wpone-constraint}
    a \ge \Gamma(b)\coloneq \max \{\Gamma_G(b), b - 2 \pi^R\}.
\end{equation}
I refer to the condition $a \ge \Gamma_G(b)$ as the \textit{no-gambling constraint}.
In the case of a uniform distribution (\zcref{ex:RP-Unif}), this constraint simplifies to $\Gamma_G(b) = (b -\pi^R)/2$.
I call the combined full constraint \eqref{eq:accept-wpone-constraint} the \textit{acceptance constraint}.

When $\Gamma_G(b)\ge b-2\pi^R$, I say that the no-gambling constraint is \textit{relevant}.
In this case, the auditor must provide sufficiently precise information to deter the manager from gambling.
In contrast, when the no-gambling constraint is irrelevant, the auditor needs only to ensure that the interval length is no greater than $2\pi^R$, irrespective of the location of the interval.
The following lemma characterizes when the no-gambling constraint is relevant.

\begin{lemma}\label{lem:no-gambling-const}
    Consider a general distribution of $X$.
    For each right endpoint $b>-\pi^R$, the no-gambling constraint is relevant if and only if $b$ is sufficiently low.
    Specifically, there exists a unique $\hat{b}\in (-\pi^R, \Xh]$ such that $\Gamma_G(b) \ge b-2\pi^R$ if and only if $b \le \hat{b}$.
\end{lemma}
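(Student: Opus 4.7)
The plan is to translate the no-gambling constraint into a single-variable sign condition and then exploit log-concavity to secure a single-crossing property. From the analysis in \zcref{sec:RP-General}, the manager's optimal report is $r^{*} = \max\{a + \pi^R,\, r^{+}(b)\}$, with $r^{+}(b)$ pinned down by the first-order condition \eqref{eq:risky-option-FOC}, which is independent of $a$. The manager chooses the safe option if and only if $a + \pi^R \ge r^{+}(b)$, so $\Gamma_G(b) = r^{+}(b) - \pi^R$, and the no-gambling constraint $\Gamma_G(b) \ge b - 2\pi^R$ is equivalent to $r^{+}(b) \ge b - \pi^R$. Since $r^{+}(b)$ is the unique maximizer of $\phi(r;b) \coloneq r\,[F(b) - F(r - \pi^R)]$, this inequality holds iff $\phi_r(b - \pi^R;b) \ge 0$, i.e.,
\[
\Psi(b) \coloneq F(b) - F(b - 2\pi^R) - (b - \pi^R)\,f(b - 2\pi^R) \;\ge\; 0.
\]

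Next, I recast $\Psi$ to isolate the role of log-concavity. Substituting $v = b - 2\pi^R$ and defining
\[
K(v) \coloneq \frac{F(v + 2\pi^R) - F(v)}{f(v)} = \int_{0}^{2\pi^R} \frac{f(v + \delta)}{f(v)}\,d\delta,
\]
one obtains $\Psi(b) = f(v)\,[K(v) - (v + \pi^R)]$ wherever $f(v) > 0$, so $\Psi(b) \ge 0$ iff $K(v) \ge v + \pi^R$. The core step is to show that $K$ is strictly decreasing. For each fixed $\delta \ge 0$,
\[
\log\frac{f(v + \delta)}{f(v)} = \int_{v}^{v + \delta} (\log f)'(s)\,ds,
\]
and this quantity is strictly decreasing in $v$ because $(\log f)'$ is decreasing by log-concavity of $f$. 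Integrating over $\delta \in [0, 2\pi^R]$ preserves strict monotonicity, so $K(v)$ strictly decreases; hence $K(v) - (v + \pi^R)$ strictly decreases and has at most one zero.

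The threshold and its location then follow from boundary analysis. On the subrange $b \in (-\pi^R,\, \Xl + 2\pi^R)$, where $b - 2\pi^R$ sits below the support, $\Psi(b) = F(b) > 0$, so the no-gambling constraint is automatically relevant there. For $b \in [\Xl + 2\pi^R,\, \Xh]$, strict monotonicity of $K(v) - (v + \pi^R)$ yields either a unique zero $v^{*}$, giving $\hat b \coloneq v^{*} + 2\pi^R \in (-\pi^R, \Xh)$, or a nonnegative sign throughout, giving $\hat b \coloneq \Xh$. In the unbounded-support case, $K(v)$ stays bounded once $v$ is past the mode (because $f(v + \delta)/f(v) \le 1$ there), while $v + \pi^R \to \infty$, so the zero always exists.

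The main obstacle is establishing that $K(v)$ is strictly decreasing; this is the single place where log-concavity of $f$ does genuine work. Everything else---identifying $\Gamma_G(b) = r^{+}(b) - \pi^R$, reducing the no-gambling constraint to the sign of $\Psi$, and piecing together boundary behavior---is a direct consequence of the material developed in \zcref{sec:RP-General}.
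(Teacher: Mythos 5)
Your proof is correct, but it travels a genuinely different route from the paper's. The paper works directly with the implicit definition of $\Gamma_G(b)$ as the root of $a+\pi^R=(F(b)-F(a))/f(a)$: it sets $h(b)\coloneq\Gamma_G(b)-(b-2\pi^R)$, checks $h>0$ near $b=-\pi^R$, and shows $h$ is decreasing by implicitly differentiating to get $\partial\Gamma_G/\partial b<1$, which reduces to the log-concavity inequality $f(b)\le f(a)+(F(b)-F(a))f'(a)/f(a)$. You instead exploit the identity $\Gamma_G(b)=r^+(b)-\pi^R$ (which does follow from \eqref{eq:risky-option-FOC} and the single-peakedness of the risky objective established in the proof of \zcref{lem:reporting-problem-solution}), convert the condition into the sign of the risky objective's derivative at $r=b-\pi^R$, and obtain single crossing from the monotonicity of $K(v)=\int_0^{2\pi^R}f(v+\delta)/f(v)\,d\delta$. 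Your version buys an explicit closed-form test $\Psi(b)\ge 0$ and avoids implicit differentiation (hence needs no smoothness of $\Gamma_G$); the paper's version buys the quantitative bound $\partial\Gamma_G/\partial b<1$, which is informative in its own right. Two small calibration points: (i) $K$ is only \emph{weakly} decreasing under the paper's maintained (weak) log-concavity --- for the uniform distribution $K$ is constant --- but your conclusion survives because the affine term $-(v+\pi^R)$ supplies the strictness of the single crossing, so you should not lean on strict monotonicity of $K$ itself; (ii) your left-boundary analysis covers the case $b-2\pi^R<\Xl$, but when $\Xl<-3\pi^R$ the relevant limit is $b\downarrow-\pi^R$ with $b-2\pi^R$ interior to the support, where $\Psi(-\pi^R)=F(-\pi^R)-F(-3\pi^R)+2\pi^R f(-3\pi^R)>0$ still holds --- worth stating to match the paper's computation $\lim_{b\downarrow-\pi^R}h(b)=2\pi^R>0$. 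Neither point is a gap in substance.
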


According to this lemma, for high realizations of $X$ (i.e., for $b$ high enough), the no-gambling constraint is irrelevant.
This is because of the location effect discussed in \zcref{sec:RP-General}.
The risky option becomes more attractive as the message shifts rightward at a rate slower than the safe option does, because the risky option is rejected with some probability.

The takeaway from \zcref{lem:no-gambling-const} is that the auditor can afford to be vaguer on the right tail of the distribution of $X$.
This ``vaguer on the right'' principle plays a key role in equilibrium communication strategies.

\subsubsection{The Optimal Partition}
From the Maximum Acceptance property and the acceptance constraint, I can now reformulate the auditor's optimal partition problem:
\begin{equation}\tag{$\mathrm{OP}'$}\label{eq:optimal-partition-prob-relaxed}
    \begin{aligned}
        \min_{\mathcal{D}} \quad & \sum_{i\in I} \prob(X\in D_i) \E[(r_i - X)^2 \mid X\in D_i], \\
        \mathrm{s.t.} \quad & d_i \ge \Gamma(d_{i+1})=\max\{\Gamma_G(d_{i+1}),d_{i+1}-2\pi^R\}, \forall i\ge 1.
    \end{aligned}
\end{equation}
The auditor seeks to minimize the expected loss from accepting the manager's report. 
The constraints ensure that the manager does not select the risky option over the safe option. 
The precise form of $\Gamma$ depends on the distribution of $X$.

\begin{theorem}\label{thm:optimal-partition}
    Let $\mathcal{D}^*$ be a solution to \ref{eq:optimal-partition-prob-relaxed}.
    A communication strategy inducing the partition $\mathcal{D}^*$ constitutes a vague-communication equilibrium.
\end{theorem}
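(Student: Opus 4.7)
The plan is to exhibit a strategy profile and belief system whose message rule induces the partition $\mathcal{D}^*$, and then verify each of the four PBE conditions of \zcref{def:PBE}. The construction I propose is: the auditor with realization $X \in D_i^*$ sends the message $D_i^*$; the manager, upon observing an on-path message $D_i^*$, reports the safe option $r_i = d_i^* + \pi^R$ (with the convention $r_i = 0$ for the at-most-one interval of zero acceptance probability admitted by \zcref{lem:maximum-acceptance-lemma}); and the auditor uses the acceptance rule $\alpha(r,X)=1 \iff |r-X|\le \pi^R$. On-path beliefs follow Bayes' rule. For every off-path message $D' = [a',b']$, I would specify \emph{wishful} off-path beliefs that place all mass on $X = b'$.

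Given this construction, three of the four PBE conditions are almost immediate. The auditor's acceptance rule is optimal by definition of her payoff. The manager's on-path behavior is optimal because the constraint $d_i^* \ge \Gamma(d_{i+1}^*)$ in \eqref{eq:optimal-partition-prob-relaxed} ensures, by the definition of $\Gamma_G$, that the safe option is feasible and weakly dominates the risky option within each $D_i^*$; off path, the wishful belief degenerates to $X=b'$, so $\max_{r\ge 0}\, r\,\alpha(r,b')$ is attained at $r' = b'+\pi^R$. Bayesian consistency on-path holds by construction.

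The substantive step is ruling out profitable deviations by the auditor. Consider an auditor with $X \in D_i^*$ contemplating some other message $D' \ni X$. If $D'$ is on-path, then $D'$ equals some $D_j^*$; since $\mathcal{D}^*$ is a partition, such a $D_j^*$ must coincide with $D_i^*$ except possibly at shared boundary points, where the two neighboring messages yield identical payoffs. If $D' = [a',b']$ is off-path, the manager reports $r' = b'+\pi^R$; this report is rejected whenever $X < b'$ and leaves the auditor indifferent when $X = b'$, so the deviation payoff is at most $-(\pi^R)^2$. In contrast, the on-path payoff equals $-(r_i - X)^2 \ge -(\pi^R)^2$ because the feasibility half of $\Gamma$ guarantees $d_{i+1}^* - d_i^* \le 2\pi^R$, and hence $|r_i - X| \le \pi^R$ for every $X \in D_i^*$. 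Thus no deviation is strictly profitable.

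The hard part will be verifying that the wishful off-path belief works uniformly across every conceivable off-path message---including vague intervals that could violate the acceptance or no-gambling constraint of \zcref{lem:no-gambling-const}---and that the resulting deviation payoff never strictly exceeds the prescribed on-path payoff. Once one observes that the $\Gamma$ constraint is precisely what guarantees the safe-option report is acceptable for every $X \in D_i^*$, and that degenerate off-path beliefs are admissible under PBE (since they concern only off-path information sets), the verification closes cleanly.
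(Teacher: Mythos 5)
Your proposal is correct and follows essentially the same route as the paper: specify wishful off-path beliefs (all mass on $\sup D'$), note that truthfulness forces $\sup D' \ge X$ so any off-path deviation induces the report $\sup D' + \pi^R$ and a payoff of at most $-(\pi^R)^2$, while the acceptance constraint in \eqref{eq:optimal-partition-prob-relaxed} guarantees the on-path payoff is at least $-(\pi^R)^2$. The additional verification of the manager's best response and Bayesian consistency that you spell out is routine and consistent with what the paper leaves implicit.
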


This theorem establishes that the auditor-optimal communication strategy is indeed part of an equilibrium.
In this equilibrium, the manager responds to an off-path message based on his most favorable interpretation of the message (see the appendix for details).
This in turn deters the auditor from deviating to a different message after observing $X$.

The auditor-optimal equilibrium highlights the strategic power of vague communication. 
Under this equilibrium, the auditor carefully balances the tension between being overly transparent and excessively vague. 
To clarify this point, I now explicitly construct the optimal partition.

\subsection{Constructing the Optimal Partition}

\subsubsection{A Direct Approach: The Uniform Case}
Instead of solving \ref{eq:optimal-partition-prob-relaxed} in full generality, I begin with the case of a uniform distribution.
The analytical tractability allows us to discern the key economic forces at play.
For simplicity, I assume that $\Xl \ge -\pi^R$, which ensures that the acceptance probability is one in the auditor-optimal partition (\zcref{cor:accept-wpone-cdn}).

I start with rewriting the optimal partition problem \ref{eq:optimal-partition-prob-relaxed} using the properties of the uniform distribution.
From \eqref{eq:uniform-RP-sol}, the acceptance constraint is $\Gamma(b)=\max\{ b-2\pi^R, (b-\pi^R)/2 \}$.
The expected loss in each interval is $\E[(r_i - X)^2 \mid X\in D_i]=\int_{d_i}^{d_{i+1}} (r_i - x)^2 /(d_{i+1}-d_i)\, dx$.
After I compute this integral, the problem \ref{eq:optimal-partition-prob-relaxed} reads as follows:
\begin{equation}\label{eq:optimal-partition-prob-uniform}
    \begin{aligned}
        \min_{\mathcal{D}} \quad & \sum \frac{d_{i+1}-d_{i}}{\overline{X}-\underline{X}}\left[(\pi^{R})^{2}-\pi^{R}(d_{i+1}-d_{i})+\frac{1}{3}(d_{i+1}-d_{i})^{2}\right], \\
        \mathrm{s.t.} \quad & d_i \ge \max\{ d_{i+1}-2\pi^R, (d_{i+1}-\pi^R)/2 \},\forall i \ge 0.
    \end{aligned}
\end{equation}

The auditor addresses the gatekeeping expert's dilemma by choosing the appropriate amount of information she provides. 
The constraints $d_i \ge \max\{ d_{i+1}-2\pi^R, (d_{i+1}-\pi^R)/2 \}$ limit how vague the auditor can be without inducing the manager to gamble.
Within this bound, the auditor does not want to reveal too much, but she also wants to avoid being too vague.
As the auditor becomes vaguer (i.e., as $d_{i+1} - d_i$ grows), the expected loss initially falls, reflecting the value of vagueness in deterring gaming.
Beyond a certain point, the expected loss starts to rise, as excessive vagueness induces reports too far from $X$.

To solve \eqref{eq:optimal-partition-prob-uniform}, first suppose the no-gambling constraint is not relevant.
In this relaxed problem, the objective in \eqref{eq:optimal-partition-prob-uniform} and the constraint $d_{i+1}-d_i\le 2\pi ^R$ depend only on the interval length.
Hence, a uniform partition---one that divides the support into equal intervals---achieves the optimal solution.
Denote by $\Delta\coloneq d_{i+1}-d_i$ the interval length.
Represent the conditional expected loss as a function of $\Delta$ by $\ell(\Delta)\coloneq(\pi^R)^2-\pi^R \Delta + \Delta^2/3$.
The loss function $\ell$ is minimized at
\begin{equation}
    \Delta^\mathrm{ideal} \coloneq 1.5\pi^R,
\end{equation}
which satisfies the constraint $\Delta \le 2\pi ^R$.
Thus, ideally, the auditor would like to choose the uniform partition with interval length $\Delta^\mathrm{ideal}$.

However, $\Delta^\mathrm{ideal}$ may not be a feasible solution: intervals of length $\Delta^\mathrm{ideal}$ may not exactly cover the support of $X$.
To fit the support, the auditor must choose the uniform partition whose interval length $\Delta$ is closest to $\Delta^\mathrm{ideal}$.
The equilibrium interval length is therefore given by
\begin{equation}\label{eq:uniform-uniform-partition-problem}
    \Delta^{*}=\frac{\Xh-\Xl}{N^{*}}, \quad
    N^{*}=\argmin_{N\in\mathbb{N}} \left\{\;\Big| \frac{\Xh-\Xl}{N} - 1.5\pi^R \Big| \;\;\middle| \;\; \frac{\Xh-\Xl}{N}\le 2\pi^R\; \right\}.
\end{equation}

The solution involves a simple integer optimization problem.
For a given set of parameters, it is straightforward to solve the problem as illustrated in the following example.
\footnote{A fully explicit solution is provided in the online supplementary material (Appendix C.2).}

\begin{example}[Uniform Partition ($X\sim\mathcal{U}{[1,9]}$ and $\pi^R=1$)]
    \label{ex:uniform-uniform-partition}
    Consider the introductory example discussed in \zcref{sec:example}.
    Observe that $\Xh-\Xl=8$, $8/5=1.6,\text{ and }8/6=1.\dot{3}$.
    Thus the optimal uniform partition has size $N^* = 5$, with interval length $\Delta^* = 1.6$.
    The associated loss is $\ell(1.6)=19/75=0.25\dot{3}$, which is an improvement of about $70\%$ over the no-communication loss of $5/6$.
    \zcref{fig:partition-uniform-example} (right panel) shows the optimal partition. 
    The uniform partitioning satisfies the no-gambling constraint.
    At the lowest interval, this can be verified from $1 \ge (2.6 - 1)/2$.
    Since the no-gambling constraint becomes increasingly slack for higher intervals, it holds throughout the partition.
    Hence the uniform partition is indeed optimal.
\end{example}

So far we have ignored the no-gambling constraint, $d_i \ge (d_{i+1} - \pi^R)/2$.
In the example above, this was without loss.
When is this justified?
From \zcref{lem:no-gambling-const}, the no-gambling constraint is irrelevant when the lower bound of the support $\Xl$ is sufficiently high.
Under a uniform partition, the no-gambling constraint $d_i \ge (d_{i+1}-\pi^R)/2$ simplifies to $d_i+\pi^{R}\ge\Delta$. 
If $\Xl \ge \pi^R$, then $\Delta\le 2\pi^R$ alone guarantees that $d_i+\pi^R-\Delta \ge \Xl+\pi^R - 2\pi^R\ge 0$ for all $d_i\in [\Xl,\Xh)$.
Thus $\Xl \ge \pi^R$ suffices to ensure the optimality of the uniform partition.

When $\Xl < \pi^R$, the no-gambling constraint may bind, and the optimal partition may no longer be uniform.
When the realizations of $X$ are small, the manager's incentive to inflate the report is strong, so the auditor must provide more information to ensure an acceptable report.

In this scenario, the problem \eqref{eq:optimal-partition-prob-uniform} becomes an infinite-dimensional, nonlinear optimization problem. 
Unlike the uniform-partition case, where the problem reduces to selecting an optimal number and length of intervals, we must now consider all possible partitions of the support of $X$.
Explicit analytical solutions become generally infeasible. 
Nevertheless, under the uniform distribution, it is straightforward to solve the problem by computing the optimal value for each size $N$ of the partition.
The following example illustrates the solution when the no-gambling constraint binds.

\begin{example}[Non-Uniform Partition]
    Suppose that $X \sim \mathcal{U}[0,6]$ and $\pi^R=1$.
    The sufficient condition for a uniform partition, $\Xl \ge \pi^R$, is violated.
    Since $(6-0)/4 = 1.5=\Delta^{\mathrm{ideal}}$, the optimal uniform partition has size $N^* = 4$.
    In the leftmost interval $[0, 1.5]$, the no-gambling constraint is violated: from \eqref{eq:uniform-RP-sol}, the manager's report is $r=1.25$, while the safe option is $r=1$.
    To construct an auditor-optimal equilibrium, suppose that the no-gambling constraint is relevant only in the first interval $[0, d_1]$.
    Over the remaining interval $[d_1, 6]$, a uniform partition is optimal.
    The auditor-optimal equilibrium is then obtained by choosing $d_1$ to minimize expected loss, subject to the relevant constraints.
    The optimal cutoff is $d_1^*=1$ and yields the partition $\mathcal{D}^*=[0,1]\cup[1,1+\Delta]\cup[1+\Delta,1+2\Delta]\cup[1+2\Delta,6]$ with $\Delta=5/3\approx 1.67$. 
    Under this partition, the auditor's loss is $22/81\approx 0.27$. 
    By similar calculations, one can derive the optimal loss when the no-gambling constraint is relevant for two or more intervals and verify that the above solution is indeed optimal.
\end{example}

\subsubsection{A Dynamic Programming Approach}
Since the reporting problem \ref{eq:reporting-problem} lacks a closed-form solution outside the uniform-distribution case, it is challenging to explicitly solve for the auditor's optimal communication strategy.
Thus I adopt a dynamic programming approach to derive the auditor's strategy.
\footnote{\cite{aliVoluntaryDisclosurePersonalized2023} use a similar approach to derive the optimal segmentation of buyer types in a monopolistic screening setting.}
This method is more flexible and broadly applicable but offers limited economic intuition.
Accordingly, I relegate the details to the appendix.

Denote by $L(b)$ the auditor's expected loss in an auditor-optimal equilibrium when the original distribution of $X$ is truncated to $[\Xl, b]$.
Define $\ell(a,b)\coloneq \E[(r([a,b]) - X)^2 \mid X\in [a,b]]$ as the expected loss from accepting the manager's report when $X\in [a,b]$.

\begin{proposition}\label{prop:partition-DP}
    The optimal partition of $[\Xl,b]$ is characterized by 
    \begin{equation}\label{eq:DP-problem}
        L(b) = \min_{a \in [\Gamma(b), b]} \prob(X \ge a \mid X\le b) \ell(a,b) + \prob(X \le a \mid X \le b) L(a). 
    \end{equation}
    This functional equation admits a unique solution $L$.
\end{proposition}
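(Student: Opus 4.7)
The plan is to verify \eqref{eq:DP-problem} via the principle of optimality and then to establish uniqueness of its solution through a value-function bootstrap.

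For the Bellman equation itself, I would start from an auditor-optimal partition $\{[d_i,d_{i+1}]\}_{i=0}^{N-1}$ of the truncated support $[\Xl,b]$ with $d_N=b$, and set $a:=d_{N-1}$. The acceptance constraint in \ref{eq:optimal-partition-prob-relaxed} forces $a\ge\Gamma(d_N)=\Gamma(b)$, so $a$ is feasible in the minimization of \eqref{eq:DP-problem}. Conditioning on whether $X\in[a,b]$ or $X\in[\Xl,a]$ decomposes the expected loss into the two terms of the right-hand side, where the sub-partition contribution equals $L(a)$ by the principle of optimality: if the restriction of the partition to $[\Xl,a]$ were suboptimal, swapping it for a strictly better partition would improve the whole, contradicting optimality. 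This gives $L(b)\ge\text{RHS}$. The reverse direction follows from feasibility: for any $a\in[\Gamma(b),b]$, concatenating $[a,b]$ with an optimal partition of $[\Xl,a]$ is feasible and has total cost equal to the RHS-integrand at $a$; minimizing over $a$ yields $L(b)\le\text{RHS}$.

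For uniqueness, I would let $\tilde L$ be any bounded solution of \eqref{eq:DP-problem} and fix a selection $a^\star(\cdot)$ of the minimizer. Iterating $a^\star$ starting from $b$ generates a decreasing sequence $b_0=b>b_1>\cdots$ that induces a feasible partition of $[\Xl,b]$. Unrolling \eqref{eq:DP-problem} $n$ times expresses $\tilde L(b)$ as the partial cost of this partition plus a residual $\prob(X\le b_n\mid X\le b)\,\tilde L(b_n)$. Showing this residual vanishes as $n\to\infty$ identifies $\tilde L(b)$ with the cost of a feasible partition, hence $\tilde L(b)\ge L(b)$. Applying the same unrolling to any $\varepsilon$-optimal partition for $L$ and using the inequality $\tilde L\le\text{RHS-integrand}$ yields $\tilde L(b)\le L(b)+\varepsilon$, and $\varepsilon\to 0$ closes the loop.

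The main obstacle is the tail argument. The boundary choice $a^\star(b_n)=b_n$ is always trivially consistent with \eqref{eq:DP-problem} (both conditional probabilities degenerate), so one must select $a^\star$ as a strict minimizer to guarantee that $b_n$ decreases by a positive amount. A uniform lower bound on $b-\Gamma(b)$ on compact subsets of $(\Xl,\Xh]$, available because $\Gamma_G(b)<b$ by \zcref{lem:no-gambling-const}, then forces either termination in finitely many steps or convergence $b_n\downarrow b_\infty$. Uniform boundedness of $\tilde L$ by $(\pi^R)^2$ together with boundedness of $\ell$ controls the residual as $n\to\infty$, and consistency at the limit $b_\infty$ follows from applying \eqref{eq:DP-problem} to $\tilde L$ at $b_\infty$ itself.
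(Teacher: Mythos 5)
Your derivation of the functional equation itself is fine and matches the paper's first step in substance: the paper writes the sequence-problem objective $\Lambda(\mathcal{D})$ recursively and invokes the standard equivalence of the sequence problem and the Bellman equation, which is the same principle-of-optimality decomposition you give (modulo allowing countably infinite partitions and working with the infimum rather than assuming an optimal finite partition exists, both routine).

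The uniqueness argument is where you diverge from the paper, and your route has a genuine gap. The paper proves that the Bellman operator is a contraction on $\mathcal{C}_B([\Xl,\Xh])$, and the entire weight of that proof rests on \zcref{lem:contraction-bound}: a uniform bound $\prob(X\le \lambda(d_0)\mid X\le d_0)\le \beta<1$ on the continuation probability \emph{at the minimizer}, established (in the unbounded-support case) by a merging argument --- if the optimal cutoff $\lambda(b)$ stayed too close to $b$, the interval $[\lambda(b),b]$ could be profitably merged with its left neighbor. Your unrolling/verification argument needs exactly the same kind of control on the minimizer, and you correctly spot the obstruction: $a=b$ is always a (degenerate) minimizer of the right-hand side of \eqref{eq:DP-problem}, so the iteration $b_{n+1}=a^\star(b_n)$ can stall. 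But your proposed fix does not close this. A uniform lower bound on $b-\Gamma(b)$ bounds the \emph{feasible set} from below; it says nothing about where the \emph{minimizer} sits inside that set, so it cannot force $b_n$ to decrease by a positive amount. Worse, for some bounded solutions of the literal equation the degenerate point is the \emph{only} minimizer: e.g.\ a constant function $\tilde L\equiv c$ with $c$ strictly below $\min_a\ell(a,b)$-type values satisfies $\tilde L(b)\le \prob(X\ge a\mid X\le b)\,\ell(a,b)+\prob(X\le a\mid X\le b)\,\tilde L(a)$ for every $a$, with equality only at $a=b$; your selection of a ``strict minimizer'' then does not exist, the residual $\prob(X\le b_n\mid X\le b)\,\tilde L(b_n)$ never vanishes, and the identification of $\tilde L(b)$ with the cost of a feasible partition fails. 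Relatedly, your bound $\tilde L\le(\pi^R)^2$ is asserted for an arbitrary bounded solution but is not implied by the equation alone. What is missing is the analogue of \zcref{lem:contraction-bound}: an argument tying the minimizer to an actual optimal partition and showing, via the merging/optimality reasoning (which is also what handles the $\Xh=\infty$ case), that the continuation probability at that minimizer is uniformly bounded below one. With that lemma in hand, the contraction route the paper takes delivers existence and uniqueness in one stroke, whereas your unrolling would still need a separate argument that $b_n\downarrow\Xl$.
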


Let $\lambda(b)$ be the solution to \eqref{eq:DP-problem}. 
Then $[\lambda(b), b]$ forms the rightmost interval in the optimal partition of $[\Xl, b]$.
Starting from $\lambda(b)$, we solve \eqref{eq:DP-problem} again to identify the next interval, $[\lambda(\lambda(b)), \lambda(b)]$, and so on.
The sequence $\{\lambda^{(i)}(b)\}$ defined recursively by $\lambda^{(i+1)}(b) = \lambda(\lambda^{(i)}(b))$ gives the cutoffs that define the auditor-optimal partition of $[\Xl, b]$.
The limit $\lim_{b \to \Xh} L(b)$ is the auditor's expected loss in the original problem \ref{eq:optimal-partition-prob-relaxed}.

\section{Comparative Statics}\label{sec:compara-stat}
I now study how the auditor's optimal communication strategy changes with the auditor's independence ($\pi^R$) and the transaction characteristics ($\theta\,$).
In general, formal comparative statics are difficult, as the optimal partition problem \eqref{eq:optimal-partition-prob-relaxed} does not admit a closed-form solution.
An important exception is the case of uniform distributions.
Outside this special case, I rely on numerical simulations to illustrate the comparative statics.

\subsection{The Effect of Independence}
How does the independence parameter $\pi^R$ affect the auditor's optimal communication strategy?
Recall that a lower $\pi^R$ corresponds to a lower cost of rejection and thus a more independent auditor.
I analyze the equilibrium \textit{amount of information} the auditor provides under vague communication.
For a partitional communication strategy $\sigma$, define the amount of information by the overall reduction in the variance of $X$:
\begin{equation}\label{eq:info-amount}
    \info(\sigma)\coloneq \frac{\mathrm{Var}(X) - \E[\mathrm{Var}(X\mid \sigma(X))]}{\var(X)}.
\end{equation}
When the auditor communicates more precisely, each of $\mathrm{Var}(X\mid \sigma(X))$ becomes smaller and thus the amount of information is larger.
I normalize the amount of information by the prior variance, so that it ranges from $0$ (no information) to $1$ (full information).

The uniform distribution case admits a sharp characterization:
\begin{proposition}\label{prop:compara-pi-uniform}
    Suppose that $X$ follows a uniform distribution with $\Xl>0$. 
    The auditor communicates more precisely as she becomes more independent for $\pi^R \in (0, \Xl)$:
    (i)  the number of partitions is weakly decreasing in $\pi^R$, and 
    (ii) the amount of information, $\info(\sigma)$, is weakly decreasing in $\pi^R$.
\end{proposition}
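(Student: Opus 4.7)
The hypothesis $\pi^R<\Xl$ implies $\Xl\ge\pi^R$, the sufficient condition identified in the paragraph following Example~\ref{ex:uniform-uniform-partition} for the auditor-optimal strategy to be a uniform partition of size $N^\ast$ solving \eqref{eq:uniform-uniform-partition-problem}. Write $L\coloneq\Xh-\Xl$ and $\Delta^\ast(\pi^R)\coloneq L/N^\ast(\pi^R)$. My first move is to reduce part~(ii) to part~(i): under a uniform partition of size $N$, each cell has length $\Delta=L/N$ and the conditional distribution of $X$ on that cell is uniform, so its conditional variance is $\Delta^{2}/12$. Hence $\E[\var(X\mid\sigma(X))]=L^{2}/(12N^{2})$ while $\var(X)=L^{2}/12$, and \eqref{eq:info-amount} collapses to $\info(\sigma)=1-1/N^{2}$, which is strictly increasing in $N$. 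The whole proof therefore reduces to showing that $N^\ast(\pi^R)$ is weakly decreasing in $\pi^R$ on $(0,\Xl)$, equivalently that $\Delta^\ast(\pi^R)$ is weakly increasing.

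To establish the monotonicity of $\Delta^\ast$, I would fix $0<\pi_{1}<\pi_{2}<\Xl$, put $\Delta_{i}\coloneq\Delta^\ast(\pi_{i})$, and split into two cases. First, if $\Delta_{2}>2\pi_{1}$, feasibility of $\Delta_{1}$ at $\pi_{1}$ gives $\Delta_{1}\le 2\pi_{1}<\Delta_{2}$, and we are done. In the substantive case $\Delta_{2}\le 2\pi_{1}$, both $\Delta_{1}$ and $\Delta_{2}$ lie in the common feasible set $\{L/N:L/N\le 2\pi_{1}\}$, so the revealed-preference optimality at each $\pi_{i}$ yields
\begin{equation*}
  |\Delta_{1}-1.5\pi_{1}|\le|\Delta_{2}-1.5\pi_{1}|,\qquad
  |\Delta_{2}-1.5\pi_{2}|\le|\Delta_{1}-1.5\pi_{2}|.
\end{equation*}
Adding these produces $h(\Delta_{1})\le h(\Delta_{2})$, where $h(\Delta)\coloneq|\Delta-1.5\pi_{1}|-|\Delta-1.5\pi_{2}|$ is piecewise linear with slopes $0$, $2$, $0$ on the three intervals cut by $1.5\pi_{1}<1.5\pi_{2}$, hence weakly increasing.

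Suppose for contradiction that $\Delta_{1}>\Delta_{2}$. Then monotonicity of $h$ forces $h(\Delta_{1})=h(\Delta_{2})$, placing both points in a single flat region of $h$. On the left flat region ($\Delta\le 1.5\pi_{1}$) the second revealed-preference inequality collapses to $\Delta_{1}\le\Delta_{2}$; on the right flat region ($\Delta\ge 1.5\pi_{2}$) the first collapses to $\Delta_{1}\le\Delta_{2}$. Either way we contradict $\Delta_{1}>\Delta_{2}$, so $\Delta_{1}\le\Delta_{2}$, as required. The main obstacle is exactly this second case: because the feasible set $\{L/N\}$ is discrete, no calculus or envelope argument is available, and the monotonicity of $\Delta^\ast$ must be extracted from the submodularity-style revealed-preference device above. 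Ties in \eqref{eq:uniform-uniform-partition-problem} are harmless, as the proposition asserts only weak monotonicity, which holds under any consistent selection of $N^\ast$.
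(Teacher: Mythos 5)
Your proof is correct and follows essentially the same route as the paper: part (ii) is reduced to part (i) via the posterior variance $(\Delta^*)^2/12$ of a uniform partition, and part (i) is a monotone-comparative-statics fact about $\Delta^*(\pi^R)$. The only difference is cosmetic---the paper notes that the equivalent quadratic loss $\ell(\Delta;\pi^R)$ has strictly decreasing differences and the feasible set is increasing in the strong set order, then cites Topkis/Milgrom--Shannon, whereas you unroll that theorem's revealed-preference proof by hand (including the case split that handles the expanding constraint set).
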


The result is intuitive.
A more independent auditor is more strict in rejecting the manager's report.
Since the optimal communication strategy must ensure that the manager's report is always acceptable, the auditor must partition the support of $X$ into smaller intervals.
Monotone comparative statics formalize this intuition \citep{milgromMonotoneComparativeStatics1994}.
The condition $\pi^R\in(0,\Xl)$ guarantees that a uniform partition is optimal.

The second part is a direct consequence of the first part.
The posterior variance is the same for each possible message $D=[d_i,d_{i+1}]$ and is the variance of the uniform distribution over $D$:
\begin{equation}
    \mathrm{Var}(X\mid D_i) = \frac{(d_{i+1}-d_i)^2}{12} = \frac{(\Delta^{*})^2}{12}.
\end{equation}
If the number of intervals is weakly decreasing in $\pi^R$ (part (i)), then the partition size $\Delta^{*}$ is weakly increasing in $\pi^R$.
Therefore, $\mathrm{Var}(X\mid D_i)$ is weakly increasing in $\pi^R$.

What if a uniform partition is not optimal?
In that case, the no-gambling constraint $d_i\ge (d_{i+1}-\pi^R)/2$ becomes crucial.
The auditor must depart from a uniform partition to ensure that the manager's report is always acceptable.
To illustrate, consider a small increase in $\pi^R$.
The no-gambling constraint becomes easier to satisfy, as the auditor becomes willing to accept a wider range of reports.
Let $\mathcal{D}^\mathrm{ideal}$ be the solution to the relaxed version of the optimal partition problem \ref{eq:optimal-partition-prob-relaxed}, in which the no-gambling constraint is ignored.
Since a uniform partition is optimal under the relaxed problem, denote by $N^\mathrm{ideal}$ the number of partitions in $\mathcal{D}^\mathrm{ideal}$.
When $\pi^R$ rises but $N^\mathrm{ideal}$ does not change, the actual partition $\mathcal{D}^*$ moves toward the ideal uniform partition $\mathcal{D}^\mathrm{ideal}$.
Consequently, the posterior variance decreases (more information).

\zcref{fig:uniform-nonuniform-compara} illustrates how the independence parameter $\pi^R$ affects the equilibrium communication strategy.
The left column shows the auditor's optimal partitioning (top) and the amount of information as $\pi^R$ varies.
Consistent with \zcref{prop:compara-pi-uniform}, the partition becomes coarser as the auditor becomes less independent, and the amount of information correspondingly decreases.
When a uniform partition is not feasible, the right column highlights a non-monotonic relationship between $\pi^R$ and the amount of information.
When the auditor becomes less independent, if the number of partitions stays the same (e.g., $\pi^R = 2.3$ to $\pi^R=2.5$), then the partition approaches the ideal uniform partition. 
As a result, the amount of information increases in response to an increase in $\pi^R$.

The mechanics described above extend beyond the uniform distribution cases.
The distortion from the relaxed partition $\mathcal{D}^\mathrm{ideal}$ occurs when the no-gambling constraint is relevant.
Consider a small decrease in $\pi^R$ (i.e., the auditor becomes more independent).
The auditor is willing to accept a narrower range of reports, so she must be more precise.
This effect is stronger when $X$ is low than when $X$ is high, because the no-gambling constraint is relevant only for small realizations of $X$ (\zcref{lem:no-gambling-const}).
As a result, the auditor can now afford to be vaguer when $X$ is high, because there is more room to adjust the communication strategy to be closer to $\mathcal{D}^\mathrm{ideal}$.
In other words, as $\pi^R$ decreases, the distortion from $\mathcal{D}^\mathrm{ideal}$ may shrink for the right tail of the distribution of $X$.
When the ``vaguer on the right'' effect dominates, the amount of information decreases as the auditor becomes more independent.

The preceding discussion suggests that the negative effect of increased independence on the auditor's information provision is more likely when the distribution of $X$ places greater mass near the origin, where the no-gambling constraint is relevant.
The top panels of \zcref{fig:normal-compara} illustrate this point for the case of a normal distribution.
The left panel plots the amount of information against $\pi^R$ when the mean of $X$ is low; the right panel does so when the mean is high.
In the low-mean case, as the auditor becomes less independent (i.e., as $\pi^R$ increases), the amount of information first rises, then declines.
In the high-mean case, the region of $\pi^R$ in which information increases as the auditor becomes less independent virtually disappears.

\begin{figure}[ht]
    \centering
    \caption{Comparative Statics: Uniform vs. Non-Uniform Partition}
    \label{fig:uniform-nonuniform-compara}
    \scalebox{.9}{%
        \includegraphics[alt={Comparative Statics Illustration for Uniform Distribution}]{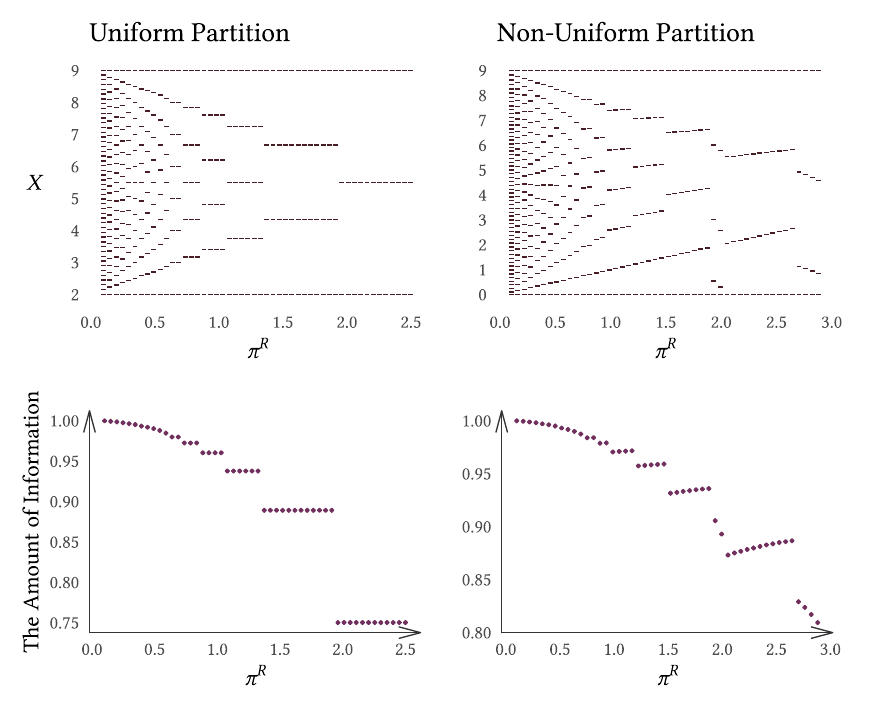}
    }
    \floatfoot{Note: The figure compares the auditor's optimal partitioning and the amount of information, defined in \eqref{eq:info-amount}, under uniform distributions. The left column corresponds to the case of a uniform partition ($X\sim\mathcal{U}[2,9]$), the right column to one with a non-uniform partition ($X\sim\mathcal{U}[0,9]$).
    The top panels show the partition structure as a function of the independence parameter $\pi^R$; each horizontal segment represents an interval. The bottom panels plot the amount of information against $\pi^R$.
    The amount of information is normalized by the prior variance, so that it ranges from $0$ to $1$.
    }
\end{figure}

\subsection{The Effect of Transaction Complexity}
So far I have fixed the transaction characteristics $\theta$ (i.e., the distribution of $X$).
Now I analyze how $\theta$ affects the equilibrium communication.
I vary the variance of $X$ while holding its mean fixed.
A higher variance of $X$ corresponds to greater transaction complexity and thus greater importance of the auditor expertise.
More complex transactions, such as mergers and acquisitions, require greater judgment and estimation to apply relevant standards.
Consequently, the auditor's preferred report has more uncertainty ex ante, and the auditor has more to guide the manager on.

Specifically, I consider the family of distributions $\{F(\cdot\,;\theta)\}_{\theta \in \Theta},\, \Theta \coloneq[0,\infty)$ with differing variances:
\begin{equation}
    \mathrm{Var}(X\mid \theta) \text{ is increasing in } \theta.
\end{equation}
Higher values of $\theta$ correspond to more complex transactions and, consequently, to greater information asymmetry between the auditor and the manager.
For concreteness, I consider a normal distribution case: $X\sim \mathcal{N}(\mu,\sigma^2_\theta)$, where $\sigma^2\coloneq \sigma_0^2 (1+ \theta)$ and $\sigma_0^2>0$ is the constant baseline variance.

There are several forces at play.
\footnote{The logic described applies in general to any strictly log-concave distribution, which is unimodal.}
Suppose that the mean of $X$ is low, say $\mu<0$.
Fix the equilibrium communication strategy and consider a marginal increase in $\sigma_\theta$. 
Recall that the no-gambling constraint is relevant only when $X$ lies near the origin, where the auditor must be relatively precise.
When the mean of $X$ is low, increasing the variance shifts probability mass away from the origin and into the right tail, where the no-gambling constraint is irrelevant.
As a result, under the fixed communication strategy, the new distribution places more weight on regions where the auditor is less precise. 
The amount of information thus decreases relative to the increase in $\sigma_\theta$.
I call this the \textit{statistical effect}.

Second, consider how the equilibrium communication strategy itself changes with $\sigma_\theta$.
The key is how transaction complexity alters the manager's incentive.
If $[a,b]$ lies in the right tail, an increase in the variance of $X$ pushes more probability mass to the right.
This makes the manager more likely to inflate his report, prompting the auditor to provide more information. 
When $\mu$ is low, a majority of messages fall to the right of the mean.
Consequently, increasing $\sigma_\theta$ encourages gambling by the manager, and the auditor must provide more information on average.
I call this the \textit{strategic effect}.

In summary, the two forces work in opposite directions.
The statistical effect is the mechanical impact of altering the prior distribution; holding the communication strategy fixed, a change in $\sigma_\theta$ shifts the probability mass between regions where the no-gambling constraint is relevant and where it is not.
The strategic effect arises from the manager's incentive to inflate his report; a change in $\sigma_\theta$ alters the manager's incentive, which in turn induces the auditor to adjust her communication strategy.
Which effect dominates depends on the mean of $X$.

The bottom panels of \zcref{fig:normal-compara} illustrate the comparative statics.
When $\mu$ is low, most realizations of $X$ lie in the right tail.
The statistical effect shifts probability mass further to the right, where the no-gambling constraint is irrelevant and the auditor is relatively vague.
This effect initially dominates, and the amount of information declines as $\sigma_\theta$ rises (relative to the increase in $\sigma_\theta$). 
The strategic effect, by contrast, strengthens the manager's incentive to gamble and forces the auditor to be more precise.
As a result, for some values of $\sigma_\theta$, the adjustment in the communication strategy dominates, and the amount of information may increase.

Alternatively, when $\mu$ is high, most realizations of $X$ fall in the left tail.
Thus the statistical effect shifts mass from the precise region to the vague region, decreasing the overall amount of information.
The counteracting strategic force arises, as added complexity discourages the manager from gambling and allows the auditor to be vaguer.
For most values of $\sigma_\theta$, the statistical effect dominates, and the amount of information increases.

\begin{figure}[ht]
    \centering
    \caption{Comparative Statics: Normal Distribution}
    \label{fig:normal-compara}
    \scalebox{.85}{%
        \includegraphics[alt={Comparative Statics Illustration for Normal Distribution}]{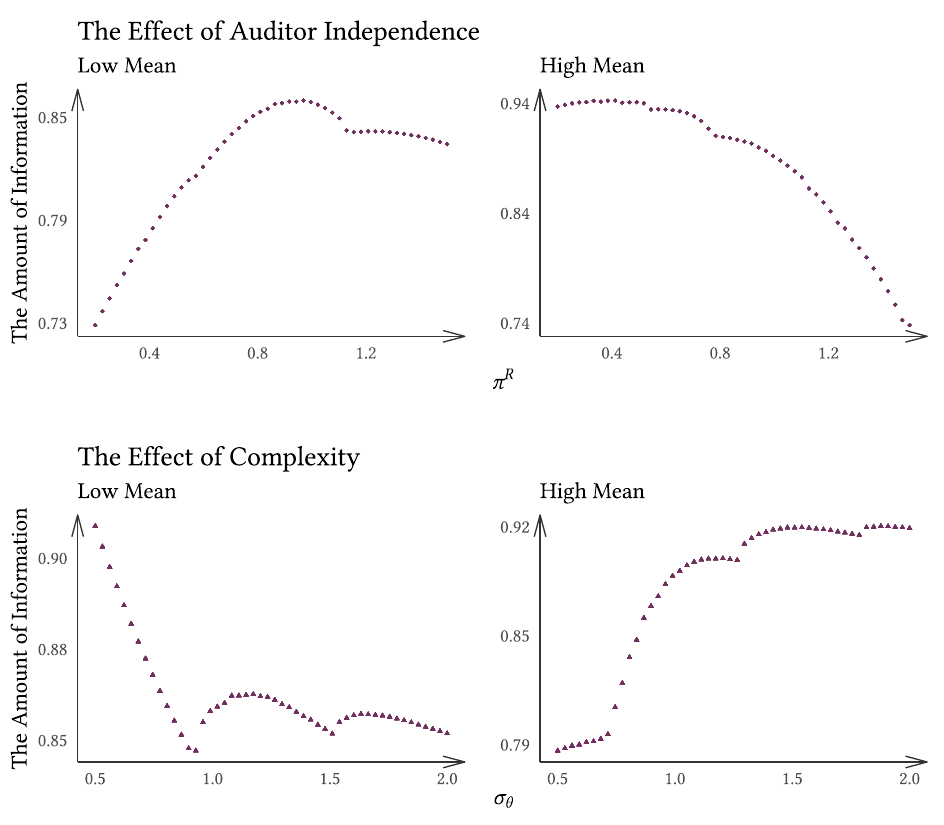}
    }
    \floatfoot{Note: The figure illustrates the comparative statics for normal distributions, $\mathcal{N}(\mu, \sigma^2_\theta)$. 
    The top panels plot the amount of information against $\pi^R$; the bottom panels against $\sigma_\theta$.
    The amount of information is normalized by the prior variance, so that it ranges from $0$ to $1$.
    In the top panels, the low-mean case is $\mathcal{N}(-0.5, 1)$, and the high-mean case is $\mathcal{N}(0.5, 1)$.
    In the bottom panels, the low and high means are again $-0.5$ and $0.5$, respectively, with $\pi^R=1$ fixed.
    }
\end{figure}

\subsection{Empirical Implications}
The analysis yields several empirical implications for financial auditing.
The model predicts a non-monotonic relationship between auditor independence and the amount of information the auditor conveys.
It also implies a nuanced link between transaction complexity and the auditor's communication strategy.
Normally researchers do not observe auditor--client communications. 
However, insofar as the auditor's expertise improves the accuracy of financial statements, financial reporting quality measures---such as restatements---may serve as proxies for the depth of auditor--manager communication.
Auditor independence could be proxied by high non-audit fees or extended auditor tenure \citep{defondReviewArchivalAuditing2014}.
\cite{hoitashMeasuringAccountingReporting2018} propose a measure of accounting reporting complexity based on 10-K filings.
\footnote{See also \cite{chychylaComplexityFinancialReporting2019} for a study examining how reporting complexity relates to reporting outcomes and auditor expertise.}
In addition, principles-based versus rules-based accounting standards could proxy for the degree of information asymmetry between auditor and manager \citep{caplanModelAuditingBrightLine2004,folsomPrinciplesBasedStandardsEarnings2017}.
Principles-based standards require more judgment and estimation and thus create greater information asymmetry.
Examining the relationship between reporting quality and these proxies for transaction complexity could shed new light on how auditor--manager communication shapes financial reporting.

\section{Concluding Remarks}\label{sec:conclusion}
\paragraph{Extensions}
In the main analysis, I make two key assumptions to isolate the core of the gatekeeping expert's dilemma.
First, I assume that the auditor's communication is truthful. 
However, this assumption is not crucial: since the auditor has a single-peaked preference, her incentive to mislead the manager is limited.
In the online supplementary material (Appendix D.1), I construct equilibria without the truthful-communication assumption and show that the auditor can still use vague language to guide the manager without inviting perfect gaming.

Second, I assume that the manager does not possess any private information about $X$.
In the online supplementary material (Appendix D.2), I relax this assumption.
The key insight is that the manager's private information weakens the auditor's influence.

\paragraph{Conclusion}
A gatekeeper is a unique institution characterized by veto power without direct control over the agent they oversee.
Often the gatekeeper has expertise and seeks to guide the agent's decisions.
Yet veto power alone cannot ensure that the agent acts in the gatekeeper's interest.
I develop a theory of a gatekeeping expert's dilemma: she wants to guide the agent with her expertise, but sharing too much knowledge invites gaming.
She resolves the dilemma by speaking vaguely---partitioning her information to guide decisions without revealing too much.
By leaving just enough uncertainty, she deters gaming while still influencing the agent's actions.
The paper thus offers a theory of influence not through command, but through the strategic use of vagueness.

\clearpage

\begin{singlespacing}
    \bibliographystyle{econ-econometrica}
    \bibliography{EvidenceAudit}
\end{singlespacing}

\newpage
\appendix
\singlespacing
\part*{\LARGE {Appendix}}
\addcontentsline{toc}{part}{Appendix}  

\section{Proofs}\label{appsec:Proofs}

This section provides the proofs of the results described in the main text.

\subsection*{Proof of \zcref{lem:reporting-problem-solution}}
If $b\le -\pi^R$, then $A(r)=0$ for all $r\ge 0$, and the unique solution to \ref{eq:reporting-problem} is $r=0$.
Thus, assume that $b>-\pi^R$.
I consider two cases, depending on whether $b-a>2\pi^R$ or not. 

\subsubsection*{Case 1. $b-a> 2\pi^R$}
In this case, the acceptance probability $A(r)$ is given by 
\begin{equation}\label{eq:acceptance-probability-large}
    A(r)
    =\begin{cases}
        \ab{F}(r+\pi^{R}) & \text{if } r\in[a-\pi^{R},a+\pi^{R}),\\
        \ab{F}(r+\pi^R)-\ab{F}(r-\pi^R) & \text{if } r\in[a+\pi^{R},b-\pi^{R}],\\
        1-\ab{F}(r-\pi^{R}) & \text{if } r\in(b-\pi^{R},b+\pi^{R}].
    \end{cases}
\end{equation}
Since $A(r)$ is increasing in $r$ on $[a-\pi^R, a + \pi^R)$, the solution to \ref{eq:reporting-problem} is at least $a + \pi^R$.
On  the region $r\ge a+\pi^R$, the acceptance probability is $A(r)=\ab{F}(\min\{r+\pi^R,b\})-\ab{F}(r-\pi^R)$.
Since the function $A(r)$ has a kink at $r=b-\pi^R$, simply solving the first-order condition of $\max rA(r)$ is not sufficient to characterize the solution.
I thus consider each of the smooth functions $\ab{F}(r+\pi^R)-\ab{F}(r-\pi^R)$ and $\ab{F}(b)-\ab{F}(r-\pi^R)$ separately.
I first prove the result assuming that $f$ is strictly log-concave.
Then I will return to the case in which $f$ is only weakly log-concave.

Consider the following relaxed maximization problems:
\begin{equation}
    \max_{r\ge 0} \,r ~ \left[\frac{F(r+\pi^R) - F(r-\pi^R)}{F(b)-F(a)}\right], \quad 
    \max_{r\ge 0} \,r ~ \left[\frac{F(b) - F(r-\pi^R)}{F(b)-F(a)}\right].
\end{equation}
Note that the problems do not restrict $r$ to be in the corresponding region for each case of \eqref{eq:acceptance-probability-large}.
\footnote{In addition, unlike \eqref{eq:acceptance-probability-large}, the acceptance probability is nonzero for the entire support of $X$, not just on $[a,b]$.}
The first-order condition of each problem reduces to the following fixed-point equations:
\begin{equation}
    r = h_1(r)\coloneq \frac{F(r + \pi^R) - F(r - \pi^R)}{f(r - \pi^R) - f(r + \pi^R)}, \quad
    r = h_2(r)\coloneq \frac{F(b) - F(r - \pi^R)}{f(r - \pi^R)}.
\end{equation}
Routine arguments based on log-concavity show that $h_2$ is monotonically decreasing in $r\le b+\pi^R$ \citep[see, e.g., ][]{bagnoliLogConcaveProbabilityIts2005}.
Thus, $h_2$ admits a unique fixed point $r^+_2 \in (0, b + \pi^R)$ satisfying $r^+_2 = h_2(r^+_2)$.
The function $h_1$ need not be monotone, as $f$ is in general not strictly decreasing everywhere. 
But strict log-concavity guarantees that there is a region where it is strictly decreasing:

\begin{claim}
    The function $r\mapsto f(r-\pi^R)-f(r+\pi^R)$ has a unique root $\hat{r}$.
\end{claim}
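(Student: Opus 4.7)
The plan is to establish the claim via two complementary monotonicity arguments built directly on strict log-concavity of $f$. Let $g(r) \coloneq f(r-\pi^R) - f(r+\pi^R)$ and consider the ratio $\rho(r) \coloneq f(r-\pi^R)/f(r+\pi^R)$ on the set where both evaluations are strictly positive. Since strict log-concavity means $(\log f)'$ is strictly decreasing, I compute
\begin{equation*}
    (\log \rho)'(r) = (\log f)'(r-\pi^R) - (\log f)'(r+\pi^R) > 0,
\end{equation*}
because $r-\pi^R < r+\pi^R$. Thus $\rho$ is strictly increasing wherever it is defined. Since $g(r)$ and $\rho(r)-1$ have the same sign, this delivers \emph{uniqueness}: $g$ can vanish at most once.

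For \emph{existence}, I would exploit that strict log-concavity of $f$ implies strict unimodality on the interior of the support: there is a unique mode $m$ such that $f$ is strictly increasing on $(\Xl, m)$ and strictly decreasing on $(m, \Xh)$. For $r$ with $r+\pi^R < m$ (so both shifted arguments lie to the left of $m$), monotonicity of $f$ on that side yields $f(r-\pi^R) < f(r+\pi^R)$, hence $g(r) < 0$. Symmetrically, for $r$ with $r - \pi^R > m$, we obtain $g(r) > 0$. By continuity of $g$ (inherited from continuous differentiability of $f$), the intermediate value theorem produces at least one root $\hat{r} \in [m-\pi^R, m+\pi^R]$, which by the first step is unique.

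The main obstacle is the boundary behavior when $X$ has bounded support $[\Xl, \Xh]$: outside $[\Xl,\Xh]$ the density vanishes, so $\rho$ is undefined and the monotonicity argument must be phrased on the open set where both $f(r-\pi^R)$ and $f(r+\pi^R)$ are strictly positive. I would handle this by noting that the standing assumption $\Xh - \Xl > 2\pi^R$ guarantees a nondegenerate interval of $r$ for which $[r-\pi^R, r+\pi^R] \subset (\Xl, \Xh)$, so the ratio is well-defined on a neighborhood of the mode. Outside that neighborhood, $g$ retains a definite sign (one of the two shifted densities drops to zero while the other remains positive), which is consistent with the monotone sign pattern already established and does not introduce new roots. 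The weakly log-concave case, which the excerpt defers, can then be recovered by a limiting argument from strictly log-concave approximations.
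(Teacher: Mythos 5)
Your uniqueness argument---that strict log-concavity makes the log-ratio $\log f(r-\pi^R) - \log f(r+\pi^R)$ strictly increasing---is exactly the paper's proof, which defines $w(r)\coloneq \log f(r-\pi^R)-\log f(r+\pi^R)$, observes $w'(r)>0$, and concludes. Your additional existence step via unimodality and the intermediate value theorem, and the care about boundary behavior on bounded supports, are refinements the paper's terse proof omits, but they do not constitute a different route.
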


\begin{proof}
    Define $w(r)\coloneq g(r-\pi^R) - g(r+\pi^R)$, where $g(r)=\log{f}(r)$. 
    The root of $w$ coincides with that of $ f(r-\pi^R)-f(r+\pi^R)$.
    Strict log-concavity implies that $w'(r) = g'(r-\pi^R) - g'(r+\pi^R)>0$.
    Hence, $w$ has a unique root.  
\end{proof}

Therefore, $h_1(r)>0$ for $r>\hat{r}$, and $\lim_{r\downarrow \hat{r}} h_1(r) = \infty$.  
Strict log-concavity then ensures that $h_1$ is monotonically decreasing for $r\ge \hat{r}$ and admits a unique fixed point $r_1^+ = h_1(r_1^+)$ in that region.

Now we turn to the original problem $\max_r rA(r)$.
Suppose that $r_1^+> r_2^+$ is true.
Then, from the first-order conditions, the problem has a unique solution:
\begin{equation}\label{eq:RP-solution-large}
    r^{*}=
    \begin{cases}
        a+\pi^R & \text{if }r_{1}^{+}<a+\pi^{R},\\
        r_{1}^{+} & \text{if } r_1^+ \in [a+\pi^R, b-\pi^R],\\
        b-\pi^{R} & \text{if } r_{2}^{+}\le b-\pi^{R}<r_{1}^{+},\\
        r_{2}^{+} & \text{if } r_{2}^{+} > b-\pi^{R}.
    \end{cases}
\end{equation}
Thus it suffices to show that $r^+_1 > r_2^+$.

\begin{claim}
    For $r \in (\hat{r}, \infty)$, we have $h_1(r) > h_2(r)$.
\end{claim}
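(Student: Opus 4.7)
The plan is to cross-multiply the inequality $h_1(r) > h_2(r)$ and reduce it to a strict monotonicity property of the truncated inverse hazard function
\begin{equation*}
\psi(x) \coloneq \frac{F(b) - F(x)}{f(x)}.
\end{equation*}
For $r > \hat{r}$, both denominators $f(r-\pi^R) - f(r+\pi^R)$ (positive by the defining property of $\hat{r}$) and $f(r-\pi^R)$ (positive on the support of $X$) are strictly positive, so cross-multiplication preserves the direction of the inequality.

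After expanding and collecting the terms that multiply $f(r-\pi^R)$, the inequality $h_1(r) > h_2(r)$ rearranges to
\begin{equation*}
\bigl[F(b) - F(r-\pi^R)\bigr]\, f(r+\pi^R) \;>\; \bigl[F(b) - F(r+\pi^R)\bigr]\, f(r-\pi^R),
\end{equation*}
which is equivalent to $\psi(r-\pi^R) > \psi(r+\pi^R)$. It therefore suffices to show that $\psi$ is strictly decreasing on the relevant range. For this I would invoke the classical fact---a consequence of the Prékopa--Leindler inequality \citep{bagnoliLogConcaveProbabilityIts2005}---that strict log-concavity of $f$ is inherited by the truncated upper tail $G(x) \coloneq F(b) - F(x) = \int_x^b f(t)\, dt$. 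Then $(\log G)'(x) = -f(x)/G(x)$ is strictly decreasing, so $f(x)/G(x)$ is strictly increasing, and consequently $\psi(x) = G(x)/f(x)$ is strictly decreasing; applying this at $r - \pi^R < r + \pi^R$ delivers the claim. Boundary cases in which $r + \pi^R$ exceeds $b$ (or the upper end of the support, where $f$ vanishes) are immediate, since there $\psi(r+\pi^R) \le 0 < \psi(r-\pi^R)$.

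The main obstacle is marshaling the right preservation-of-log-concavity statement in the two-sided truncated form $G(x) = F(b) - F(x)$, rather than the one-sided survival function $1 - F(x)$ that the paper's earlier monotonicity argument for $h_2$ already exploits. Once this mild extension---strict log-concavity of $f$ implies strict log-concavity of $G$---is invoked, the rest of the argument is purely algebraic, and strictness throughout is guaranteed by the strict log-concavity hypothesis under which the containing proof operates.
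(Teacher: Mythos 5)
Your proposal is correct and follows essentially the same route as the paper: cross-multiplying $h_1(r) > h_2(r)$ is algebraically the same manipulation as the paper's rewriting of $h_1(r)-h_2(r)$ as a single fraction, and both arguments reduce the claim to the strict monotonicity of $x \mapsto (F(b)-F(x))/f(x)$, which follows from log-concavity \citep{bagnoliLogConcaveProbabilityIts2005}. Your extra remark on the boundary case $r+\pi^R > b$ is a harmless refinement of the paper's blanket statement that the map is decreasing for $x \le b$.
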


\begin{proof}
    Rewrite the difference as 
    \begin{equation}
        h_1(r)-h_2(r)=\frac{\left[F(b)-F(r-\pi^{R})\right]/f(r-\pi^{R})-\left[F(b)-F(r+\pi^{R})\right]/f(r+\pi^{R})}{(f(r-\pi^{R})-f(r+\pi^{R}))/f(r+\pi^{R})}.
    \end{equation}
    Since $r > \hat{r}$, the denominator of the expression is positive.
    Log-concavity implies that $x\mapsto (F(b) - F(x))/f(x)$ is strictly decreasing for $x\le b$.
    Thus the numerator is positive as well.
\end{proof}

This result implies that $h_1(r^+_1) = r^+_1 > h_2(r^+_1)$.
Since $r-h_2(r)$ is strictly increasing, we conclude that $r_1^+>r_2^+$.

Finally, return to the case in which $f$ is only weakly log-concave. 
The previous analysis of $h_2$ only requires weak log-concavity; thus $\max r(1-\ab{F}(r-\pi^R))$ still admits a unique maximizer $r_2^+$.
The analysis of $h_1$ needs a minor modification.
Let $\overline{r}$ be the largest root of $f(r-\pi^R)-f(r+\pi^R)$.
By the continuity of the mapping, $h_1(r)$ admits a unique fixed point $\overline{r}_1^{+}$ on $(\overline{r},\infty)$.
Define 
\begin{equation}
    r_{1}^{+}\coloneq
    \begin{cases}
      \overline{r}_1^{+} & \text{if }\overline{r}<\infty,\\
      \infty & \text{otherwise}.
    \end{cases}
\end{equation}
The second case applies if $f$ is constant everywhere (i.e., uniform distribution).
The first-order condition implies that \eqref{eq:RP-solution-large} still holds with this modified definition of $r_1^+$.

\subsubsection*{Case 2. $b-a\le 2\pi^R$}
In this case, the acceptance probability is either $1$ or $1-\ab{F}(r-\pi^R)$ for all $r\ge a+\pi^R$.
The relaxed problem $\max_{r\ge 0} r(1-\ab{F}(r-\pi^R))$ is exactly the same as in the previous case.
Thus, the solution to the original problem $\max_r rA(r)$ is 
\begin{equation}
    r^{*}=
    \begin{cases}
       a+\pi^R & \text{if }r_{2}^{+}<a+\pi^{R},\\
       r_{2}^{+} & \text{if }r_{2}^{+}\ge a+\pi^{R},
    \end{cases}
\end{equation}
where $r_2^+$ is defined in the previous case (i.e., the unique fixed point of $r=h_2(r)$).
\qed

\subsection*{Proof of \zcref{prop:uniform-precise-ND}}
An optimal silence set satisfies
\begin{align}\label{eq:optimal-silence-problem}
    \min_{\nd}\, & \prob(X\in\nd)\E[\min\{|r_0-X|,\pi^R\}^2\mid X\in\nd]+\prob(X\notin\nd)(\pi^{R})^{2}\\
    \mathrm{s.t.}\, & |X-r_{0}|\ge\pi^{R}\quad\text{for all }X\notin\nd.
\end{align}

In the online supplementary material, I show that any equilibrium silence set is a connected set (Lemma C.1).
Thus, $\nd=[a,b]$ for some $a<b$, and \eqref{eq:optimal-silence-problem} reduces to a two-dimensional constrained optimization.
In the proof below, I reduce the optimization problem further to that of $b$ by showing that $a=\Xl$ without loss.
Then I show that $b=\Xh$ achieves the optimal value.

\begin{proof}
\step{Step 1. $a=\Xl$ without loss.}
Let $\nd=[a,b]$ for some $a$ and $b$ such that $\Xl\le a<b\le \Xh$.
The explicit solution to \ref{eq:reporting-problem} is given by \eqref{eq:uniform-RP-sol}.
If $b-a\le 2\pi^R$, then we have $r^*+\pi^R>b$, violating the equilibrium constraint \eqref{eq:equilibrium-constraint}.
Thus, $b-a>2\pi^R$ is necessary for an equilibrium.
The solution $r^*$ then does not depend on $a$.
Expanding the silence set only improves the auditor's expected payoff when the report is fixed, so $a=\Xl$ without loss.

\step{Step 2. Optimality of $b=\Xh$.}
Under $a=\Xl$, the auditor's expected loss is given by
\begin{equation}
    \int_{r_0 - \pi^R}^{b}(r_{0}-x)^{2}\frac{1}{\Xh-\Xl}dx+\left[\frac{r_0 - \pi^R -\Xl}{\Xh-\Xl}+\frac{\Xh-b}{\Xh-\Xl}\right](\pi^R)^{2}.
\end{equation}
If $b < 3\pi^R$, then $r^*+\pi^R=(b+3\pi^R)/2>b$, violating the equilibrium constraint \eqref{eq:equilibrium-constraint}.
Thus, only $b\ge \max\{3\pi^R, \Xl + 2\pi^R\}$ are feasible, and $r_0=b-\pi^R$
Without loss, I can assume that the right-hand side is smaller than $\Xh$.
\footnote{Otherwise $b=\Xh$ is the only feasible equilibrium, and the proof is complete.}
The auditor-optimal equilibrium satisfies
\begin{equation}
    \min_{b \ge  \max\{3\pi^R, \Xl + 2\pi^R\}}\int_{b - 2\pi^R}^{b}(b-\pi^{R}-x)^{2}dx + (\Xh - \Xl - 2\pi^R)(\pi^R)^{2}.
\end{equation}
The first integral term simplifies to $2(\pi^R)^3/3$, so the objective function is constant in $b$.
Therefore, $b=\Xh$ achieves the optimum, and $\nd=[\Xl,\Xh]$ is a solution to \eqref{eq:optimal-silence-problem}.
\end{proof}

\subsection*{Proof of \zcref{lem:maximum-acceptance-lemma}}

Before proving \zcref{lem:maximum-acceptance-lemma}, I first show that $\Gamma_G(b)$ in \eqref{eq:accept-wpone-constraint} is well-defined.
Let $\mathscr{G}(b)\coloneq\{a\ge \Xl \mid a<b \text{ and } |r([a,b]) - X| \le \pi^R, \forall X\in[a,b]\}$ be the set of left endpoints such that the manager's report under \hyperref[eq:reporting-problem]{$\mathrm{Report}(a,b)$} is always acceptable, conditional on the message $D=[a,b]$.
If $b \le -\pi^R$, then the manager's report is at most $r=0$, so there is no message that guarantees acceptance.
Thus I focus on the case $b > -\pi^R$.

\begin{lemma}\label{lem:reporting-acceptance}
    Suppose that $b > -\pi^R$. Then, for any $b>\Xl$, the set $\mathscr{G}(b)$ is nonempty, and $\min \mathscr{G}(b)$ exists.
\end{lemma}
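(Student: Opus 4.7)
The plan is to characterize $\mathscr{G}(b)$ explicitly as a left-closed, right-open interval, from which nonemptiness and the existence of a minimum follow immediately. Writing $r^* \coloneq r([a,b])$ and recalling from \zcref{lem:reporting-problem-solution} that $r^*$ is unique, the condition ``$|r^* - X|\le \pi^R$ for every $X\in[a,b]$'' is equivalent to $r^* \in [b - \pi^R, a + \pi^R]$. This target interval is nonempty precisely when $a \ge b - 2\pi^R$, which is therefore a first necessary condition on $a$.

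Next I would invoke the safe-versus-risky decomposition already present in the excerpt: for $b - a \le 2\pi^R$, one has $r^* = \max\{a + \pi^R, r^+\}$, where the risky option $r^+$ solves \eqref{eq:risky-option-FOC} and depends only on $b$, not on $a$. Whenever the manager gambles, $r^* = r^+ > a + \pi^R$, so the report leaves the admissible range $[b - \pi^R, a+\pi^R]$. Hence the second necessary and sufficient requirement is that the manager selects the safe option, i.e.\ $a + \pi^R \ge r^+$, equivalently $a \ge r^+ - \pi^R$. Combining this with $a\in[\Xl, b)$ gives the clean description
\begin{equation}
    \mathscr{G}(b) \;=\; \bigl[\,\max\{\Xl,\; b - 2\pi^R,\; r^+ - \pi^R\},\; b\,\bigr).
\end{equation}

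For nonemptiness it suffices that the lower endpoint lie strictly below $b$. The bounds $\Xl < b$ and $b - 2\pi^R < b$ are immediate, so the only content is $r^+ < b + \pi^R$. Plugging $r = b + \pi^R$ into \eqref{eq:risky-option-FOC} yields $0 = (b + \pi^R)\,f(b)$, which fails since $b + \pi^R > 0$ (using $b > -\pi^R$) and $f(b) > 0$ by log-concavity with full support. Since \zcref{lem:reporting-problem-solution} already supplies a unique solution $r^+$ in the relevant range, this boundary check forces $r^+ < b + \pi^R$, i.e.\ $r^+ - \pi^R < b$. Thus $\mathscr{G}(b)$ is a nondegenerate half-open interval, so $\min\mathscr{G}(b)$ exists and equals $\max\{\Xl,\, b - 2\pi^R,\, r^+ - \pi^R\}$; this also validates the definition $\Gamma_G(b) \coloneq r^+ - \pi^R$ used in the acceptance constraint \eqref{eq:accept-wpone-constraint}.

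There is no substantive obstacle: essentially everything is inherited from \zcref{lem:reporting-problem-solution}. The only genuinely new step is the boundary check $r^+ < b + \pi^R$, which is a one-line use of log-concavity. A minor subtlety is the tie-breaking at $a = r^+ - \pi^R$, but there $a + \pi^R = r^+$ so the safe and risky options coincide and the endpoint belongs to $\mathscr{G}(b)$ unambiguously, which is exactly why $\mathscr{G}(b)$ is closed on the left and $\min\mathscr{G}(b)$ (rather than merely $\inf$) exists.
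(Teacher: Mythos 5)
Your proof is correct and follows essentially the same route as the paper's: both reduce membership in $\mathscr{G}(b)$ to the length constraint $b-a\le 2\pi^R$ plus the no-gambling condition, and both extract the threshold from the log-concavity machinery of \zcref{lem:reporting-problem-solution}. The only difference is a change of variables: you identify the threshold as $r^+-\pi^R$ via the safe/risky decomposition (using that $r^+$ depends only on $b$), while the paper solves the equivalent marginal condition $a+\pi^R=(F(b)-F(a))/f(a)$ in $a$-space; the two coincide because $r\mapsto r\,(F(b)-F(r-\pi^R))$ is single-peaked.
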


\begin{proof}
    Consider a message $[a, b]$ for a fixed $b > -\pi^R$.
    From \eqref{eq:acceptance-probability} and \eqref{eq:acceptance-probability-large}, it is necessary that $b - a \le 2\pi^R$ to ensure that the induced report $r([a, b])$ is acceptable with probability one.
    Moreover, to ensure that the manager chooses the safe option (i.e., $r^* = a + \pi^R$), we require (see the proof of \zcref{lem:reporting-problem-solution})
    \begin{equation}
        \frac{\partial}{\partial r}\left[r(1-F_{[a,b]}(r-\pi^{R}))\right]_{r=a+\pi^{R}}\le 0,
    \end{equation}
    or equivalently,
    \begin{equation}\label{eq:no-gamble-condition}
        a+\pi^R \ge \frac{F(b)-F(a)}{f(a)}.
    \end{equation}
    I show that there exists a unique threshold $\Gamma_G(b)$ such that \eqref{eq:no-gamble-condition} holds if and only if $a\ge\Gamma_G(b)$.
    The right-hand side of \eqref{eq:no-gamble-condition} is strictly decreasing in $a$ by the log-concavity of $f$.
    In the limit $a\uparrow b$, the inequality \eqref{eq:no-gamble-condition} is satisfied due to $\pi^R>0$.
    In the limit $a \downarrow -\pi^R$, the inequality is violated due to the positivity of the right-hand side.
    Therefore, there is a unique $\Gamma_G(b)<b$ such that $\Gamma_G(b)+\pi^R=(F(b)-F(\Gamma_G(b)))/f(\Gamma_G(b))$, and we can take $\min \mathscr{G}(b)=\max\{\Gamma_G(b), b-2\pi^R\}<b$, proving the lemma.
\end{proof}

Define $\Gamma(b)\coloneq \min \mathscr{G}(b) = \max\{\Gamma_G(b), b-2\pi^R \}$.
\zcref{lem:maximum-acceptance-lemma} can now be restated as follows.

\renewcommand*{\thelemmap}{\ref*{lem:maximum-acceptance-lemma}$'$}   
\begin{lemmap}\label{lem:maximum-acceptance-prime}
    In an auditor-optimal equilibrium, any on-path message $[a,b]\subset [\Xl,\Xh]$ satisfies $a\ge \Gamma(b)$ for all $b>-\pi^R$.
    If $b\in(\Xl,-\pi^R]$, then it is without loss to set $a=\Xl$.
\end{lemmap}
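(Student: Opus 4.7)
I would argue by contradiction: suppose an auditor-optimal equilibrium contains an on-path message $D=[a,b]$ that violates the conclusion, and construct a refinement of the partition whose expected loss is strictly smaller, contradicting optimality.

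The edge case $b\in(\Xl,-\pi^R]$ is immediate. For any such message the manager's best response is $r=0$ (positive reports are rejected with probability one on $D$), and since $|0-X|>\pi^R$ for every $X\in D$, the auditor's per-$X$ loss equals $(\pi^R)^2$ regardless of how the region is partitioned. The auditor is therefore indifferent across all partitions of $(\Xl,-\pi^R]$ and can bundle them into the single interval $[\Xl,-\pi^R]$ without loss.

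For the main case $b>-\pi^R$ with $a<\Gamma(b)$, \zcref{lem:reporting-acceptance} together with the analysis in \zcref{sec:RP-General} implies that either $b-a>2\pi^R$ (so no report is acceptable with probability one) or $b-a\le 2\pi^R$ but $a<\Gamma_G(b)$ (so the manager selects the risky option $r^+>a+\pi^R$). In either subcase the rejection set $R=\{X\in D:|r(D)-X|>\pi^R\}$ has positive measure, with per-$X$ loss pegged at $(\pi^R)^2$. I would then split $D$ at $a'\coloneq\Gamma(b)\in(a,b)$, producing $D_L=[a,a']$ and $D_R=[a',b]$. By construction $D_R$ satisfies the acceptance constraint, so the manager's new response on $D_R$ is the safe option $a'+\pi^R$, accepted for every $X\in D_R$; if $D_L$ still fails the acceptance constraint, I would iterate the splitting, with termination guaranteed because $\Gamma$ lies strictly below the identity on $(-\pi^R,\Xh]$ and the recursion eventually reaches the edge case.

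The main obstacle is the loss comparison. The clean source of gain is $R\cap D_R$, where the loss drops from the boundary value $(\pi^R)^2$ to a strictly smaller quadratic under the new safe report. The subtlety is on $D_R\setminus R$---the portion of the upper subinterval where the old report was already accepted---because the new report $a'+\pi^R$ need not be closer to $X$ than the old $r(D)$, so pointwise losses there can rise. My plan is to bound the maximum pointwise increase on $D_R\setminus R$ and show that it is dominated by the guaranteed decrease on $R$, leveraging log-concavity of $f$ to control how probability mass distributes across the two sets. On $D_L$ both the old and the new per-$X$ losses are capped at $(\pi^R)^2$, so any iteration can only weaken the loss further. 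Finally, I would verify equilibrium sustainability of the refined partition by invoking the ``wishful'' off-path beliefs used in the proof of \zcref{thm:optimal-partition}: under those beliefs, truthful deviations by the auditor from the refined strategy are never profitable.
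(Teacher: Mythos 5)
There is a genuine gap, and it sits exactly where you flag ``the main obstacle.'' Your split point $a'=\Gamma(b)$ forces the manager's report on the upper piece $D_R=[\Gamma(b),b]$ to change from the old report $r(D)$ to $\max\{\Gamma(b)+\pi^R,\,r^{+}\}$, so you must show that the expected gain on the rejection set dominates the pointwise loss increases on $D_R\setminus R$. You describe a plan for this comparison but never execute it, and it is not a routine estimate: in the long-interval case $b-a>2\pi^R$ the old report can be an interior risky option $r_1^{+}\le b-\pi^R$ while the new report on $D_R$ is the safe option $b-\pi^R$, so losses on the old acceptance region around $r_1^{+}$ genuinely rise, and whether the trade nets out depends on how probability mass is distributed. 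Your treatment of $D_L=[a,\Gamma(b)]$ has the same problem: ``both losses are capped at $(\pi^R)^2$'' does not give weak improvement there, because when $r(D)-\pi^R<\Gamma(b)$ a portion of $D_L$ had its old report accepted with loss strictly below $(\pi^R)^2$, and the new treatment of $D_L$ could raise it.

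The paper avoids all of this by splitting at $r(D)-\pi^R$, the lower edge of the original acceptance region, rather than at $\Gamma(b)$. With that split the manager's optimal report on the upper piece $[r(D)-\pi^R,b]$ is unchanged: the risky option depends only on the upper endpoint $b$, and the new safe option $(r(D)-\pi^R)+\pi^R$ equals the old report, so the solution to the reporting problem is still $r(D)$ and the auditor's payoff there is identical. The strict gain then comes entirely from the lower piece $[a,r(D)-\pi^R]$, where the acceptance probability rises from zero to something positive; a single such improvement already contradicts optimality, so no iteration and no loss comparison are needed. (In the short-interval case your split actually coincides with the paper's, since the risky option satisfies $r^{+}=\Gamma_G(b)+\pi^R$; the divergence, and hence the gap, arises in the long-interval case.) Your edge case $b\in(\Xl,-\pi^R]$ is handled correctly and matches the paper.
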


\begin{proof}
    To prove the first part, take any $b>-\pi^R$. 
    Fix any auditor-optimal communication strategy $\sigma$.
    Toward a contradiction, suppose that $a < \Gamma(b)$.
    Assume without loss that $\Xl\le \min\{b-2\pi^R, \Gamma_G(b)\}$.
    \footnote{If $\Gamma(b)=\Xl$, then $a<\Gamma(b)$ would be impossible.}
    The hypothesis implies that either (i) $b-a>2\pi^R$ or (ii) $b-a\le 2\pi^R$ and $a<\Gamma_G(b)$.

    In either case, the manager proposes $r^*\coloneq r([a,b]) > a+\pi^R$ (see \eqref{eq:RP-solution-large}), and any $X\in [a, r^*-\pi^R]$ results in rejection. 
    Construct an alternative auditor strategy $\sigma'$ that differs from $\sigma$ only on $[a, b]$ by splitting it into two intervals: $[a, r^*-\pi^R]$ and $[r^*-\pi^R, b]$.
    Under the new strategy $\sigma'$, the manager's report given the message $[r^*-\pi^R,b]$ remains at $r^*$, because
    the upper bound of the interval remains $b$ and $r^*$ is still feasible.
    \footnote{That is, $(r^*-\pi^R)+\pi^R\ge r^*$, so the risky option is in $[r^*-\pi^R,b]$.}
    Thus any $X\in[r^*-\pi^R,b]$ results in the same auditor payoff as under the original strategy.
    When $X\in[a, r^*-\pi^R]$, the manager's report changes from $r^*$ but is accepted with some positive probability. 
    Therefore, the auditor's expected payoff on $[a, b]$ is strictly higher under $\sigma'$, contradicting the optimality of the original strategy $\sigma$.

    To prove the second part, suppose that $(\Xl,-\pi^R]$ is non-degenerate and fix any $b\in (\Xl,-\pi^R]$.
    For any $a<b$, the manager's report $r([a,b])$ is zero.
    Moreover, $|X-0|\ge \pi^R$ for all $X\le b$. 
    Therefore, any $a<b$ gives the same expected loss as $a=\Xl$.
\end{proof}

The above argument establishes a sufficient condition for the manager's report to be acceptable with probability one:

\begin{corollary}\label{cor:accept-wpone-cdn}
    Suppose that $\Xl \ge -\pi^R$.
    Then, in any auditor-optimal outcome, the manager's report is acceptable with probability one.
\end{corollary}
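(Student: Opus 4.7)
The plan is to obtain the corollary as a direct consequence of \zcref{lem:maximum-acceptance-prime} (the restated Maximal Acceptance Lemma) together with the definitions of $\Gamma_G$ and $\Gamma$. Under the hypothesis $\Xl \ge -\pi^R$, every right endpoint $b$ of an on-path message satisfies $b \ge \Xl \ge -\pi^R$, so essentially the entire support falls within the range $b > -\pi^R$ covered by the first clause of the lemma. The boundary point $b = -\pi^R$ (possible only when $\Xl = -\pi^R$) carries zero probability mass under the continuous log-concave densities assumed, so it does not affect the ex-ante acceptance probability.

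I would then invoke the first clause of \zcref{lem:maximum-acceptance-prime}: in an auditor-optimal equilibrium, each on-path message $[a,b]$ satisfies
\begin{equation*}
    a \ge \Gamma(b) = \max\{\Gamma_G(b),\, b - 2\pi^R\}.
\end{equation*}
The inequality $a \ge b - 2\pi^R$ yields $b - a \le 2\pi^R$, which by \eqref{eq:acceptance-probability} ensures the middle branch $A(r) = 1$ is nonempty, so the safe option $r = a+\pi^R$ attains acceptance probability one. The inequality $a \ge \Gamma_G(b)$ is precisely the no-gambling constraint \eqref{eq:no-gamble-condition}, which guarantees that the manager's solution to \ref{eq:reporting-problem} is the safe option $r^* = a+\pi^R$ rather than the risky option $r^+$. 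Combining the two, on every on-path message the manager's equilibrium report is accepted with probability one.

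Finally, I would verify that the prescribed safe report is always feasible under the nonnegativity constraint $r \ge 0$: since $a \ge \Xl \ge -\pi^R$, we have $r^* = a + \pi^R \ge 0$, so the safe option lies in the feasible domain. Aggregating across the (at most countably many) on-path messages that form the partition yields ex-ante acceptance with probability one.

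The main, and quite minor, obstacle is the boundary treatment when $\Xl = -\pi^R$ exactly: the lemma's first clause is stated for $b > -\pi^R$, while the second clause handles $b \in (\Xl, -\pi^R]$ by allowing $a = \Xl$ without loss. In that degenerate case the relevant interval $(\Xl, -\pi^R]$ is either empty or a single point of zero probability under the assumed density, so the acceptance-probability-one conclusion holds almost surely, which is sufficient for the ex-ante statement. Beyond this bookkeeping, no further work is needed; the substance is entirely packaged in \zcref{lem:maximum-acceptance-prime}.
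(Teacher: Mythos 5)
Your proposal is correct and follows essentially the same route as the paper, which presents the corollary as an immediate consequence of the argument establishing \zcref{lem:maximum-acceptance-prime}: every on-path message satisfies $a \ge \Gamma(b) = \min\mathscr{G}(b)$, and membership in $\mathscr{G}(b)$ is exactly acceptance with probability one. Your boundary worry about $b = -\pi^R$ is in fact vacuous, since any message $[a,b]$ has $b > a \ge \Xl \ge -\pi^R$, so the first clause of the lemma always applies.
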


When this sufficient condition does not hold, then it is without loss to bundle all realizations $X< \pi^R$ as a single message.

\begin{corollary}\label{cor:smallest-message}
    Suppose that $\Xl < -\pi^R$.
    Then, in any auditor-optimal equilibrium, it is without loss to let the smallest message be $[\Xl, -\pi^R]$, which induces an unacceptable report with probability one.
\end{corollary}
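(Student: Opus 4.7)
My approach is to combine the Maximal Acceptance Lemma with a single key inequality, $\Gamma_G(b)>-\pi^R$ for every $b>-\pi^R$, and argue by elimination that the leftmost message of any (restricted) optimal partition must coincide with $[\Xl,-\pi^R]$. To obtain the key inequality, I would revisit the implicit identity $\Gamma_G(b)+\pi^R=(F(b)-F(\Gamma_G(b)))/f(\Gamma_G(b))$ established in the proof of \zcref{lem:reporting-acceptance}; the right-hand side is strictly positive because $\Gamma_G(b)<b$ (so $F(b)>F(\Gamma_G(b))$) and $f>0$, giving $\Gamma(b)\ge\Gamma_G(b)>-\pi^R$ for every $b>-\pi^R$.

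Given this, I would invoke \zcref{lem:maximum-acceptance-lemma} to restrict attention to partitions in which each message induces acceptance with probability either zero or one, with at most one zero-acceptance message. Let $[\Xl,d_1]$ be the leftmost interval of such a partition. If it were acceptance-probability one with $d_1>-\pi^R$, then \zcref{lem:maximum-acceptance-prime} would require $\Xl\ge\Gamma(d_1)>-\pi^R$, contradicting $\Xl<-\pi^R$; and if $d_1\le-\pi^R$, the acceptance region $[X-\pi^R,X+\pi^R]$ lies weakly below zero for every $X\le -\pi^R$, so no report $r\ge 0$ is acceptable outside a null set and the interval is automatically zero-acceptance. Either way the leftmost interval is zero-acceptance and $d_1\le -\pi^R$ is forced.

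It remains to rule out $d_1<-\pi^R$. If this held, the next interval $[d_1,d_2]$ would need acceptance probability one by the ``at most one'' restriction; but $d_2\le -\pi^R$ would make it zero-acceptance as well, while $d_2>-\pi^R$ would demand $d_1\ge\Gamma(d_2)>-\pi^R$ by the key inequality, each a contradiction. Hence $d_1=-\pi^R$, and the smallest message is $[\Xl,-\pi^R]$, with the manager reporting $r=0$ rejected with probability one. The main obstacle is establishing the key inequality $\Gamma_G(b)>-\pi^R$; once in hand, the chain of constraints pins down the endpoint uniquely and no further optimization is needed.
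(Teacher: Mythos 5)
Your argument is correct and follows essentially the same route as the paper: the key inequality $\Gamma_G(b)>-\pi^R$ is exactly what the proof of \zcref{lem:reporting-acceptance} delivers (the no-gambling condition fails as $a\downarrow-\pi^R$ precisely because the right-hand side is positive), and the paper's own justification of the corollary is the second part of \zcref{lem:maximum-acceptance-prime} (all sub-partitions of $[\Xl,-\pi^R]$ yield the same constant loss, so they may be bundled) combined with its first part (any on-path message with $b>-\pi^R$ must have $a\ge\Gamma(b)>-\pi^R$). One cosmetic remark: in your first case split the qualifier ``acceptance-probability one'' is unnecessary---\zcref{lem:maximum-acceptance-prime} applies to \emph{every} on-path message with right endpoint above $-\pi^R$, which also disposes of the nominally unaddressed case of a zero-acceptance leftmost interval with $d_1>-\pi^R$.
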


Consequently, when $\Xl \ge -\pi^R$, we can safely ignore the smallest message $[\Xl, -\pi^R]$ and focus instead on the partition that induces acceptable reports with probability one over the remaining support.

\subsection*{Proof of \zcref{lem:no-gambling-const}}
\begin{proof}
    Define $h(b)\coloneq \Gamma_G(b) - (b-2\pi^R)$. 
    I show that $\lim_{b\downarrow -\pi^R} h(b) > 0$ and that $h$ is monotonically decreasing in $b$.
    To establish the first claim, recall from \zcref{lem:reporting-acceptance} that $\Gamma_G(b)$ is the unique $a$ that solves $a+\pi^R = (F(b)-F(a))/f(a)$.
    Since $a\in(-\pi^R,b)$, the solution $\Gamma_G(b)$ tends to $-\pi^R$ as $b\downarrow -\pi^R$.
    Therefore $\lim_{b\downarrow -\pi^R} h(b) = 2\pi^R>0$.
    
    To prove that $h$ is decreasing, it suffices to show that $\partial \Gamma_G / \partial b < 1$. 
    Observe that
    \begin{equation}\label{eq:app_Gamma_G_derivative}
        \frac{\partial\Gamma_G(b)}{\partial b}=\frac{f(b)/f(a)}{1-\frac{\partial}{\partial a}\frac{F(b)-F(a)}{f(a)}}=\frac{f(b)}{2f(a)+(F(b)-F(a))\frac{f'(a)}{f(a)}}.
    \end{equation}
    Let $g(x)\coloneq \log f(x)$.
    Since $f$ is log-concave, $g'(x)\le g'(a)$ for all $x\in [a,b]$. 
    Therefore, 
    \begin{equation}
        \int_a^b g'(x)f(x)dx\le (F(b)-F(a))g'(a).
    \end{equation}
    Using $g'(x)=f'(x)/f(x)$, we can rewrite this as
    \begin{equation}
        f(b)\le f(a) + (F(b)-F(a)) \frac{f'(a)}{f(a)}.
    \end{equation}
    Adding $f(a)>0$ to the right-hand side of this inequality establishes that $\partial \Gamma_G / \partial b < 1$ by \eqref{eq:app_Gamma_G_derivative}.
\end{proof}

\subsection*{Proof of \zcref{thm:optimal-partition} and Off-Path Beliefs}
First I prove \zcref{thm:optimal-partition} by showing that the auditor does not benefit from deviating to any off-path message.
Then I discuss equilibrium refinements that justify such a belief.

Let $\mathfrak{M}(D)\coloneq\{X\mid \Mvag(X)\ni D\}$  denote the set of types who can also communicate $D$.
In the model, each message is a subset of the support of $X$.
As a useful consequence of this structure, we have $\mathfrak{M}(D) = D$.
\footnote{To see this, note that any type $X\in D$ is permitted to communicate this fact, so $D\subset \mathfrak{M}(D)$.
Conversely, if $X\notin D$, then truthful communication implies that type $X$ cannot use message $D$; thus $X\notin \mathfrak{M}(D)$.}

The manager's belief is \textit{wishful} if, for any off-path message $D$, the manager assigns probability one to $X = \sup\mathfrak{M}(D)=\sup{D}$.
That is, the manager interprets any off-path message in the way most favorable to himself.

\begin{proof}[Proof of \zcref{thm:optimal-partition}]
    Suppose that the manager's off-path belief is wishful.
    Let $[a,b]\in \mathcal{D^*}$ be a message in the optimal partition.
    Consider the auditor with realization $X\in [a,b]$.
    Take any off-path message $D \in \Mvag(X)$ such that $D\ne [a,b]$.
    On the equilibrium path, the difference between the manager's report and $X$ is $|r([a,b])-X|\le \pi^R$.
    If the auditor deviates to $D$, then the manager's wishful off-path belief $B(D)$ assigns probability one to $X=\sup D$.
    Therefore, the manager reports $r(D)=\sup D+\pi^R$.
    After the deviation, the difference between the report and $X$ is then
    \begin{align*}
        |r(D)-X| & =|\sup D+\pi^{R}-X|\\
                & =\sup D+\pi^{R}-X \ge\pi^{R},
    \end{align*}
    where the second line is from $D\ni X$.
    Therefore, for any message $D\in \mathcal{D^*}$ and $X\in D$, the auditor does not have a profitable deviation to any off-path message.
\end{proof}

\subsubsection*{Equilibrium Refinement}
A natural question is whether the assumption of the wishful off-path belief is ``reasonable.''
Here I show that such a belief survives the Grossman-Perry-Farrell criterion, a common refinement used in the truthful communication literature \citep{bertomeuVerifiableDisclosure2018,glodeVoluntaryDisclosureBilateral2018}.
\footnote{See \cite{aliDesignDisclosure2024} and \cite{titovaPersuasionVerifiableInformation2025} for a related discussion on when a sender-optimal outcome is achieved in a disclosure game.}
In particular, the technique developed in \cite{glodeVoluntaryDisclosureBilateral2018} applies to the current setting as well.

\begin{definition}
    A PBE $\langle \sigma, r, \alpha, B \rangle$ is a \emph{Grossman-Perry-Farrell equilibrium} (GPFE) if no non-empty interval $\self \subset [\Xl, \Xh]$ exists such that all types $X \in \self$ strictly benefit from deviating to the message $\self$, assuming the receiver's belief upon deviation, $B(\self)$, is the posterior distribution of $X$ conditional on $X \in \self$.
\end{definition}

Intuitively, the GPFE refinement rules out the following scenario.
Suppose type $X \in \self$ deviates to the message $\self$, and the manager believes that only types in $\self$ would do so. 
If the manager's belief is self-fulfilling---all types $\self$ indeed wish to deviate to the message $\self$---then we have found a pair of ``reasonable'' off-path deviation and belief: the auditor's message that she is in $\self$ is, in this sense, credible.

\begin{proposition}\label{prop:app_GPFE}
    An auditor-optimal equilibrium is a GPFE.
\end{proposition}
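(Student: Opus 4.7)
The plan is to argue by contradiction. Suppose an auditor-optimal equilibrium $\langle\sigma^*, r^*, \alpha^*, B^*\rangle$, inducing the partition $\mathcal{D}^*$ that solves \ref{eq:optimal-partition-prob-relaxed}, violates the GPFE criterion. Then a non-empty interval $\self\subset[\Xl,\Xh]$ exists such that every $X\in\self$ strictly benefits from deviating to message $\self$ under the Bayesian belief $B(\self)=F(\cdot\mid X\in\self)$. Writing $r_s\coloneq r(\self)$ for the manager's best response under this belief, strict benefit means $(r_s-X)^2<L_{\mathrm{eq}}(X)\le(\pi^R)^2$ for every $X\in\self$, where $L_{\mathrm{eq}}(X)$ denotes the type-$X$ equilibrium loss. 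In particular, $|r_s-X|<\pi^R$ strictly on $\self$, so $\self$ induces a favorable report for every type in $\self$ and satisfies the acceptance constraint $a\ge\Gamma(b)$ with strict slack.

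To reach a contradiction, I would exhibit a feasible partition for \ref{eq:optimal-partition-prob-relaxed} with strictly smaller objective value than $\mathcal{D}^*$. The natural construction inserts $\self$ as an additional message: for each $D_i\in\mathcal{D}^*$ that intersects $\self$, excise the piece $D_i\cap\self$ and absorb it into the new message $\self$, then reorganize the remaining pieces of $\self^c$ into an admissible partition $\tilde{\mathcal{D}}$ of $\self^c$ via the dynamic-programming construction of \zcref{prop:partition-DP}. The resulting partition $\mathcal{D}'\coloneq\tilde{\mathcal{D}}\cup\{\self\}$ is admissible for $\mathrm{OP}'$, and its ex-ante loss decomposes into the contribution from $\self$---strictly smaller than the contribution of the same set under $\mathcal{D}^*$ by the first step---and the contribution from $\self^c$, which I aim to bound above by the corresponding contribution in $\mathcal{D}^*$. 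Combining both pieces yields a strict reduction of the value $L(\Xh)$ in \eqref{eq:DP-problem}, contradicting optimality of $\mathcal{D}^*$.

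The hard part is controlling the loss on $\self^c$ after the reorganization. Truncating an interval $D_i$ by removing $D_i\cap\self$ shifts the manager's posterior, changes the equilibrium report, and alters the pointwise loss on $D_i\cap\self^c$ in a direction that is not \emph{a priori} monotone. To handle this, I would exploit the acceptance slack established in the first step together with the monotonicity of $\Gamma(\cdot)$ (\zcref{lem:no-gambling-const}), ensuring that the refined boundary intervals still lie in the feasible region of $\mathrm{OP}'$. The key technical claim is that the restricted DP value over partitions of $\self^c$ (when spliced with $\self$) is no larger than the contribution of $\mathcal{D}^*$ on $\self^c$; this should follow by leveraging log-concavity of $f$ to show that any perturbation of $\mathcal{D}^*$ compatible with $\self$ produces at most a second-order change in the loss on $\self^c$, whereas the GPFE deviation produces a first-order strict gain on $\self$. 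The delicate balance between these two forces is the principal technical hurdle.
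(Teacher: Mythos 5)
Your overall strategy---contradict auditor-optimality by exhibiting a strictly better feasible outcome built around the self-signaling set $\self$---is the right one, but the proposal has a genuine gap exactly where you flag ``the principal technical hurdle,'' and the heuristic you offer to close it does not work. Your key technical claim is that after excising $\self$ and re-partitioning $\self^c$, the loss contributed by $\self^c$ is no larger than under $\mathcal{D}^*$. This is not only unproven but doubtful as stated: the interval of $\mathcal{D}^*$ containing $\sup\self$ gets truncated from below, its left endpoint rises to $\sup\self$, and the manager's safe-option report on the residual piece jumps to $\sup\self+\pi^R$, giving a loss of exactly $(\pi^R)^2$ at $X=\sup\self$---strictly worse than before. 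So the reorganization can strictly \emph{increase} the loss on $\self^c$, and you would then need the gain on $\self$ to dominate. Your proposed ``first-order gain on $\self$ versus second-order change on $\self^c$'' argument cannot deliver this, because the GPFE deviation set $\self$ is an arbitrary non-degenerate interval: there is no small parameter, so there is no sense in which the perturbation of $\self^c$ is second-order.

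The paper's proof avoids this entirely by \emph{not} re-partitioning $\self^c$. It modifies the communication strategy only on $\self$ (all types in $\self$ now send $\self$), leaves every other type's message untouched, and supports the modified profile with the same belief system as the original equilibrium, so the manager's responses---and hence the auditor's payoff---on $\self^c$ are unchanged by construction. The self-signaling property then gives a strict ex-ante improvement from the types in $\self$ alone, contradicting auditor-optimality in two lines. The lesson is that the contradiction only requires \emph{some} profile dominating the equilibrium, so you should hold everything outside $\self$ fixed rather than re-optimizing it; once you re-optimize $\self^c$ you are forced into the interval-truncation comparison that your proposal cannot control.
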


\begin{proof}
    Let $\sigma$ be an equilibrium communication strategy inducing the auditor-optimal partition.
    If there is a self-signaling set $\self$, an alternative communication strategy $\sigma'$ that differs from $\sigma$ only on $\self$ by assigning the message $\self$ to all $X\in \self$ strictly improves the auditor's expected payoff.
    This contradicts the optimality of $\sigma$, because $\sigma'$ can be supported by the same belief system $B$ in the original equilibrium.
\end{proof}

The result justifies the wishful off-path belief in the sense that an auditor-optimal equilibrium supported by such a belief survives the standard equilibrium refinement.
The GPFE criterion offers another justification for an auditor-optimal equilibrium by providing a partial converse of \zcref{prop:app_GPFE}.

\begin{proposition}\label{prop:app_GPFE_converse}
    Consider an equilibrium in which the acceptance probability is not maximized (and thus not auditor-optimal).
    Then, the equilibrium is not a GPFE.
\end{proposition}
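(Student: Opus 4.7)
My plan is to argue by contrapositive: assume the equilibrium fails to maximize the ex-ante acceptance probability, and construct a non-empty open interval $\self$ that is self-signaling, thereby showing the equilibrium cannot be a GPFE. By the improvement step used in the proof of \zcref{lem:maximum-acceptance-prime}, if acceptance is not maximized then some on-path (vague) message $D^{*}=[a,b]$ in the induced partition has $r^{*}\coloneq r(D^{*})>a+\pi^{R}$, i.e., the manager gambles. Consequently every $X$ in the \emph{rejection region} $R\coloneq[a,r^{*}-\pi^{R})\subset D^{*}$ is rejected in the candidate equilibrium and earns the minimal auditor payoff $-(\pi^{R})^{2}$; this set has positive measure under the assumption that max acceptance fails.

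The candidate self-signaling set would be an open subinterval $\self=(\alpha,\gamma)\subset R$ with $\alpha>\max\{a,-\pi^{R}\}$, chosen so that two conditions hold under the posterior $X\mid X\in\self$: (i) the length satisfies $\gamma-\alpha<2\pi^{R}$, so that the safe report $r(\self)=\alpha+\pi^{R}$ is available and guarantees acceptance; and (ii) the no-gambling condition $\alpha\ge\Gamma_{G}(\gamma)$ is satisfied, so that by \zcref{lem:reporting-acceptance} the manager in fact chooses this safe report rather than gambling. Existence of such a pair $(\alpha,\gamma)$ is guaranteed by letting $\gamma\downarrow\alpha$: the right-hand side of the fixed-point equation defining $\Gamma_{G}$ tends to zero by log-concavity, so the no-gambling constraint becomes slack on sufficiently short intervals.

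It then remains to verify strict improvement. Under the manager's best response $r(\self)=\alpha+\pi^{R}$, every $X\in(\alpha,\gamma)$ obtains auditor payoff $-(\alpha+\pi^{R}-X)^{2}$. Because $0<X-\alpha<\gamma-\alpha<2\pi^{R}$ strictly on $\self$, we have $|\alpha+\pi^{R}-X|<\pi^{R}$, so this deviation payoff strictly exceeds $-(\pi^{R})^{2}$ for every $X\in\self$. Thus all types in $\self$ strictly prefer deviating to $\self$ when the receiver's belief is the posterior $X\mid X\in\self$, so $\self$ is self-signaling and the equilibrium fails the GPFE criterion.

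The main obstacle is securing \emph{strict} improvement for \emph{every} type in $\self$, including those at the edges. The safe report $\alpha+\pi^{R}$ would leave the endpoint type $X=\alpha$ only weakly better off in a closed interval, so I take $\self$ to be open and push $\alpha$ strictly inside $R$; this sidesteps the boundary indifference without losing positive measure. A secondary technical point is confirming the manager's safe-option choice under the new posterior, which rests on the log-concavity of $f$ together with the monotonicity of $\Gamma_{G}$ established in \zcref{lem:no-gambling-const}; both ensure the no-gambling constraint relaxes continuously as the interval shrinks.
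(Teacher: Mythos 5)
Your overall strategy is the same as the paper's: locate a positive-measure set of types that are rejected in the candidate equilibrium (hence stuck at the minimal payoff $-(\pi^R)^2$), and exhibit a short open interval of such types that is self-signaling because the safe report it induces is strictly favorable for every interior type. Your construction of $\self=(\alpha,\gamma)$, the verification that the no-gambling constraint becomes slack as $\gamma\downarrow\alpha$ (since $\Gamma_G(\gamma)\to\Gamma_G(\alpha)<\alpha$), and the strict-improvement check on the open interval are all sound, and your care in keeping $\self$ inside the rejection region is, if anything, slightly tighter than the paper's own choice of deviation set.

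There is, however, a genuine gap in your first step. You assert that if acceptance is not maximized, then some on-path message $D^*=[a,b]$ must have $r^*>a+\pi^R$, i.e., the manager gambles, so that the rejected types lie in $[a,r^*-\pi^R)$. This does not follow from \zcref{lem:maximum-acceptance-prime}, and it is false in general: rejection can also occur because the report is \emph{too low} for high types. Concretely, if $b-a>2\pi^R$ and the risky option satisfies $r_1^+<a+\pi^R$ (which happens for many log-concave densities once $a$ is large enough, since $r_1^+$ depends only on the parent distribution), the manager reports exactly $a+\pi^R$ and every type $X\in(a+2\pi^R,b]$ rejects. The paper's proof splits the rejected types into $D^+$ (report too high) and $D^-$ (report too low) and handles both; your argument covers only the $D^+$ branch. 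The omission is fixable by a symmetric construction --- take a short interval near the top of $D$, whose types also earn $-(\pi^R)^2$ in equilibrium and strictly benefit from the induced safe report --- but as written your proof does not establish the proposition for equilibria whose acceptance shortfall comes entirely from the $D^-$ side.
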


\begin{proof}
    Let $\sigma$ be an equilibrium communication strategy under which the acceptance probability is not maximized.
    Then, there exists an on-path message $D=[a,b]$ with $b>-\pi^R$ such that the set $D^+ \cup D^-$ has positive measure, where
    \begin{equation}
        D^{+} \coloneq\{X\in D\mid X > r(D) + \pi^{R}\},\quad
        D^{-} \coloneq\{X\in D\mid X < r(D) - \pi^{R}\}.
    \end{equation}
    Suppose that $D^+$ is nondegenerate.
    Let $\self \coloneq (\inf{D^+}, b')$ for some $b\in(\inf{D^+},b)$ such that $\inf D^+ \ge \Gamma(b')$.
    The manager's report in response to this message satisfies $|r(\self) - X|<\pi^R$ for all $X\in\self$.
    Since the auditor with realization $X\in D^+$ receives a constant payoff of $-(\pi^R)^2$ under $\sigma$, $\self$ is a self-signaling set.
    A similar argument applies if $D^-$ is nondegenerate.
\end{proof}

\subsection*{Proof of \zcref{prop:partition-DP}}
Assume without loss that $\Xl > -\pi^R$ (see \zcref{cor:accept-wpone-cdn, cor:smallest-message}).

\subsubsection*{The Equivalence of the Optimal Partition Problem and the Bellman Equation}
I first show that the problem \ref{eq:optimal-partition-problem} is equivalent to the Bellman equation \eqref{eq:DP-problem}.
To do so, it is convenient to index a partition from the right (``negative indexing''). 
Let $\mathscr{P}([\Xl,d_0])$ be the set of all partitions of $[\Xl,d_0]$.
A partition $\mathcal{D} \in \mathscr{P}([\Xl,d_0])$ with the negative indexing is written as
\begin{equation}
    \mathcal{D} = \bigcup_{i\ge 0} D_{-i}, \quad D_{-i}\coloneq [d_{-i-1},d_{-i}],
\end{equation}
where $i$ is a non-negative integer and $d_0\ge d_{-1}\ge d_{-2}\ge \cdots$.

For each message $D_{-i}=[d_{-i-1},d_{-i}]$, the constraint to ensure an acceptable report is
\begin{equation}\label{eq:acceptance-constraint}
    d_{-i-1}\in [\Gamma(d_{-i}),d_{-i}),
\end{equation}
Rewrite the problem \ref{eq:optimal-partition-problem} for the truncated support $[\Xl,d_0]$ as\footnote{When $\Xh<\infty$ we can set $d_0=\Xh$, and the conditioning becomes redundant.}
\begin{equation}\tag{$\mathrm{SP}$}\label{eq:SeqProb}
    \inf_{\mathcal{D} \in \mathscr{P}([\Xl,d_0])} \sum_{i\ge 0} \prob(X\in D_{-i}\mid X\le d_0) \ell(d_{-i-1},d_{-i}).
\end{equation}
Rewrite the Bellman equation \eqref{eq:DP-problem} as
\begin{equation}\tag{$\mathrm{BE}$}\label{eq:BellmanEq}
    L(d_0)=\inf_{d_{-1}\in[\Gamma(d_{0}),d_{0}]}\prob(X\ge d_{-1}\mid X\le d_{0})\ell(d_{-1},d_{0})+\prob(X\le d_{-1}\mid X\le d_{0})L(d_{-1}).
\end{equation}
I demonstrate that the objective function of \ref{eq:SeqProb} can be rewritten as the right-hand side of \ref{eq:BellmanEq}.
Let $\Lambda(\mathcal{D}) \coloneq \sum_{i\ge 0} \prob(X\in D_{-i}\mid X\le d_0) \ell(d_{-i-1},d_{-i})$ be the objective function.
Then,
\begin{align*}
    \Lambda(\mathcal{D})
    & =\prob(d_{-1}\le X\le d_{0}\mid X\le d_{0})\ell(d_{-1},d_{0})+\sum_{i\ge1}\prob(X\in D_{-i}\mid X\le d_{0})\ell(d_{-i-1},d_{-i})\\
    & =\prob(X\ge d_{-1}\mid X\le d_{0})\ell(d_{-1},d_{0})+\frac{\prob(X\le d_{-1})}{\prob(X\le d_{0})}\sum_{i\ge1}\prob(X\in D_{-i}\mid X\le d_{-1})\ell(d_{-i-1},d_{-i})\\
    & =\prob(X\ge d_{-1}\mid X\le d_{0})\ell(d_{-1},d_{0})+\prob(X\le d_{-1}\mid X\le d_{0})\Lambda(d_{-1}),
\end{align*}
where the first equality follows from isolating the initial term, the second from the monotonicity of the sequence \{$d_i$\} and Bayes' rule, and the third from the definition of $\Lambda$.
Therefore, the solution sets to \ref{eq:BellmanEq} and \ref{eq:SeqProb} coincide, following the arguments in \citet[Chap.~4]{stokeyRecursiveMethodsEconomic1989}.

\subsubsection*{Contraction Property}
Now I show that there is a unique fixed point of the Bellman operator associated with \ref{eq:BellmanEq}.
Let $\mathcal{C}_B([\Xl,\Xh])$ denote the space of bounded continuous functions on $[\Xl,\Xh]$, endowed with the supremum norm $\|\cdot\|_{\infty}$.
Define the Bellman operator $T:\mathcal{C}_B([\Xl,\Xh]) \to \mathcal{C}_B([\Xl,\Xh])$ by 
\begin{equation}
    TL(d_0) \coloneq \min_{d_{-1}\in[\Gamma(d_{0}),d_{0}]}\prob(X\ge d_{-1}\mid X\le d_{0})\ell(d_{-1},d_{0})+\prob(X\le d_{-1}\mid X\le d_{0})L(d_{-1}),
\end{equation}
and let $\lambda(d_0)$ denote the minimizer in the above expression.
It suffices to show that $T$ is a contraction mapping \citep[cf.][]{stokeyRecursiveMethodsEconomic1989}.
To do so, I first establish a uniform upper bound on the probability term.

\begin{lemma}\label{lem:contraction-bound}
    There exists $\beta\in (0,1)$ such that $\prob(X\le \lambda(d_0)\mid X\le d_0)\le \beta$ for all $d_0\in(\Xl,\Xh]$.
\end{lemma}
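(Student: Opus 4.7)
The plan is to show that the rightmost interval $[\lambda(d_0), d_0]$ in any optimal partition carries probability mass bounded below uniformly in $d_0$, which then yields the desired bound $\beta = 1 - \inf_{d_0} \prob(X \ge \lambda(d_0) \mid X \le d_0)$.

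First, I would observe that the feasibility constraint $\lambda(d_0) \in [\Gamma(d_0), d_0]$ combined with $\Gamma(d_0) = \max\{\Gamma_G(d_0), d_0 - 2\pi^R\}$ only gives $\lambda(d_0) \ge d_0 - 2\pi^R$. This is an upper bound on $d_0 - \lambda(d_0)$, the wrong direction for the lemma. The substantive task is therefore to bound $\lambda(d_0)$ \emph{above} away from $d_0$, uniformly in both $d_0$ and the candidate $L \in \mathcal{C}_B([\Xl,\Xh])$.

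Second, I would argue that there exists $\underline{\Delta} > 0$, depending only on the model primitives, such that $\lambda(d_0) \le d_0 - \underline{\Delta}$ whenever $d_0 - \Xl \ge \underline{\Delta}$ (the case $d_0 - \Xl < \underline{\Delta}$ is trivial since $\lambda(d_0) = \Xl$). The argument compares the Bellman objective at the trivial point $d_{-1} = d_0$, which gives value $L(d_0)$, with its value at $d_{-1} = d_0 - \underline{\Delta}$. Using the explicit form $\ell(d_0 - \underline{\Delta}, d_0) = (\pi^R)^2 - \pi^R \underline{\Delta} + \underline{\Delta}^2/3 + O(\cdot)$ from the quadratic loss structure (derived analogously to the uniform case in \zcref{ex:uniform-uniform-partition}), and the universal bound $L \le (\pi^R)^2$ (since the per-realization loss cannot exceed the rejection penalty), I would show that the Bellman objective strictly decreases as $d_{-1}$ moves from $d_0$ down to at least $d_0 - \underline{\Delta}$ for an appropriate choice of $\underline{\Delta}$. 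Combined with continuity of the objective in $d_{-1}$, this rules out minimizers too close to $d_0$.

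Third, I would translate the uniform length bound $d_0 - \lambda(d_0) \ge \underline{\Delta}$ into a uniform probability bound. Since $f$ is continuous, log-concave, and bounded away from zero on the compact support $[\Xl, \Xh]$, there exists $\underline{f} > 0$ with $f \ge \underline{f}$ pointwise. Hence $F(d_0) - F(\lambda(d_0)) \ge \underline{f} \cdot \underline{\Delta}$, so $\prob(X \le \lambda(d_0) \mid X \le d_0) = F(\lambda(d_0))/F(d_0) \le 1 - \underline{f}\,\underline{\Delta}/F(\Xh) \eqcolon \beta < 1$.

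The main obstacle is the second step: establishing a uniform $\underline{\Delta}$ valid for \emph{every} $L \in \mathcal{C}_B$. Because $L$ can oscillate and the minimizer depends on $L$, the comparison between $L(\lambda(d_0))$ and $L(d_0)$ cannot rely on smoothness of $L$. Instead, the argument must exploit the quadratic structure of $\ell$, which guarantees that the \emph{marginal} cost of shortening the rightmost interval dominates any marginal fluctuation in $L$ near $d_0$. Making this precise, especially near the boundary where the no-gambling constraint binds in a distribution-dependent way, is the most delicate part of the proof.
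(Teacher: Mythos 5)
Your route is genuinely different from the paper's, and it has a gap that is fatal in the case the paper regards as the only substantive one. The paper dismisses $\Xh<\infty$ as trivial and devotes the whole proof to $\Xh=\infty$, arguing by contradiction: if $\prob(X\le\lambda(d_0)\mid X\le d_0)$ approached $1$ along $d_0\to\infty$, the rightmost interval $[\lambda(d_0),d_0]$ would (it claims) become shorter than $\pi^R$, so the induced report $\lambda(d_0)+\pi^R$ would strictly exceed every type in it, and merging that message with the adjacent interval would strictly reduce the loss, contradicting optimality of $\lambda$. Your step 3, by contrast, converts a length bound into a probability bound via a pointwise lower bound $f\ge\underline f>0$ on a compact support. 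The model explicitly allows infinite endpoints and uses normal and exponential distributions as leading examples, so your proof simply does not cover the case the lemma is really about. Moreover, no argument of your form can work there: the acceptance constraint forces $\lambda(d_0)\ge\Gamma(d_0)\ge d_0-2\pi^R$, hence
\begin{equation*}
\prob(X\le\lambda(d_0)\mid X\le d_0)=\frac{F(\lambda(d_0))}{F(d_0)}\ \ge\ \frac{F(d_0-2\pi^R)}{F(d_0)}\ \to\ 1 \quad\text{as } d_0\to\infty,
\end{equation*}
so a uniform lower bound $d_0-\lambda(d_0)\ge\underline\Delta$ cannot deliver $\beta<1$: any interval of bounded length loses all its mass in the right tail while $F(d_0)\to1$. (Run your own step-3 computation in the tail rather than on a compact interval and this becomes visible; it also shows that the real difficulty sits exactly where your method cannot reach, and that the paper's own inference of $b-\lambda(b)<\pi^R$ from the probability bound deserves scrutiny.)

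Two further points on the compact-support branch you do treat. First, the "universal bound $L\le(\pi^R)^2$" you invoke in step 2 is false for arbitrary $L\in\mathcal{C}_B([\Xl,\Xh])$: take $L\equiv-M$ for large $M$ and the Bellman objective is minimized at $d_{-1}=d_0$, giving conditional probability one. You correctly identify uniformity over $L$ as the delicate point, but the fix is not marginal-cost dominance of the quadratic loss; it is restricting $T$ to an invariant, order-bounded subset such as $\{L:0\le L\le(\pi^R)^2\}$, on which your comparison can be made. Second, even on a compact support a log-concave density need not be bounded away from zero (it may vanish at the endpoints), so the pointwise bound $\underline f$ should be replaced by a compactness argument on $d_0\mapsto\int_{d_0-\underline\Delta}^{d_0}f(x)\,dx$.
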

\begin{proof}
    The result holds trivially if $\Xh < \infty$, so assume $\Xh=\infty$.
    Toward a contradiction, suppose that 
    \begin{equation}
        \lim_{d_0\to \infty}\prob(X\le \lambda(d_0)\mid X\le d_0)=1.
    \end{equation}
    Take any $\varepsilon>0$. 
    There exists $\overline{d}$ such that $\prob(X\le \lambda(b)\mid X\le b) \ge 1-\varepsilon$ for all $b\ge \overline{d}$.
    Since $\varepsilon>0$ is arbitrary and the density vanishes at infinity, we may take $\overline{d}$ such that $b-\lambda(b)<\pi^R$ and $\lambda(b)>\overline{d}$ for all $b>\overline{d}$.
    In this case, the manager's report $r([\lambda(b),b]) = \lambda(b)+\pi^R$ strictly exceeds $[\lambda(b),b]$ for any $b>\overline{d}$.
    Then the auditor's conditional expected loss, $\ell(\lambda(b), b) = \E[(\lambda(b)+\pi^R-X)^2\mid X\in [\lambda(b), b]]$, becomes smaller by merging the message $[\lambda(b), b]$ with the next adjacent interval to the left.
    This contradicts the optimality of $\lambda(b)$.
\end{proof}

Now I prove that $T$ is a contraction mapping, which effectively proves \zcref{prop:partition-DP}.

\begin{proof}[Proof of \zcref{prop:partition-DP}]
    Let $L_1, L_2 \in \mathcal{C}_B([\Xl, \Xh])$ be arbitrary, and let $\lambda_1(d_0)$ and $\lambda_2(d_0)$ denote their corresponding minimizers.
    Then for $k \in {1, 2}$ and $k' \ne k$,
    \begin{equation}
        (TL_k)(d_0) \le \prob(X\ge \lambda_{k'}(d_0)\mid X\le d_0)\ell(\lambda_{k'}(d_0), d_0) + \prob(X\le \lambda_{k'}(d_0)\mid X\le d_0)L_{k}(\lambda_{k'}(d_0)).
    \end{equation}
    Therefore,
    \begin{align}
        |TL_{1}-TL_{2}|(d_{0}) & \le\max_{k\in\{1,2\}}\prob(X\le\lambda_{k}(d_{0})\mid X\le d_{0})\|L_{1}-L_{2}\|_{\infty}\\
        & \le\beta\|L_{1}-L_{2}\|_{\infty},
    \end{align}
    where the second line is from \zcref{lem:contraction-bound}.
    Taking the supremum over $d_0$ gives $\|TL_{1}-TL_{2}\|_{\infty} \le \beta\|L_{1}-L_{2}\|_{\infty}$, so $T$ is a contraction. 
\end{proof}

\subsection*{Proof of \zcref{prop:compara-pi-uniform}}
\begin{proof}
    When $\pi^R < \Xl$, a uniform partition is optimal. 
    Let $\Xl>0$ and define 
    \begin{equation}
        \ell:\R_{++} \times (0,\Xl) \ni (\Delta,\pi^R)\mapsto (\pi^R)^2 - \pi^R \Delta + (\Delta)^2/3\in \R,
    \end{equation}
    which gives the auditor's expected loss when each interval has length $\Delta$.
    The feasible set of interval lengths is
    \begin{equation}
        S(\pi^R)\coloneq \{\Delta>0\mid\Delta\le2\pi^R,\,\Delta=(\Xh-\Xl)/N\text{ for some }N\in\mathbb{N}\}.
    \end{equation}
    With this notation, the optimal partition problem \eqref{eq:uniform-uniform-partition-problem} can be rewritten as 
    \begin{equation}\label{eq:app_uniform-uniform-partition-problem}
        \min_{\Delta\in S(\pi^R)} \ell(\Delta\, ; \pi^R).
    \end{equation}

    It suffices to show that the solution to \eqref{eq:app_uniform-uniform-partition-problem} is weakly increasing in $\pi^R$.
    Results from Monotone Comparative Statics offer a simple way to verify this.
    Note $\ell$ has strictly decreasing differences in $(\Delta, \pi^R)$, since $\frac{\partial^2\ell}{\partial\pi^{R}\partial\Delta}=-1$.
    The feasible set $S(\pi^R)$ is increasing in $\pi^R$ in the strong set order.
    Therefore, every selection from the solution set of \eqref{eq:app_uniform-uniform-partition-problem} is weakly increasing in $\pi^R$ \citep{topkisMinimizingSubmodularFunction1978,milgromMonotoneComparativeStatics1994}.
    \footnote{Since $\ell(\cdot\,;\pi^R)$ is defined on a chain, it is trivially quasi-supermodular.}
\end{proof}

\begin{remark}
    In the proof, the monotonicity is established for an arbitrary selection of the solution correspondence, because the problem \eqref{eq:app_uniform-uniform-partition-problem} may have multiple solutions.
    In particular, when $\Delta^\mathrm{ideal}=1.5\pi^R$ is in the middle of some interval $[(\Xh-\Xl)/N, (\Xh-\Xl)/N']$, then the auditor is indifferent between $N$ and $N'$ partitions as long as they are both feasible.
    \qedhere
\end{remark}

\section{Other Applications}\label{appsec:other-applications}
I discuss several additional applications of my model.
For each setting, I sketch how the model applies and outline the potential insights it offers.
The settings I consider are bank stress testing, capital budgeting, environmental regulation, and patent examination.

\paragraph{Bank Stress Testing}
Banks in the U.S. are subject to periodic stress tests by the Federal Reserve.
The tests evaluate whether banks can withstand adverse economic scenarios.
An important policy question is how transparent the regulator should be about the models it uses in these tests \citep{leitnerModelSecrecyStress2023}.
\footnote{Another issue of transparency is whether the regulator should disclose the test's results. The key tradeoff in that context is distinct from the transparency vis-à-vis banks I discuss here. See \cite{goldsteinShouldBanksStress2014} and \cite{goldsteinStressTestDisclosure2022} for reviews on stress tests disclosure.}
The regulator faces the gatekeeping expert's dilemma: being too transparent may lead banks to game the system, while being too opaque may leave them unprepared for adverse scenarios as they do not understand the regulator's concerns.
Indeed, the regulator has repeatedly raised this concern \citep{tarulloSpeechGovernorTarullo2016,barrSpeechGovernorBarr2025}.

To apply my framework to this setting, consider a bank preparing for an upcoming stress test.
The bank must choose its level of risk-taking, denoted by $r$.
A higher $r$ may reflect actions such as holding less capital, lending to riskier borrowers, or extending the maturity of long-duration bond holdings.
The regulator, drawing on internal stress test models and macroeconomic forecasts, knows the bank's optimal risk level, denoted by $X$.
The bank, by contrast, has an incentive to take on more risk (i.e., a higher $r$) in order to boost returns, possibly due to the ``too-big-to-fail'' problem \citep{strahanTooBigFail2013}.

Before the stress test, the regulator may choose to communicate aspects of its stress tests---such as the test scenarios, model parameters, or the timing of the test.
By doing so, the regulator would like to discipline the bank's choice of $r$.
Ultimately, the regulator either passes or fails the bank based on its risk profile $r$ and the regulator's private information $X$.

\cite{leitnerModelSecrecyStress2023} analyze the tradeoff between transparency and secrecy in stress testing.
They show that regulators benefit from disclosing some information when they can optimally set the test threshold.
My framework complements their findings by examining a setting in which the regulator cannot commit to a pass/fail threshold in advance.
In this environment, vague communication can serve as a disciplinary tool.
The comparative statics in my model yield new insights into how the regulator's independence and expertise shape its degree of secrecy.
The model also offers a potential explanation for why most banks pass the stress tests: 
\footnote{For example, in the 2025 tests, all 22 participating banks passed.
See \url{https://www.federalreserve.gov/supervisionreg/stress-tests-capital-planning.htm} for the past test results.}
the regulator chooses an appropriate level of ``model secrecy'' to ensure that banks select appropriate risk levels without resorting to costly failures of the tests.

\paragraph{Bottom-Up Capital Budgeting}
Consider a capital budgeting process within a firm.
While budgetary practices vary across organizations, one common approach is bottom-up budgeting, in which operating units prepare capital investment proposals for approval by a central capital budgeting committee \citep{shimBudgetingBasics2005}.
The budgeting committee reviews these proposals and decides whether to approve or reject them.
It also sets key approval parameters, such as hurdle rates \citep{rossCapitalBudgetingPractices1986}.

To model this situation, consider a large manufacturing company with a bottom-up budgeting approach, where division managers routinely propose capital investment projects.
A manager is offered a potential technology investment by a vendor.
The new technology would replace aging equipment and improve efficiency.
The manager requests resources $r$ from the company's headquarters to fund the project. 
Because the project requires a large capital outlay, it must be approved by the capital budgeting committee \citep{rossCapitalBudgetingPractices1986}.

The committee comprises experts, including current facility managers with domain expertise in the relevant technology and operations.
When the manager submits a proposal, the committee can advise on what it deems to be the optimal investment level, denoted by $X$. 
\cite{shimBudgetingBasics2005} note that one function of a budgeting committee is to provide advice.
Suppose that if the committee rejects the initial proposal, submitting a revised one is too costly.
For example, the company has limited capital to allocate; if the project is rejected, the committee may instead fund competing proposals. 
Alternatively, the vendor may walk away and go to competitors.

The key premise is the agency problem: the headquarters cannot execute the project itself, and the manager derives private benefits from the allocated resources \citep{steinInternalCapitalMarkets1997}.
The literature has extensively examined how information asymmetry and agency problems influence capital budgeting \citep{baimanValuePrivatePreDecision1991,gervaisOverconfidenceCompensationContracts2011,almazanFirmInvestmentStakeholder2017}.
For example, in a principal-agent framework, \cite{baimanValuePrivatePreDecision1991} show that giving the manager more pre-decision information may exacerbate the agency problem, making the principal (e.g., headquarters) worse off.
My model complements this insight by showing that the capital budgeting committee can optimally utilize vague communication to mitigate the agency problem in a setting where the principal has only veto power.

\paragraph{Environmental Regulation}
Consider a firm that wants to open a production facility.
The firm must file a permit application that specifies the details of the production plan, including an allowable emissions cap $r$.
The environmental regulator (e.g., the EPA) reviews the application and decides to approve or reject it.
\footnote{See \cite{laffontPollutionPermitsCompliance1996} for a model of pollution permit allocation.}
If the regulator rejects it, the facility cannot be opened and the opportunity is lost.

The regulator is a gatekeeping expert.
From years of monitoring many facilities, running dispersion models, and applying ``best available'' technology benchmarks, it can pin down the welfare-maximizing emissions cap $X$.
The firm benefits from looser limits (higher $r$), because it can produce more products without costly abatement.
The regulator may communicate guidance when the firm submits its application. 
This guidance can be precise or intentionally vague (e.g., ``our modeling suggests your cap should be around 1-1.5\unit{\tonne} per day'').
After seeing the plant's proposal, the regulator either approves the application as filed (allocating the requested cap $r$) or rejects it. 
The regulatory review is a time-consuming process, and there is effectively only one chance to get approval.
Going through the process again would be too costly for the firm. 

In practice, environmental regulators often provide vague guidance.
For example, take the enforcement of the Clean Water Act of 1972.
A persistent challenge has been determining whether a given property falls under the Act's jurisdiction.
In the recent Supreme Court case \textit{Sackett v. EPA}, the vagueness of the EPA's jurisdictional test was challenged.
During oral arguments, Justice Sotomayor captured the crux of the problem, asking, ``But is there another test that could be more precise and less open-ended than the [...] test that you [the EPA] use?''
The EPA's vague communication may be a strategic choice: by avoiding a precise test, the agency preserves its gatekeeping discretion while still influencing property owners' behavior.

More broadly, \cite{blanchardPortfolioEconomicPolicies2023} note that opaque environmental policies are frequently favored over clear, price-based regulations such as a carbon tax.
My model provides a framework for understanding the strategic use of vague guidance in environmental regulation.

\paragraph{Patent Examination}
Consider a company that has developed a new technology.
The company files a non-provisional application with a claim set that targets a broad exclusivity radius $r$ (e.g., claim breadth and coverage of variants). 
A patent examiner, who is an expert in the relevant technology and patent law, reviews the application and decides to allow or reject it.

After searching prior art and applying novelty and non-obviousness standards, the examiner can identify the legally allowable scope $X$ of the claimed invention.
The examiner would like the claims to be as close as possible to $X$.
The company prefers broader protection (a higher $r$), as broader claims raise expected licensing revenue and deter entry. 
When the firm submits its initial claims, the examiner may respond with guidance.
In practice, patent examiners usually send back the application, laying out issues in the application and suggesting amendments \citep{lemleyPatentOfficeRubber2008}.

In this context, my model highlights the importance of patent examiners' expertise.
\cite{lemleyExaminerCharacteristicsPatent2012} document significant variation in examiners' expertise at the U.S. Patent and Trademark Office (USPTO).
They find that experienced examiners are more likely to grant patents.
My model suggests that this may be partly because an experienced examiner can effectively communicate with the applicant in the initial rejection, thereby ensuring an acceptable claim scope in the next round.

\end{document}